\newtheoremstyle{localthm}
	{5pt} 
	{5pt} 
	{\sl} 
	{} 
	{\bf} 
	{} 
	{.7em} 
	{} 
\newtheoremstyle{localrem}
	{5pt} 
	{5pt} 
	{\rm} 
	{} 
	{\bf} 
	{} 
	{.7em} 
	{} 
\theoremstyle{localthm}
\newtheorem{Theorem}{Theorem}[section]
\newtheorem{Lemma}[Theorem]{Lemma}
\newtheorem{Corollary}[Theorem]{Corollary}
\theoremstyle{localrem}
\DeclareMathOperator*{\argmax}{arg\,max\hspace*{2pt}}
\newcommand{\dom}{\mathrm{dom}}
\def\bs{\boldsymbol}
\def\R{\mathbb{R}}
\def\TT{\mathcal{T}}
\def\eps{\varepsilon}
\def\til{\widetilde}
\def\Ex{\mathop{\rm I\!E}\nolimits}
\def\Pr{\mathop{\rm I\!P}\nolimits}
\def\Var{\mathrm{Var}}
\begin{document}

\addtolength{\baselineskip}{+.48\baselineskip}

\title{Maximum-Likelihood Estimation of a Log-Concave Density based on Censored Data$^*$}
\author{Lutz D\"{u}mbgen (University of Bern)\\
	Kaspar Rufibach (Roche Biostatistics Oncology, Basel)\\
	Dominic Schuhmacher (University of G\"ottingen)}

\date{\today}

\maketitle

\begin{abstract}
We consider nonparametric maximum-likelihood estimation of a log-concave density in case of interval-censored, right-censored and binned data. We allow for the possibility of a subprobability density with an additional mass at $+\infty$, which is estimated simultaneously. The existence of the estimator is proved under mild conditions and various theoretical aspects are given, such as certain shape and consistency properties. An EM algorithm is proposed for the approximate computation of the estimator and its performance is illustrated in two examples.
\end{abstract}

\bigskip

\noindent
$^*$ 
Work supported by research group FOR916 of Swiss National Science Foundation and Deutsche Forschungsgemeinschaft.

\paragraph{Key words:}
active set algorithm, binning, cure parameter, expectation-maximization algorithm, interval-censoring, qualitative constraints, right-censoring.

\paragraph{AMS subject classifications:}
62G07, 62N01, 62N02, 65C60.

\section{Introduction}
\label{sec:intro}

We consider estimation of an unknown distribution $P$ on $(-\infty,\infty]$ based on data which are ``censored'' in a rather general sense. We assume that $q := P(\{\infty\})$ is a number in $[0,1)$ and that $P$ has a log-concave sub-probability density $f$ on $\R$. This means that $f = e^\phi$ for some concave function $\phi : \R \to [-\infty,\infty)$ with $\int e^{\phi(x)} \, dx = 1 - q$, and
\[
	P(B) = P_{\phi,q}(B) \ = \ \int_B e^{\phi(x)} \, dx
			+ 1_{[\infty \in B]} \, q
\]
for any Borel set $B \subset (-\infty,\infty]$.

In the simplest setting our data consists of independent observations $X_1,X_2,\ldots,X_n$ drawn from $P$. For $q = 0$ this case was investigated in detail in \cite{Duembgen_Rufibach_2009}. As explained in the latter paper, the shape constraint of log-concavity is rather natural in many situations and leads to enhanced estimators of the distribution function of $P$ as well as good estimators of the density $f$ without requiring the choice of any tuning parameter. See also the review of \cite{Walther_2009} about the benefits and possible applications of log-concavity.

In many applications the values $X_i$ are not exactly observed. One well-known example is right-censoring: Suppose that the $X_i$ are event times in a biomedical study with values in $(0,\infty]$, i.e.\ $P((-\infty,0]) = 0$ and $\phi(x) = -\infty$ for $x < 0$. Here $X_i = \infty$ means that the event does not happen at all, and $q$ is sometimes referred to as the ``cure parameter''. If the study ends at time $C_i$ from the viewpoint of the $i$-th unit but $X_i > C_i$, then we have a right-censored observation and know only that $X_i$ is contained in the interval $\til{X}_i = (C_i, \infty]$. In other settings one has purely interval-censored data: The $i$-th unit is inspected at one or several time points, and at each inspection one can only tell whether the event in question has already happened or not. This gives also an interval $\til{X}_i = (L_i, R_i] \subset (0,\infty]$ containing $X_i$. Related to interval-censoring is rounding or binning: For a given partition of $(-\infty,\infty]$ into nondegenerate left-open and right-closed intervals, we only know which interval observation $X_i$ belongs to. In view of econometric applications (e.g.\ log-returns, log-incomes) it is desirable to allow negative values of the $X_i$. Whenever we talk about ``censored data'' we mean right-censored, interval-censored, binned or rounded data. The censoring or inspection time points or the binning intervals are assumed to be either fixed or random and independent from the random variables $X_i$.

In case of censored data, the potential benefits of shape-constraints are even higher than in settings with complete data. To analyze interval-censored data, \cite{Duembgen_Freitag_Jongbloed_2006} constrained the density $f$ on $[0,\infty)$ to be non-increasing or unimodal. The former constraint leads typically to accelerated rates of convergence compared to the unrestricted nonparametric estimator, see for instance \cite{Duembgen_Freitag_Jongbloed_2004}. An obvious question is how we can cope with the constraint of $f$ being log-concave, which is stronger than $f$ being unimodal.

The remainder of this paper is organized as follows: In Section~\ref{sec:likelihood} we introduce the log-likelihood functions for our general setting and provide necessary and sufficient conditions for the existence of maximizers. In Section~\ref{sec:parameters} we show how the parameter space may be restricted and approximated. Particular algorithms for the computation of the MLEs are proposed in Section~\ref{sec:algorithms}. They utilize the EM paradigm of \cite{Dempster_etal_1977} and the fast algorithms for complete data by \cite{Duembgen_etal_2007}. Section~\ref{sec:identifiability.consistency} discusses (partial) identifiability of the special parameter $q$ and some consistency properties of our estimators. In Section~\ref{sec:examples} we illustrate our methods with real and simulated data. Proofs and technical details are deferred to Section~\ref{sec:proofs}.

\section{Log-likelihoods and maximum-likelihood estimators}
\label{sec:likelihood}

\paragraph{Log-likelihood functions.}
Our full parameter space $\Theta$ is the set of all pairs $(\phi,q)$ consisting of a concave and upper semicontinuous function $\phi : \R \to [-\infty,\infty)$ and a parameter $q \in [0,1)$ such that
\begin{equation}
\label{eq:constraint Theta}
	\int e^{\phi(x)} \, dx + q \ = \ 1 .
\end{equation}
If we fix the value $q$, the set of all concave and upper semicontinuous functions $\phi$ satisfying \eqref{eq:constraint Theta} is denoted by $\Phi(q)$.

If we could observe the random variables $X_1, X_2, \ldots, X_n$, an appropriate normalized log-likelihood function $\tilde{\ell} : \Theta \to [-\infty,\infty)$ would be given by
\begin{equation}
\label{eq: log-lik-complete-data}
	\tilde{\ell}(\phi, q) \ := \ \frac{1}{n}
		\sum_{i=1}^n \bigl( 1_{[X_i < \infty]} \phi(X_i)
			+ 1_{[X_i = \infty]} \log q \bigr) .
\end{equation}
In case of censored data we observe random subintervals $\til{X}_1$, $\til{X}_2$, \ldots, $\til{X}_n$ of $(-\infty,\infty]$. More precisely, we assume that either $\til{X}_i = (L_i, R_i] \ni X_i$ with $-\infty < L_i < R_i \le \infty$, or $\til{X}_i$ consists only of the one point $L_i = R_i = X_i \in \R$. Note that we exclude the possibility of $L_i = -\infty$, which is convenient and typically no serious restriction. For instance, in connection with event times $X_i > 0$, the left end points $L_i$ are always nonnegative.

After conditioning on all censoring and inspection time points or binning intervals, we end up with independent observations $\til{X}_i$, and the normalized log-likelihood function $\ell : \Theta \to [-\infty,\infty)$ for our setting is given by
\begin{equation}
\label{eq: log-lik-cens-data}
	\ell(\phi, q) \ := \ \frac{1}{n}
		\sum_{i=1}^n \bigl( 1_{[L_i = R_i]} \phi(X_i)
			+ 1_{[L_i < R_i]} \log P_{\phi,q}((L_i,R_i]) \bigr) .
\end{equation}

Sometimes we want to rule out the possibility of a positive mass $q$ at infinity, in which case we consider
\[
	\ell(\phi) \ := \ \ell(\phi,0)
\]
for $\phi \in \Phi(0)$.

\paragraph{Maximum-likelihood estimators.}
Our goal is to find a maximum-likelihood estimator (MLE) $(\hat{\phi},\hat{q})$ of $(\phi,q)$, i.e.\ a maximizer of $\ell(\cdot,\cdot)$ over $\Theta$. Under the restriction that $q = 0$ we aim to find a MLE $\hat{\phi}_0$ of $\phi$, i.e.\ a maximizer of $\ell(\cdot)$ over $\Phi(0)$.

Our first theorem characterizes the existence of these MLEs.

\begin{Theorem}[Existence of MLEs]
\label{thm:existence}
A maximizer $\hat{\phi}_0$ of $\ell(\cdot,0)$ over $\Phi(0)$ exists if, and only if, there exists no uncensored observation $\tilde{X}_{i_o} = \{X_{i_o}\}$ such that each interval $[L_i,R_i]$ contains $X_{i_o}$.

A maximizer $(\hat{\phi},\hat{q})$ of $\ell(\cdot)$ over $\Theta$ exists if, and only if, there exists no uncensored observation $\tilde{X}_{i_o} = \{X_{i_o}\}$ such that each interval $[L_i,R_i]$ contains $X_{i_o}$ or $\infty$.
\end{Theorem}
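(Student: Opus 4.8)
The plan is to prove each equivalence as two implications, handling the $\Phi(0)$- and $\Theta$-statements in parallel, since the only new feature of the latter is the atom $q$ at $+\infty$. I adopt the convention that an uncensored observation is the degenerate interval $[L_i,R_i]=\{X_i\}$; then the offending configuration --- each $[L_i,R_i]$ contains $X_{i_o}$ (or, in the $\Theta$-case, contains $X_{i_o}$ or $\infty$) --- forces in particular every uncensored observation to equal $X_{i_o}$.

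For the ``only if'' directions I would produce a sequence along which $\ell$ tends to $+\infty$, so that no maximizer can exist. Assuming the offending configuration, take the triangular log-concave densities $f_h(x)=h^{-1}(1-|x-X_{i_o}|/h)_+$, scaled to mass $1-q_o$ in the $\Theta$-case with a fixed atom $q_o\in(0,1)$ at $+\infty$, and let $h\downarrow0$. The peak value $\phi_h(X_{i_o})=\log((1-q_o)/h)$ tends to $+\infty$; any interval with $X_{i_o}\in[L_i,R_i]$ captures a fixed positive share of the peak mass, and in the $\Theta$-case any interval with $R_i=\infty$ has probability at least $q_o$, so every censored log-probability stays bounded below and $\ell(\phi_h,q_o)\to+\infty$.

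For the ``if'' directions I would attain the supremum by a compactness argument whose only delicate point is a uniform bound on the peak height $\sup_x\phi(x)$ along a maximizing sequence. Suppose $\sup_x\phi_m\to+\infty$; since $\int e^{\phi_m}\le1$ the mass of $e^{\phi_m}$ concentrates near a single point $x^*$, and log-concavity makes $\phi_m$ fall off away from $x^*$ at a rate tending to infinity. Under the hypothesis this concentration is penalised: if some uncensored $X_j\ne x^*$, then $\phi_m(X_j)\to-\infty$; and if all uncensored observations sit at $x^*=X_{i_o}$, the hypothesis supplies an index $k$ with $R_k<\infty$ and $X_{i_o}\notin[L_k,R_k]$, for which $\log P_{\phi_m,q_m}((L_k,R_k])\to-\infty$ faster than the finitely many uncensored terms can grow. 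Either way $\ell(\phi_m,q_m)\to-\infty$, contradicting maximality; the remaining case has no uncensored observations, where $\ell\le0$ and the value $0$ is plainly attained by a bounded density supported in the common part of the intervals. Thus a maximizing sequence of uniformly bounded peak height may be assumed in all cases.

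With $\sup_x\phi_m\le M$ and $\int e^{\phi_m}\le1$ the densities neither blow up nor degenerate, and the same pinning mechanism shows their mass cannot escape to $\pm\infty$; after passing to a subsequence everything lives on a fixed bounded interval. A Helly-type selection theorem for concave functions then gives $\phi_m\to\phi_\infty$ pointwise with $\phi_\infty$ concave and upper semicontinuous, and $q_m\to q_\infty$, where tightness excludes $q_\infty=1$ and, via Fatou's lemma, yields $\int e^{\phi_\infty}+q_\infty=1$, so the limit lies in $\Theta$ (respectively $\phi_\infty\in\Phi(0)$). A final upper-semicontinuity check --- the uncensored values $\phi(X_i)$ pass to the limit and the interval probabilities are controlled by Fatou --- gives $\ell(\phi_\infty,q_\infty)\ge\limsup_m\ell(\phi_m,q_m)=\sup\ell$, so the limit is a maximizer. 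I expect the peak-height bound to be the main obstacle: one must quantify the decay of a concentrating log-concave density away from its mode and show that a single separated observation provably outweighs the gain from concentration; once the height is controlled, the selection, tightness and semicontinuity steps are routine.
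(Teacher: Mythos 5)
Your overall architecture is the same as the paper's: spike (tent/triangular) sequences to drive $\ell$ to $+\infty$ for the ``only if'' parts (this is exactly how Lemmas~\ref{lem:exotic2} and \ref{lem:exotic3} are proved), and, for the ``if'' parts, a peak-height bound obtained from the quantitative decay of a concentrating log-concave density (the paper's Lemma~\ref{lem:inequalities}), followed by Helly-type selection (Lemma~\ref{lem:compactness}) and passage to the limit under an exponential envelope. Your treatment of the cases in which an uncensored observation exists is sound. The genuine gap is the remaining case --- \emph{no uncensored observations at all} --- which you dismiss with ``$\ell\le 0$ and the value $0$ is plainly attained by a bounded density supported in the common part of the intervals.'' With no uncensored observations the hypothesis of the theorem is vacuous, so the intervals may be in arbitrary position; in particular $\bigcap_i[L_i,R_i]$ can be a single point or empty, and then the supremum is strictly negative and is not attained by any density on the ``common part.'' Concretely, for $\til{X}_1=(0,1]$, $\til{X}_2=(2,3]$ one has $\ell\le-\log 2$ with equality unattainable, since a log-concave density charging both intervals must charge $(1,2]$ as well; existence of a maximizer here is precisely what the compactness argument has to deliver, and your contradiction argument is not available to launch it, because nothing penalizes concentration when all terms are already $\le 0$ and, worse, concentration may not be penalized at all: for $\til{X}_1=(0,1]$, $\til{X}_2=(2,\infty]$ in the $\Theta$-setting, spike sequences with $q_k=1/2$ \emph{are} maximizing sequences with unbounded peaks (intervals with $R_i=\infty$ are served by the atom $q$), and similarly when $\bigcap_i[L_i,R_i]$ is a single point. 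So one cannot ``assume'' bounded peaks by contradiction; one must \emph{construct} a bounded-peak maximizing sequence. The paper does exactly this: it first disposes of nonempty $\bigcap_i[L_i,R_i]$ via the explicit maximizers of Lemmas~\ref{lem:exotic1} and \ref{lem:exotic2}, and in the $\Theta$-case it replaces a concentrating sequence $(\phi_k)$ by surrogate tent functions $\tilde\phi_k$ reproducing the relevant interval probabilities, and separately rules out $q_k\to1$. Your proof needs these steps (or equivalents) to be complete.

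A secondary, repairable point: Fatou's lemma goes the wrong way for what you need. To get $\int e^{\phi_\infty}\,dx+q_\infty=1$ and $\limsup_k \log P_{\phi_k,q_k}((L_i,R_i])\le\log P_{\phi_\infty,q_\infty}((L_i,R_i])$ you need $\limsup_k\int e^{\phi_k}\le\int e^{\phi_\infty}$ over the relevant intervals, i.e.\ dominated convergence with the integrable exponential envelope that the decay lemma supplies, not Fatou, which by itself would allow the limit to lose mass and likelihood.
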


Note that the MLEs may only fail to exist in situations where the exact observations $\{L_i \colon L_i=R_i, 1\leq i \leq n\}$ form a one-point set. Therefore both MLEs $\hat{\phi}_0$ and $(\hat{\phi},\hat{q})$ exist in the case of purely interval-censored, rounded or binned data. In the classical right-censored case, assuming i.i.d.\ censoring times $C_1, \ldots, C_n$ and writing $\varrho := \Pr(X_i \leq C_i)$, the probability for existence of both MLEs is at least $1-n\varrho(1-\varrho)^{n-1}$, which goes to $1$ geometrically fast.

In the first part of Section~\ref{sec:parameters} we describe some simple special cases in which the MLE $(\hat{\phi}, \hat{q})$ either does not exist or is rather trivial.

\section{Restricting and approximating the parameter spaces}
\label{sec:parameters}

\paragraph{Special cases.}
In some situations a MLE $(\hat{\phi},\hat{q})$ may not exist or may be rather trivial. The next two lemmas describe such scenarios.

\begin{Lemma}
\label{lem:exotic1}
Suppose that
\[
	\bigcap_{i=1}^n [L_i,R_i] \ = \ [\mu',\mu'']
\]
for certain numbers $-\infty < \mu' < \mu'' \le \infty$. Then $\ell(\phi,q) \le 0$ with equality if, and only if, $P_{\phi,q}((\mu',\mu'']) = 1$.
\end{Lemma}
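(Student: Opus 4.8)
The plan is to first observe that the hypothesis $\mu' < \mu''$ forces every observation to be genuinely interval-censored. Indeed, if some observation were exact, say $\til{X}_i = \{X_i\}$ with $L_i = R_i$, then $[L_i,R_i] = \{X_i\}$ is a singleton, and since it occurs in the intersection $\bigcap_{i=1}^n [L_i,R_i] = [\mu',\mu'']$ we would need $[\mu',\mu''] \subseteq \{X_i\}$, contradicting $\mu' < \mu''$. Hence $L_i < R_i$ for every $i$, and by \eqref{eq: log-lik-cens-data} the log-likelihood reduces to
\[
	\ell(\phi,q) \ = \ \frac{1}{n} \sum_{i=1}^n \log P_{\phi,q}((L_i,R_i]) .
\]

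Since $P_{\phi,q}$ is a probability measure on $(-\infty,\infty]$, each factor $P_{\phi,q}((L_i,R_i])$ lies in $[0,1]$, so every summand is $\le 0$ and therefore $\ell(\phi,q) \le 0$. Equality holds if, and only if, every summand vanishes, i.e.\ $P_{\phi,q}((L_i,R_i]) = 1$ for all $i$. The remaining task is to show that this simultaneous condition is equivalent to the single condition $P_{\phi,q}((\mu',\mu'']) = 1$.

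For this I would record the set identity $\bigcap_{i=1}^n (L_i,R_i] = (\mu',\mu'']$, which follows because $\mu' = \max_i L_i$ and $\mu'' = \min_i R_i$ (these are precisely the endpoints read off from $\bigcap_i [L_i,R_i] = [\mu',\mu'']$), so that each observation window contains $(\mu',\mu'']$. The implication ``$P_{\phi,q}((\mu',\mu'']) = 1 \Rightarrow \ell = 0$'' is then immediate from monotonicity, since $(\mu',\mu''] \subseteq (L_i,R_i]$ gives $P_{\phi,q}((L_i,R_i]) \ge P_{\phi,q}((\mu',\mu'']) = 1$. For the converse, I would use a union bound: if every $P_{\phi,q}((L_i,R_i]) = 1$, then each complement is $P_{\phi,q}$-null, and
\[
	P_{\phi,q}((\mu',\mu'']) \ = \ P_{\phi,q}\Bigl( \bigcap_{i=1}^n (L_i,R_i] \Bigr)
		\ \ge \ 1 - \sum_{i=1}^n P_{\phi,q}\bigl( (-\infty,\infty] \setminus (L_i,R_i] \bigr) \ = \ 1 .
\]

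The argument is essentially elementary, so I do not expect a serious obstacle; the only points requiring care are the bookkeeping that distinguishes the closed intervals $[L_i,R_i]$ appearing in the hypothesis from the half-open observation windows $(L_i,R_i]$ entering $P_{\phi,q}$, and the correct treatment of the point $+\infty$ when $\mu'' = \infty$ (in which case $(\mu',\mu''] = (\mu',\infty]$ carries the atom $q$, and the identity $\bigcap_i (L_i,R_i] = (\mu',\infty]$ forces every $R_i = \infty$). Both are handled cleanly by the $\max$/$\min$ description of the endpoints and by working throughout with $P_{\phi,q}$ as a measure on $(-\infty,\infty]$.
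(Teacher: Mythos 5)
Your proposal is correct and follows essentially the same route as the paper: since $[\mu',\mu'']\subseteq[L_i,R_i]$ forces $L_i\le\mu'<\mu''\le R_i$ for every $i$, the log-likelihood reduces to $\frac{1}{n}\sum_i\log P_{\phi,q}((L_i,R_i])\le 0$ with equality iff every $P_{\phi,q}((L_i,R_i])=1$, which is then identified with the single condition $P_{\phi,q}((\mu',\mu''])=1$. The only difference is that the paper dismisses this last equivalence as ``easily shown,'' whereas you supply the details (monotonicity in one direction, a finite union bound via $\mu'=\max_i L_i$, $\mu''=\min_i R_i$ in the other), which is a correct and complete way to fill that gap.
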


\begin{Lemma}
\label{lem:exotic2}
Suppose that
\[
	\bigcap_{i=1}^n [L_i,R_i] \ = \ \{\mu\}
\]
for some point $\mu \in \R$. If $L_i = R_i = \mu$ for at least one index $i$, then
\[
	\sup_{\phi \in \Phi(0)} \ell(\phi) \ = \ \infty .
\]
Otherwise, let $n_\ell := \# \{i : L_i < \mu = R_i\}$, $n_r := \# \{i : L_i = \mu < R_i\}$, and define $a := \max\{L_i : L_i < \mu\}$, $b := \min\{R_i : R_i > \mu\}$. Then
\[
	\ell(\phi,q)
		\ \le \ \frac{n_\ell}{n} \log \Bigl( \frac{n_\ell}{n_\ell + n_r} \Bigr)
			+ \frac{n_r}{n} \log \Bigl( \frac{n_r}{n_\ell + n_r} \Bigr)
\]
with equality if, and only if,
\begin{equation}
\label{eq:just two intervals}
	P_{\phi,q}((a,\mu]) \ = \ \frac{n_\ell}{n_\ell + n_r} , \ \
	P_{\phi,q}((\mu,b]) \ = \ \frac{n_r}{n_\ell + n_r} .
\end{equation}
\end{Lemma}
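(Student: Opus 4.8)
The plan is to treat the two assertions separately, after first recording the combinatorial structure of the intervals. Since $\bigcap_{i=1}^n [L_i,R_i] = \{\mu\}$, every interval satisfies $L_i \le \mu \le R_i$, and the intersection being a single point forces $\max_i L_i = \min_i R_i = \mu$. In the second (non-degenerate) scenario there is no exact observation at all, because an exact observation $L_i = R_i$ lying in $\{\mu\}$ would have to equal $\mu$, which is exactly the excluded case. Hence every $i$ has $L_i < R_i$ with $L_i \le \mu \le R_i$, and the indices split into three groups: the $n_\ell$ indices of type L with $L_i < \mu = R_i$, the $n_r$ indices of type R with $L_i = \mu < R_i$, and the remaining ``inner'' indices with $L_i < \mu < R_i$. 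The definitions of $a$ and $b$ then guarantee $L_i \le a$ whenever $L_i < \mu$ and $R_i \ge b$ whenever $R_i > \mu$, while $n_\ell, n_r \ge 1$ because $\max_i L_i$ and $\min_i R_i$ are attained by non-exact intervals.

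For the first assertion I would exhibit an explicit concentrating sequence in $\Phi(0)$. Take $\phi_\eps$ to be the log-density of the uniform distribution on $[\mu-\eps,\mu+\eps]$, i.e.\ $\phi_\eps \equiv -\log(2\eps)$ there and $-\infty$ elsewhere; this is concave and upper semicontinuous, hence lies in $\Phi(0)$. Since all exact observations must sit at $\mu$, each contributes $\phi_\eps(\mu) = -\log(2\eps) \to \infty$. Every interval observation contains $\mu$, so for small $\eps$ its $P_{\phi_\eps,0}$-mass equals $1$ (inner intervals) or $1/2$ (type L or type R), and these terms stay bounded. Letting $\eps \to 0$ then forces $\ell(\phi_\eps) \to \infty$.

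For the inequality I would bound each log-likelihood term from above. Writing $s_- := P_{\phi,q}((-\infty,\mu])$ and $s_+ := P_{\phi,q}((\mu,\infty]) = 1 - s_-$, monotonicity of the measure $P_{\phi,q}$ yields $P_{\phi,q}((L_i,\mu]) \le s_-$ for type L, $P_{\phi,q}((\mu,R_i]) \le s_+$ for type R, and $P_{\phi,q}((L_i,R_i]) \le 1$ for inner intervals. Summing the logarithms gives
\[
	n\,\ell(\phi,q) \ \le \ n_\ell \log s_- + n_r \log s_+ ,
\]
and maximizing the right-hand side over $s_- \in [0,1]$ (a concave univariate problem with optimum at $s_- = n_\ell/(n_\ell+n_r)$) produces, after dividing by $n$, exactly the claimed bound.

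The equality analysis is the delicate part and where I expect the real work. Equality forces every inequality above to be tight simultaneously: the inner-interval and type-L/type-R bounds give $P_{\phi,q}((-\infty,a]) = 0$ and $P_{\phi,q}((b,\infty]) = 0$ (using the intervals attaining $a$ and $b$), so that all mass lies in $(a,b]$, while tightness of the univariate optimization gives $s_- = n_\ell/(n_\ell+n_r)$. Combining these yields $P_{\phi,q}((a,\mu]) = P_{\phi,q}((-\infty,\mu]) = s_-$ and $P_{\phi,q}((\mu,b]) = s_+$, which is precisely \eqref{eq:just two intervals}; conversely, assuming \eqref{eq:just two intervals} forces all mass into $(a,b]$ and makes every term attain its bound. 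The main obstacle is the careful bookkeeping of the edge cases --- that the index realizing $a$ (resp.\ $b$) may be of type L (resp.\ type R) or inner, and that $b = \infty$ is permitted --- in order to confirm that the tightness conditions collapse exactly to the two identities in \eqref{eq:just two intervals} and impose nothing further.
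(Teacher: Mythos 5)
Your proposal is correct and follows essentially the same route as the paper's proof: a concentrating sequence near $\mu$ for the first assertion (your uniform density on $[\mu\pm\eps]$ plays the same role as the paper's Laplace-type density $-\log(2\eps)-|x-\mu|/\eps$), and for the second assertion the same term-by-term bounds $P((L_i,\mu])\le P((-\infty,\mu])$, $P((\mu,R_i])\le P((\mu,\infty])$, $P((L_i,R_i])\le 1$ followed by the univariate optimization over $s_-=P((-\infty,\mu])$ and the same tightness analysis yielding $P((-\infty,a])=P((b,\infty])=0$.
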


In view of Lemmas~\ref{lem:exotic1} and \ref{lem:exotic2}, when searching for a MLE one should first check the numbers $\mu' := \max\{L_1,\ldots,L_n\}$ and $\mu'' := \min\{R_1,\ldots,R_n\}$. If $\mu' < \mu''$, then any pair $(\phi,q) \in \Theta$ such that $P_{\phi,q}((\mu',\mu'']) = 1$ is a MLE. For instance, one could just take $q = 0$ and the linear log-density
\[
	\phi(x) \ := \ \begin{cases}
		\alpha + \beta x & \text{for} \ x \in [\mu',\mu''] \\
		- \infty         & \text{for} \ x \in \R \setminus [\mu',\mu'']
	\end{cases}
\]
with arbitrary $\beta \in \R$ ($\beta < 0$ in case of $\mu'' = \infty$) and a suitable $\alpha = \alpha(\beta,\mu'\mu'') \in \R$.

In case of $\mu' = \mu'' =: \mu$, one has to check whether $L_i = R_i = \mu$ for at least one index $i$. If yes, there exists no MLE. If no, one has to determine the numbers $n_\ell, n_r$ and boundaries $a < \mu < b$ as described in Lemma~\ref{lem:exotic2}. Then any $(\phi,q) \in \Theta$ satisfying \eqref{eq:just two intervals} is a MLE. Here one can also show that $q = 0$ and 
\[
	\phi(x) \ := \ \begin{cases}
		\alpha + \beta x & \text{for} \ x \in [a,b] \\
		- \infty         & \text{for} \ x \in \R \setminus [a,b]
	\end{cases}
\]
with suitable $\alpha,\beta \in \R$ fulfill this constraint.

In case of at least one uncensored observation we have to rule out an additional pathological case:

\begin{Lemma}
\label{lem:exotic3}
Suppose that $L_{i_o} = R_{i_o} = \mu$ for some index $i_o$ and $\mu \in \R$. Further suppose that all observations satisfy $\mu \in [L_i,R_i]$ or $R_i = \infty$. Then for any $q \in (0,1)$,
\[
	\sup_{\phi \in \Phi(q)} \, \ell(\phi,q) \ = \ \infty .
\]
\end{Lemma}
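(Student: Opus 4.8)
The plan is to exhibit a sequence $\phi_k \in \Phi(q)$ for which $\ell(\phi_k,q) \to \infty$, by concentrating the available mass $1-q$ into an ever taller and narrower peak at $\mu$. First I would record that the hypotheses force every \emph{uncensored} observation to sit at $\mu$: if $L_i = R_i$ then $[L_i,R_i] = \{L_i\}$ is a single real point, so ``$\mu \in [L_i,R_i]$ or $R_i = \infty$'' reduces to $L_i = R_i = \mu$. Hence the only uncensored contributions to $\ell$ are terms $\phi_k(\mu)$, and there is at least one of them (namely $i_o$).

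For the construction I would take $e^{\phi_k}$ to be the symmetric triangular density supported on $[\mu - h_k, \mu + h_k]$ with total mass $1-q$, where $h_k \downarrow 0$. Its peak height is $(1-q)/h_k$, so $\phi_k(\mu) = \log\bigl((1-q)/h_k\bigr) \to \infty$, and a short check of the second derivative on each linear piece (together with the downward jump in slope at the apex) shows that $\phi_k$ is concave and upper semicontinuous, hence $\phi_k \in \Phi(q)$.

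It then remains to bound the censored contributions from below, uniformly in $k$. For a right-censored observation with $R_i = \infty$ we have $P_{\phi_k,q}((L_i,\infty]) \ge q > 0$ regardless of $\phi_k$, so its term is at least $\log q$; this is exactly where the assumption $q > 0$ enters, since it absorbs observations whose censoring interval extends away from $\mu$. For a censored observation with $R_i < \infty$ and $\mu \in [L_i,R_i]$, I would distinguish according to the position of $\mu$: if $L_i < \mu < R_i$, then for small $h_k$ the whole support of $e^{\phi_k}$ lies in $(L_i,R_i]$ and the mass is $1-q$; if $\mu = R_i$ or $\mu = L_i$, then for $h_k$ below the one strictly positive gap, exactly the left or right half of the triangle falls in $(L_i,R_i]$, giving mass $(1-q)/2$. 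Taking $h_k$ below the finite minimum of all these gaps, every such term is at least $\log\bigl((1-q)/2\bigr)$.

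Combining, for all small $h_k$ one obtains $\ell(\phi_k,q) \ge \frac{1}{n}\log\bigl((1-q)/h_k\bigr) + C$ with a constant $C$ independent of $k$, which tends to $+\infty$. The argument is in essence bookkeeping; the only points requiring care are the case distinction for finite censored intervals straddling $\mu$ and the observation that $q > 0$ is precisely what keeps the right-censored terms bounded below, so I expect no substantial obstacle.
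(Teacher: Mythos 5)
Your proof is correct and follows essentially the same route as the paper: the paper's proof takes $\phi_\eps(x) = \log(1-q) - \log(2\eps) - |x-\mu|/\eps$ (a Laplace spike of mass $1-q$ at $\mu$) in place of your triangular spike, lets the uncensored term $\phi_\eps(\mu)$ diverge, and bounds every censored term below by $\log\bigl(\min\{(1-q)/2,\,q\}\bigr)$, with $q>0$ absorbing the right-censored intervals exactly as in your argument. The only difference is the choice of spike family, which is immaterial.
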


\paragraph{Shape of the maximizers.}
We start this section with a rather simple and intuitive fact about the domains of $\hat{\phi}$ and $\hat{\phi}_0$, where the domain of a concave function $\phi$ is defined as
\[
	\dom(\phi) \ := \ \{x \in \R : \phi(x) > -\infty\} .
\]

\begin{Lemma}
\label{lem:basic}
Let $-\infty \le a < b \le \infty$ such that $\til{X}_1, \til{X}_2, \ldots, \til{X}_n \subset [a,b]$. If a MLE $\hat{\phi}_0$ exists, then $\dom(\hat{\phi}_0) \subset [a,b]$. If a MLE $(\hat{\phi},\hat{q})$ exists, then $\dom(\hat{\phi}) \subset [a,b]$.
\end{Lemma}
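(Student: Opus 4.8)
The plan is to argue by contradiction via a \emph{truncate-and-renormalize} construction: if $\dom(\hat\phi_0)$ (resp.\ $\dom(\hat\phi)$) extended beyond $[a,b]$, the estimator would place positive probability mass in a region that is covered by no observation interval $\til X_i$, and pushing that mass back into $[a,b]$ would strictly raise the log-likelihood, contradicting maximality. Throughout I use that every term of $\ell$ in \eqref{eq: log-lik-cens-data} depends on the density only through its values on $[a,b]$, since all $\til X_i \subset [a,b]$.

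For the $q=0$ statement, suppose $\dom(\hat\phi_0)\not\subset[a,b]$ and set $M:=\int_{[a,b]}e^{\hat\phi_0(x)}\,dx$. As $\hat\phi_0\in\Phi(0)$ has total mass $1$ and a positive amount sits outside $[a,b]$, we get $0<M<1$; the bound $M>0$ is forced because a maximizer has $\ell(\hat\phi_0,0)>-\infty$, which already requires positive mass on each observation interval, all lying in $[a,b]$. I would then define
\[
	\phi^*(x) \ := \ \begin{cases} \hat\phi_0(x) - \log M & \text{for } x \in [a,b], \\ -\infty & \text{for } x \notin [a,b]. \end{cases}
\]
This $\phi^*$ is concave (a concave function restricted to an interval and extended by $-\infty$) and upper semicontinuous (for every $\alpha$ the set $\{\phi^*\ge\alpha\}=[a,b]\cap\{\hat\phi_0\ge\alpha+\log M\}$ is an intersection of closed sets), and $\int e^{\phi^*}=M^{-1}\int_{[a,b]}e^{\hat\phi_0}=1$, so $\phi^*\in\Phi(0)$.

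Since $e^{\phi^*}=M^{-1}e^{\hat\phi_0}\ge e^{\hat\phi_0}$ pointwise on $[a,b]$, each uncensored term gains $-\log M>0$, and each censored term gains $-\log M$ as well because $P_{\phi^*,0}((L_i,R_i])=M^{-1}P_{\hat\phi_0,0}((L_i,R_i])$. Hence $\ell(\phi^*,0)=\ell(\hat\phi_0,0)-\log M>\ell(\hat\phi_0,0)$, the desired contradiction, which settles the first statement.

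For the pair $(\hat\phi,\hat q)$ I keep $\hat q$ fixed and rescale only the density: with $M:=\int_{[a,b]}e^{\hat\phi}$ and $c:=\log\bigl((1-\hat q)/M\bigr)>0$, set $\phi^*:=\hat\phi+c$ on $[a,b]$ and $-\infty$ elsewhere, so $(\phi^*,\hat q)\in\Theta$ and again $e^{\phi^*}\ge e^{\hat\phi}$ on $[a,b]$, making every term of $\ell$ weakly increase. The step I expect to be the main obstacle is verifying \emph{strict} increase: for a right-censored observation with $R_i=\infty$ one has $P_{\phi^*,\hat q}((L_i,\infty])=e^{c}\int_{(L_i,\infty)}e^{\hat\phi}+\hat q$, where the atom $\hat q$ is \emph{not} rescaled, so this term is strictly larger only when $\int_{(L_i,\infty)}e^{\hat\phi}>0$. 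Uncensored and finite-$R_i$ terms, however, always gain the full $c>0$, so the total strictly increases as soon as one such observation exists. To close the remaining case in which every observation equals some $(L_i,\infty]$ (which forces $b=\infty$ and $a\le L_i$, so the waste lies below $a$), I would invoke optimality directly: a maximizer satisfies $\ell(\hat\phi,\hat q)>\log\hat q$ since a strictly larger value is attainable, which rules out $\int_{(L_i,\infty)}e^{\hat\phi}=0$ for all $i$ and hence yields a strict gain in at least one such term. In every case $\ell(\phi^*,\hat q)>\ell(\hat\phi,\hat q)$ whenever mass lies outside $[a,b]$, contradicting maximality and proving $\dom(\hat\phi)\subset[a,b]$.
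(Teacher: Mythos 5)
Your core mechanism --- truncate the density to $[a,b]$ and renormalize, so that every likelihood term picks up the same positive constant --- is exactly the paper's proof, and for the $q=0$ part the two arguments essentially coincide. For the pair $(\hat\phi,\hat q)$ you genuinely diverge: the paper rescales the \emph{entire} sub-probability measure, passing from $(\phi,q)$ to $\bigl(\phi-\log(1-\delta),\,q/(1-\delta)\bigr)$, where $\delta$ is the $P_{\phi,q}$-mass outside $[a,b]$. Since right-censored observations force $b=\infty$, the atom at $\infty$ then lies \emph{inside} $[a,b]$, every observation's probability gets multiplied by exactly $1/(1-\delta)$, and strict improvement is immediate with no case analysis. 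By keeping $\hat q$ fixed you destroy this uniformity for the terms with $R_i=\infty$ and must supply the extra optimality argument --- that a maximizer satisfies $\ell(\hat\phi,\hat q)>\log\hat q$ because, e.g., any $(\phi,q)\in\Theta$ with all density mass in $(\max_i L_i,\infty)$ achieves $\ell=0>\log\hat q$ --- to exclude $\int_{(L_i,\infty)}e^{\hat\phi(x)}\,dx=0$ for all $i$. That argument is correct, so your route works; it is simply longer. (A small point in its favor: when $b<\infty$ the atom at $\infty$ is itself outside $[a,b]$ and must be truncated to $0$ rather than rescaled, a case your version handles without any modification.)

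Two steps need tightening, and both occur exactly where the paper invokes its degenerate-case lemmas. First, your justification of $M>0$ --- that $\ell(\hat\phi_0,0)>-\infty$ ``requires positive mass on each observation interval'' --- is wrong for uncensored observations: a point observation only forces $\hat\phi_0(X_i)>-\infty$, not positive mass near $X_i$ inside $[a,b]$. If all observations are uncensored and equal to a single point $x_0$, then $M=0$ is compatible with $\ell>-\infty$ (take $\dom(\hat\phi_0)$ meeting $[a,b]$ only in $\{x_0\}$), and your construction collapses since $\log M=-\infty$. This case must instead be excluded by noting that no MLE exists there at all (Lemma~\ref{lem:exotic2} gives $\sup_\phi\ell(\phi)=\infty$); the paper does precisely this in its ``$\delta=1$'' step, via Lemmas~\ref{lem:exotic2} and~\ref{lem:exotic3}. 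With at least one censored observation, or two distinct uncensored points, your argument for $M>0$ is fine. Second, in the $(\hat\phi,\hat q)$ part you assert $c>0$, and implicitly $M>0$, without proof: $c>0$ requires positive \emph{density} mass outside $[a,b]$, which does follow from $\dom(\hat\phi)\not\subset[a,b]$, but only via convexity of $\dom(\hat\phi)$ (a one-line argument you should state); and you should also record that a maximizer has $\ell(\hat\phi,\hat q)>-\infty$, since otherwise a right-censored term equal to $-\infty$ both before and after the modification (possible when $\hat q=0$) would defeat the claim that one strictly improved term makes the whole sum strictly larger.
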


In what follows let
\[
	-\infty < \tau_1 < \tau_2 < \cdots < \tau_m < \tau_{m+1} = \infty
\]
such that
\[
	\{\tau_1,\tau_2,\ldots,\tau_{m+1}\}
	\ = \ \{L_1,L_2,\ldots,L_n\} \cup \{R_1,R_2,\ldots,R_n\} \cup \{\infty\} .
\]
In particular, $\tau_1 = \min\{L_1,L_2,\ldots,L_n\}$. We assume that $m \ge 2$, because otherwise Lemma~\ref{lem:exotic1}, \ref{lem:exotic2} or \ref{lem:exotic3} would apply. It follows directly from Lemma~\ref{lem:basic} that
\begin{align*}
	\dom(\hat{\phi}_0) \
	&\subset \ \begin{cases}
		[\tau_1,\infty) , \\
		[\tau_1,\tau_m] &\text{if} \ R_i < \infty \ \text{for} \ 1 \le i \le n ,
	\end{cases} \\
	\dom(\hat{\phi}) \
	&\subset \ [\tau_1,\infty) .
\end{align*}
One may even require that $\dom(\hat{\phi}) \subset [\tau_1,\tau_m]$, because for any $(\phi,q) \in \Theta$, the value of $\ell(\phi,q)$ remains the same if we replace $q$ with $q + \int_{\tau_m}^\infty e^{\phi(t)} \, dt$ and redefine $\phi(t) := -\infty$ for $t > \tau_m$.

Note that $\phi$ enters $\ell(\phi,q)$ only via the values $\phi(X_i)$ for those $i$ with $L_i = R_i$ and via the integrals $\int_{\tau_j}^{\tau_{j+1}} e^{\phi(t)} \, dt$, $1 \le j \le m$. Indeed we may restrict our attention to piecewise linear functions $\phi$ with at most $m-1$ changes of slope within their domain:

\begin{Theorem}[Shape of maximizers]
\label{thm:shape}
Let $(\phi,q) \in \Theta$ with $\ell(\phi,q) > -\infty$ and $\dom(\phi) \subset [\tau_1,\infty)$. Then there exists a $(\tilde{\phi},q) \in \Theta$ satisfying $\ell(\tilde{\phi},q) \ge \ell(\phi,q)$ and $\dom(\tilde{\phi}) \subset [\tau_1,\infty)$ and the following conditions:

\noindent
\textbf{(i)} \ For $j \in \{1,2,\ldots,m\}$ either $\dom(\tilde{\phi}) \cap (\tau_j,\tau_{j+1}) = \emptyset$ or $(\tau_j,\tau_{j+1}) \subset \dom(\tilde{\phi})$. In the latter case, $\tilde{\phi}$ is piecewise linear on $[\tau_j,\tau_{j+1}] \cap \R$ with at most one change of slope within $(\tau_j,\tau_{j+1})$. It is even linear on $[\tau_j,\tau_{j+1}] \cap \R$ if
\[
	\begin{cases}
		j \in \{1,m-1,m\} , \\
		j \ge 2 \ \ \text{and} \ \
			\dom(\tilde{\phi}) \subset [\tau_{j},\infty) , \\
		j \le m-2 \ \ \text{and} \ \
			\dom(\tilde{\phi}) \subset (-\infty,\tau_{j+1}] , \\
		(\tau_j,\tau_{j+1}) \subset \R \setminus \bigcup_{i=1}^n [L_i,R_i] .
	\end{cases}
\]

\noindent
\textbf{(ii)} \ Suppose that for indices $1 \le j < \ell \le m+1$ with $\ell - j \ge 2$,
\[
	\tau_j \in \dom(\tilde{\phi}) \subset (-\infty,\tau_\ell] ,
	\quad
	\int_{\tau_j}^{\tau_\ell} e^{\tilde{\phi}(x)} \, dx \ > \ 0
	\quad\text{and}\quad
	\tau_k \ \not\in \ \{R_1,\ldots,R_n\} \ \ \text{if} \ j < k < \ell .
\]
Then $\tilde{\phi}$ is linear on $[\tau_j,\tau_\ell] \cap \R$.

\noindent
\textbf{(iii)} \ Suppose that for indices $1 \le j < \ell \le m$ with $\ell - j \ge 2$,
\[
	\tau_\ell \in \dom(\tilde{\phi}) \subset \ [\tau_j,\infty) ,
	\quad
	\int_{\tau_j}^{\tau_\ell} e^{\tilde{\phi}(x)} \, dx \ > \ 0
	\quad\text{and}\quad
	\tau_k \ \not\in \ \{L_1,\ldots,L_n\} \ \ \text{if} \ j < k < \ell .
\]
Then $\tilde{\phi}$ is linear on $[\tau_j,\tau_\ell]$.

\noindent
\textbf{(iv)} \ Suppose that $[\tau_{j-1},\tau_{j+1}] \subset \dom(\tilde{\phi})$ for an index $2 \le j \le m-1$. Then $\tilde{\phi}$ has at most one change of slope within $(\tau_{j-1}, \tau_{j+1})$.
\end{Theorem}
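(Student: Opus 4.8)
The plan is to exploit the observation recorded just before the theorem: $\phi$ enters $\ell(\phi,q)$ only through the values $\phi(X_i)$ at the uncensored observations (all of which lie in $\{\tau_1,\dots,\tau_m\}$) and through the piece-integrals $I_j := \int_{\tau_j}^{\tau_{j+1}} e^{\phi}$, $1 \le j \le m$, and the latter enter only via the total $\sum_j I_j = 1-q$ (normalization) and the partial sums $\int_{L_i}^{R_i} e^{\phi} = \sum_{\tau_j \in [L_i,R_i)} I_j$ attached to the censored observations. Fixing $q$, I would therefore search within the convex cone of concave functions on $[\tau_1,\infty)$ for a $\tilde\phi$ that does at least as well, obtained from $\phi$ by local replacements each of which leaves the censored probabilities $P_{\phi,q}((L_i,R_i])$ and the total mass $1-q$ unchanged while not decreasing any $\phi(X_i)$; such moves keep $(\tilde\phi,q) \in \Theta$ and can only increase $\ell$.

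The workhorse is a one-interval realizability fact: on a bounded interval $[s,t]$, among all concave functions with prescribed endpoint values $\psi(s),\psi(t)$, the linear interpolant minimizes $\int_s^t e^{\psi}$ (concavity forces $\psi$ above its chord), and inserting a single interior knot and raising its height sweeps $\int_s^t e^{\psi}$ continuously and strictly monotonically from this minimum to $+\infty$. Hence any attainable integral over a piece is realized by a function that is linear, or piecewise linear with exactly one interior knot, on that piece, which gives the ``at most one change of slope per interval'' part of (i). The full-linearity clauses of (i) correspond to pieces whose integral is not separately constrained, e.g.\ a boundary interval ($j \in \{1,m-1,m\}$), a piece at the edge of $\dom(\tilde\phi)$, or a piece inside $\R \setminus \bigcup_i [L_i,R_i]$, so no bulge is needed and the chord (or outright removal of the piece, reallocating its mass to raise some $\phi(X_i)$) is at least as good. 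Statements (ii) and (iii) are the analogous pooling over a longer block $[\tau_j,\tau_\ell]$: the hypotheses (no interior $R_i$, respectively no interior $L_i$, together with the domain lying on the appropriate side) ensure that only the single aggregate integral over the whole block is pinned, so the block can be made linear; (iv) is the two-interval version, reducing the slope changes across the span $(\tau_{j-1},\tau_{j+1})$ to at most one by the same continuity and monotonicity argument.

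The main obstacle is global concavity: each replacement must splice concavely into the unchanged part of $\phi$, constraining the admissible one-sided slopes at the ends of the modified interval, and carrying out the reductions interval-by-interval can destroy slope-monotonicity at the junctions $\tau_j$. I would resolve this by phrasing the construction globally, taking $\tilde\phi$ to be an optimizer of $\ell(\cdot,q)$ over the concave functions on $[\tau_1,\infty)$ with the prescribed censored probabilities and total mass $1-q$ (existence via the compactness and coercivity already underlying Theorem~\ref{thm:existence}), and then reading off (i)--(iv) from the absence of a profitable admissible perturbation: wherever a hypothesis removes an active constraint, any strictly concave bulge could be flattened and its freed mass moved to raise some $\phi(X_i)$ or to restore normalization without violating feasibility, so optimality forces linearity there, while an active constraint requires at most one knot to meet. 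The bookkeeping of exactly which integrals are pinned by which observations, together with the verification that every flattening move respects concavity at its endpoints, is the technical heart of the argument.
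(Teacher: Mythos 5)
Your preliminary reductions are correct: $\ell$ sees $\phi$ only through its values at the $\tau_j$ and the piece-integrals, the chord minimizes $\int e^\psi$ among concave functions with given endpoint values, and raising one interior knot sweeps the integral up to $+\infty$; and you rightly identify concave splicing as the crux. But the device you propose to resolve the splicing problem --- taking $\tilde{\phi}$ to be a maximizer of $\ell(\cdot,q)$ over concave functions with \emph{prescribed} censored probabilities and total mass, and reading (i)--(iv) off the absence of a profitable perturbation --- does not work, for two reasons. First, optimality cannot force the shape properties, because the shape-enforcing replacements are $\ell$-neutral on your feasible set: replacing an arc by a one-knot function with the same endpoint values and the same integral never changes the objective (this is exactly why the theorem asserts only the weak inequality $\ell(\tilde{\phi},q) \ge \ell(\phi,q)$). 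Concretely, if there are no uncensored observations, $\ell(\cdot,q)$ is \emph{constant} on your feasible set, so every feasible function is an optimizer, including the original, possibly strictly concave, $\phi$; hence an optimizer need not satisfy any of (i)--(iv), and the constructive replacements (with their splicing bookkeeping) are needed after all. Second, your feasible set is too small: the modifications required for (ii), (iii) and the clause $(\tau_j,\tau_{j+1}) \subset \R \setminus \bigcup_{i=1}^n [L_i,R_i]$ of (i) necessarily \emph{change} censored probabilities. In the paper, the latter case is handled by flattening the piece to its chord and adding $\log\bigl((1-q)/(1-q-\delta)\bigr)$ elsewhere, which strictly increases every censored probability; and the block replacements for (ii)/(iii) redistribute mass inside $[\tau_j,\tau_\ell]$, weakly increasing the probabilities of censored intervals whose left (resp.\ right) endpoint lies strictly inside the block. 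These moves leave your constraint set, and the theorem's $\tilde{\phi}$ generally does not have the same censored probabilities as $\phi$, so your framework cannot even express the required perturbations.

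What is missing is precisely the content of the paper's Lemmas~\ref{lem:linearize.phi.1} and \ref{lem:linearize.phi.2}: surrogate constructions that keep the integral, anchor the value at one endpoint, \emph{raise} the value at the other, and come with slope brackets such as $\gamma \le \phi'(a\,+)$ and $\tilde{\phi}'(c\,-) \ge \phi'(c\,-)$. These brackets are what let the piecewise replacements splice into a globally concave function; your tents with both endpoint values fixed have uncontrolled slopes (for a given target integral the knot location is free, and extreme choices violate concavity at the junctions). Equally essential is part~(iii) of Lemma~\ref{lem:linearize.phi.2}, the mass-shifting monotonicity $\int_b^c e^{\tilde{\phi}(x)}\,dx \ge \int_b^c e^{\phi(x)}\,dx$: combined with the hypotheses that no $\tau_k$ strictly inside the block is a right endpoint (for (ii)) or a left endpoint (for (iii)), it shows the block linearization does not decrease any censored probability. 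Your claim that under those hypotheses ``only the single aggregate integral over the whole block is pinned'' is not accurate --- censored intervals may begin (resp.\ end) strictly inside the block, so their probabilities do change under the replacement; the point is that they change in the favorable direction. As it stands, the proposal is a plan whose central step fails, and repairing it leads back to the paper's constructive argument.
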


\paragraph{Approximating the parameter spaces.}
In view of Theorem~\ref{thm:shape} we consider arbitrary tuples $\bs{t} = (t_1,t_2,\ldots,t_N)$ with $N \ge 2$ components $-\infty < t_1 < t_2 < \ldots < t_N < t_{N+1} := \infty$ and define
\begin{align*}
	\Phi_{\bs{t}}(q)
	\ := \ \bigl\{ \phi \in \Phi(q) : \
	&\dom(\phi) = [t_1,\infty) \ \ \text{and} \\
	&\phi \ \text{is linear on} \ [t_1,t_2], [t_2,t_3], \ldots, [t_{N-1},t_N], [t_N,\infty)
		\bigr\} , \\
	\Phi_{\bs{t}}^o(q)
	\ := \ \bigl\{ \phi \in \Phi(q) : \
	&\dom(\phi) = [t_1,t_N] \ \ \text{and} \\
	&\phi \ \text{is linear on} \ [t_1,t_2], [t_2,t_3], \ldots, [t_{N-1},t_N]
		\bigr\} .
\end{align*}
Note that functions $\phi \in \Phi_{\bs{t}}(q)$ and $\phi^o \in \Phi_{\bs{t}}^o(q)$ are completely determined by the tuples
\[
	\bigl( \phi(t_1), \phi(t_2), \ldots, \phi(t_N), \phi'(t_N\,+) \bigr)
	\quad\text{and}\quad
	\bigl( \phi^o(t_1), \phi^o(t_2), \ldots, \phi^o(t_N) \bigr) .
\]
In addition we need the larger sets $\overline{\Phi}_{\bs{t}}(q)$ and $\overline{\Phi}_{\bs{t}}^o(q)$ of functions in $\Phi(q)$ which may be represented as pointwise limits of sequences in $\Phi_{\bf{t}}(q)$ and $\Phi_{\bs{t}}^o(q)$, respectively. One can easily verify that
\begin{align*}
	\overline{\Phi}_{\bf{t}}(q) \
	&= \ \bigcup_{1 \le a \le N} \Phi_{(t_a,\ldots,t_N)}(q) \cup
		\bigcup_{1 \le a < b \le N} \Phi_{(t_a,\ldots,t_b)}^o(q) , \\
	\overline{\Phi}_{\bf{t}}^o(q) \
	&= \ \bigcup_{1 \le a < b \le N} \Phi_{(t_a,\ldots,t_b)}^o(q) .
\end{align*}

In case of $m \le 3$, when maximizing $\ell(\cdot)$ over $\Phi(0)$, we may replace $\Phi(0)$ with its subset $\overline{\Phi}_{(\tau_1,\ldots,\tau_m)}(0)$. To maximize $\ell(\cdot,\cdot)$ over $\Theta$, it suffices to consider the set $\overline{\Phi}_{(\tau_1,\ldots,\tau_m)}^o(q)$ in place of $\Phi(q)$ for $0 \le q < 1$.

In case of $m \ge 4$, our target functions $\hat{\phi}_0$ or $\hat{\phi}$ may contain knots in $\R \setminus \{\tau_1,\tau_2,\ldots,\tau_m\}$. Precisely, if we exclude the special situations described by Lemmas~\ref{lem:exotic1}, \ref{lem:exotic2} and \ref{lem:exotic3}, then there exist a smallest index $j_1 \in \{2,\ldots,m\}$ such that $\tau_{j_1} \in \{L_1,L_2,\ldots,L_n\}$ and a largest index $j_2 \in \{1,2,\ldots,m-1\}$ such that $\tau_{j_2} \in \{R_1,R_2,\ldots,R_n\}$. By Theorem~\ref{thm:shape} (i), (ii) and (iii) we may focus on target functions that are linear on the part of their domain that lies before $\tau_{j_1}$ and on the part that lies after $\tau_{j_2}$. So in case of $j_1 \ge j_2$, it still suffices to consider $\overline{\Phi}_{(\tau_1,\tau_2,\ldots,\tau_m)}(0)$ instead of $\Phi(0)$ and, when maximizing of $\Theta$, to consider $\overline{\Phi}_{(\tau_1,\tau_2,\ldots,\tau_m)}^o(q)$ instead of $\Phi(q)$ for $q \in (0,1]$. If $j_1 < j_2$, however, we approximate $\Phi(q)$ with $\overline{\Phi}_{\bs{t}}(q)$ or $\overline{\Phi}_{\bs{t}}^o(q)$, where $\bs{t}$ contains $\tau_1,\tau_2,\ldots,\tau_m$ and a fine grid of extra points in $(\tau_j,\tau_{j+1})$ for each $j \in \{j_1, \ldots, j_2-1\}$ such that $(\tau_j,\tau_{j+1}) \not\subset \R \setminus \bigcup_{i=1}^n [L_i,R_i]$.

\section{Algorithms}
\label{sec:algorithms}

Throughout this section we exclude the special situations described in Lemmas~\ref{lem:exotic1}, \ref{lem:exotic2} and \ref{lem:exotic3}. In particular, we assume that $R_i < \infty$ for at least one observation, and $m \ge 2$.

\paragraph{Augmented log-likelihood functions.}
Using a trick of \citet{Silverman_1982}, we can remove the constraint \eqref{eq:constraint Theta}. Let $\Phi$ be the set of all concave and upper semicontinuous functions $\phi: \R \to [-\infty,\infty)$ such that $\phi(x) \to - \infty$ as $|x| \to \infty$. Define the augmented log-likelihood as
\begin{equation}
\label{eq:loglike_silverman}
	\Lambda(\phi,q) \ := \ \ell(\phi,q) - \int e^{\phi(x)} \, dx - q + 1
\end{equation} 
for $\phi \in \Phi$ and $q \ge 0$, and set $\Lambda(\phi) := \Lambda(\phi,0)$. In case of $\Lambda(\phi,q) > -\infty$,
\[
	\frac{\partial}{\partial c} \Lambda(\phi + c, q e^c)
	\ = \ 1 - e^c \Bigl( \int e^{\phi(x)} \, dx + q \Bigr)
\]
for any $c \in \R$. Hence for fixed $\phi \in \Phi$ and $q \ge 0$ such that $\Lambda(\phi,q) > -\infty$,
\[
	\argmax_{c \in \R} \, \Lambda(\phi + c, q e^c)
	\ = \ - \log \Bigl( \int e^{\phi(x)} \, dx + q \Bigr) .
\]
Moreover, for this particular value $c$, the parameter $(\phi+c,q e^c)$ belongs to $\Theta$, and $\ell(\phi+c, q e^c) = \Lambda(\phi + c, q e^c)$. These considerations imply the following result:

\begin{Lemma}
\label{lem:silverman}
\begin{equation*}
	\argmax_{(\phi,q) \in \Theta} \ell(\phi,q)
	\ = \ \argmax_{(\phi,q) \in \Phi \times [0,\infty)} \Lambda(\phi,q)
\end{equation*}
and 
\begin{equation*}
	\argmax_{\phi \in \Phi(0)} \ell(\phi)
	\ = \ \argmax_{\phi \in \Phi} \Lambda(\phi),
\end{equation*}
where $\argmax$ refers to the (possibly empty) \emph{set} of the corresponding maximizers.
\end{Lemma}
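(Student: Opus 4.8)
The plan is to exploit the one-parameter family of shifts $c \mapsto (\phi + c, q e^c)$ underlying Silverman's trick, together with two elementary observations. First I would record that $\Theta \subseteq \Phi \times [0,\infty)$: for $(\phi,q) \in \Theta$ one has $q \in [0,1) \subset [0,\infty)$ and $\int e^{\phi(x)}\,dx = 1 - q < \infty$, and a concave function with integrable exponential necessarily tends to $-\infty$ as $|x| \to \infty$, so $\phi \in \Phi$; moreover the constraint \eqref{eq:constraint Theta} gives $\Lambda(\phi,q) = \ell(\phi,q)$ on $\Theta$. Second I would verify the shift-equivariance of the log-likelihood: since $\phi(X_i)$ is replaced by $\phi(X_i) + c$ and $P_{\phi+c,qe^c}((L_i,R_i]) = e^c P_{\phi,q}((L_i,R_i])$, each of the $n$ summands in \eqref{eq: log-lik-cens-data} gains exactly $c$, whence $\ell(\phi + c, q e^c) = \ell(\phi,q) + c$. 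Consequently $\Lambda(\phi + c, q e^c) = \ell(\phi,q) + c - \bigl(\int e^{\phi(x)}\,dx + q\bigr)e^c + 1$ is a \emph{strictly} concave function of $c$ whose unique maximizer is the value $c^* = -\log\bigl(\int e^{\phi(x)}\,dx + q\bigr)$ already identified in the text, and for which $(\phi + c^*, q e^{c^*}) \in \Theta$.

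With these in hand I would prove the first identity by a double inclusion. For ``$\subseteq$'', let $(\hat\phi,\hat q)$ maximize $\ell$ over $\Theta$ and take any $(\phi,q) \in \Phi \times [0,\infty)$; the case $\Lambda(\phi,q) = -\infty$ is trivial, so assume $\Lambda(\phi,q) > -\infty$, which forces $\int e^{\phi(x)}\,dx > 0$ and hence $(\phi+c^*,qe^{c^*}) \in \Theta$. Then
\[
	\Lambda(\phi,q) \ \le \ \Lambda(\phi+c^*,qe^{c^*})
	\ = \ \ell(\phi+c^*,qe^{c^*}) \ \le \ \ell(\hat\phi,\hat q) \ = \ \Lambda(\hat\phi,\hat q),
\]
so $(\hat\phi,\hat q) \in \argmax_{\Phi\times[0,\infty)}\Lambda$. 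For ``$\supseteq$'', let $(\hat\phi,\hat q)$ maximize $\Lambda$ over $\Phi \times [0,\infty)$; the curve $c \mapsto (\hat\phi+c,\hat q e^c)$ stays inside $\Phi \times [0,\infty)$, so its profile is maximized at $c = 0$, and the strict concavity above forces $c^* = 0$, i.e.\ $\int e^{\hat\phi(x)}\,dx + \hat q = 1$. Excluding the boundary case $\int e^{\hat\phi(x)}\,dx = 0$ (equivalently $\hat q = 1$) — which under the standing assumption that some $R_i < \infty$ would force $\Lambda(\hat\phi,\hat q) = -\infty$, contradicting maximality — yields $\hat q < 1$ and hence $(\hat\phi,\hat q) \in \Theta$. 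Since $\Lambda = \ell$ on $\Theta$ and $\Theta \subseteq \Phi \times [0,\infty)$, this $(\hat\phi,\hat q)$ also maximizes $\ell$ over $\Theta$.

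The second identity follows by running the same argument with $q$ frozen at $0$: the curve degenerates to $c \mapsto \hat\phi + c$ inside $\Phi$, the profile $\Lambda(\hat\phi + c) = \ell(\hat\phi) + c - \bigl(\int e^{\hat\phi(x)}\,dx\bigr)e^c + 1$ is again strictly concave with unique maximizer $-\log\int e^{\hat\phi(x)}\,dx$, and a maximizer of $\Lambda$ over $\Phi$ must therefore satisfy $\int e^{\hat\phi(x)}\,dx = 1$, i.e.\ $\hat\phi \in \Phi(0) \subseteq \Phi$. I expect the only genuinely delicate step to be the ``$\supseteq$'' direction: one must argue that an \emph{unconstrained} maximizer of $\Lambda$ automatically satisfies the normalization constraint and thus lands back in $\Theta$. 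This is precisely where the strict concavity of the profile along the scaling curve is indispensable, and where the degenerate case $\int e^{\hat\phi(x)}\,dx = 0$ has to be ruled out using the running assumptions of the section.
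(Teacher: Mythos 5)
Your approach coincides with the paper's: the proof given there consists precisely of the considerations you formalize --- the scaling curve $c \mapsto (\phi+c,\,qe^c)$, the optimal shift $c^{*} = -\log\bigl(\int e^{\phi(x)}\,dx + q\bigr)$, the fact that the optimally shifted pair lies in $\Theta$, where $\Lambda$ and $\ell$ agree --- followed by the remark that ``these considerations imply the following result.'' Your double inclusion is a correct elaboration of that remark, and the ingredients you verify (shift equivariance of $\ell$, the inclusion $\Theta \subset \Phi \times [0,\infty)$, strict concavity of the profile in $c$) are all sound.

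There is, however, a genuine flaw in your treatment of the degenerate case $\int e^{\phi(x)}\,dx = 0$. In the ``$\subseteq$'' direction you assert without justification that $\Lambda(\phi,q) > -\infty$ forces $\int e^{\phi(x)}\,dx > 0$, and in the ``$\supseteq$'' direction you claim this follows merely from the standing assumption that some $R_i < \infty$. That implication is false as stated: a concave, upper semicontinuous $\phi$ may have $\dom(\phi) = \{\mu\}$, a single point, so that $\int e^{\phi(x)}\,dx = 0$ while $\phi(\mu) \in \R$. If the observation with finite right endpoint is itself \emph{uncensored} at $\mu$, and all remaining observations have $R_i = \infty$, then for $q > 0$ one gets $\Lambda(\phi,q) > -\infty$ (indeed $\sup \Lambda = +\infty$ by letting $\phi(\mu) \to \infty$), so ``some $R_i < \infty$'' alone cannot close this case. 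What actually rules it out are the \emph{other} standing assumptions of Section~\ref{sec:algorithms}: if $\int e^{\phi(x)}\,dx = 0$ and $\Lambda(\phi,q) > -\infty$, then $\dom(\phi)$ is at most a single point $\{\mu\}$, every uncensored observation equals $\{\mu\}$, and every observation with $L_i < R_i$ must have $R_i = \infty$ (with $q>0$), which is exactly a data configuration of Lemma~\ref{lem:exotic2} or Lemma~\ref{lem:exotic3}, excluded throughout that section. This exclusion is not cosmetic: if, for instance, all observations are right-censored (Lemma~\ref{lem:exotic1} with $\mu'' = \infty$), then the pair consisting of a one-point spike ($\dom(\phi) = \{\mu\}$, $\phi(\mu) = 0$) and $q = 1$ maximizes $\Lambda$ over $\Phi \times [0,\infty)$ but lies outside $\Theta$, so the asserted equality of argmax sets genuinely fails there. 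In other words, the lemma's validity hinges on exactly the point your justification glosses over; with the repair above (which also secures $\sup \Lambda > -\infty$, used tacitly in your ``contradicting maximality'' step), your proof is complete.
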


\paragraph{Optimizing the cure parameter.}
It seems that we cannot use Silverman's trick to maximize $\ell(\phi,q)$ for a given fixed value $q>0$. But the augmented likelihood $\Lambda(\phi,q)$ is useful for finding a better value of $q$: Let $\phi$ be a fixed function in $\Phi$ such that $\Lambda(\phi,q) > -\infty$ for some (and thus all) $q > 0$. Then
\[
	\frac{\partial}{\partial q} \, \Lambda(\phi,q)
	\ = \ \frac{1}{n} \sum_{i=1}^n
		\frac{1_{[R_i = \infty]}}{\int_{L_i}^{\infty} e^{\phi(x)} \, dx + q}
		- 1 .
\]
In the special case of all right endpoints $R_i$ being finite, $\Lambda(\phi,q) < \Lambda(\phi)$ for any $q > 0$. Otherwise, if $R_i = \infty$ for at least one observation, the right hand side is strictly decreasing in $q > 0$ and strictly negative for $q > \overline{q} := \#\{i : R_i = \infty\}/n$. Hence one can easily maximize $\Lambda(\phi,q)$ with respect to $q \ge 0$ as follows: If
\begin{equation}
\label{eq:optimal.q.0}
	\frac{1}{n} \sum_{i=1}^n \frac{1_{[R_i = \infty]}}{\int_{L_i}^\infty e^{\phi(x)} \, dx}
	\ \le \ 1 ,
\end{equation}
then the maximizer is given by $q = 0$. Otherwise it is the unique number $q \in (0,\overline{q}]$ such that
\begin{equation}
\label{eq:optimal.q}
	\frac{1}{n} \sum_{i=1}^n
		\frac{1_{[R_i = \infty]}}{\int_{L_i}^{\infty} e^{\phi(x)} \, dx + q}
	\ = \ 1 .
\end{equation}
This number may be determined, for instance, by binary search or a Newton procedure. Note also that $\overline{q} < 1$ by assumption.

\paragraph{The EM paradigm.}
Maximizing the augmented log-likelihood function $\Lambda(\phi,q)$ with respect to $\phi \in \Phi$ for a fixed value of $q \ge 0$ is a non-trivial task. A major problem is that $\ell(\cdot,q)$ is convex rather than linear or concave. Namely, let $\phi, \phi_{\rm new} \in \Phi$ with $\Lambda(\phi,q), \Lambda(\phi_{\rm new},q) > - \infty$ and $\dom(\phi_{\rm new}) = \dom(\phi)$. Further let $v(x) := \phi_{\rm new}(x) - \phi(x)$ for $x \in \dom(\phi)$ and $v(x) := 0$ otherwise. Then $[0,1] \mapsto \Lambda(\phi + tv, q)$ is continuous, and for $t \in (0,1]$,
\begin{align}
\label{eq:dir.deriv.1}
	\frac{d}{dt} \ell(\phi + tv, q) \
	&= \ \frac{1}{n} \sum_{i=1}^n \Bigl( 1_{[L_i = R_i]} v(X_i)
		+ 1_{[L_i < R_i]} \frac{\int_{L_i}^{R_i} v(x) e^{\phi(x) + t v(x)} \, dx}
			{\int_{L_i}^{R_i} e^{\phi(x) + t v(x)} \, dx + 1_{[R_i = \infty]} q} \Bigr) \\
	\nonumber
	&= \ \frac{1}{n} \sum_{i=1}^n
		\Ex_{\phi + tv,q} \bigl( v(Y) \,\big|\, Y \in \tilde{X}_i \bigr) , \\
\label{eq:dir.deriv.2}
	\frac{d^2}{dt^2} \ell(\phi + tv, q) \
	&= \ \frac{1}{n} \sum_{i=1}^n
		\Var_{\phi + tv,q} \bigl( v(Y) \,\big|\, Y \in \tilde{X}_i \bigr) \ \ge \ 0 ,
\end{align}
where $v(\infty) := 0$, the observations $\tilde{X}_i$ are viewed temporarily as fixed, and $Y$ denotes a random variable such that
\[
	\Pr_{\phi + tv,q}(Y \in B)
	\ = \ \frac{\int_{B\cap\R} e^{\phi(x) + tv(x)} \, dx + 1_{[\infty \in B]} q}
		{\int_{\R} e^{\phi(x) + tv(x)} \, dx + q}
\]
for Borel sets $B \subset (-\infty,\infty]$. We may also write
\[
	\frac{d}{dt} \Big|_{t=0}^{} \ell(\phi + tv, q)
	\ = \ \int v \, dM_{\phi,q}
\]
with the following sub-probability distribution $M_{\phi,q}$ on $\R$: For any Borel set $B \subset \R$,
\[
	M_{\phi,q}(B)
	\ := \ n_{}^{-1} \sum_{i=1}^n \biggl( 1_{[L_i = R_i]} 1_{[X_i \in B]}
		+ 1_{[L_i < R_i]} \frac{\int_{B \cap (L_i,R_i)} e^{\phi(x)} \, dx}
			     {\int_{(L_i,R_i)} e^{\phi(x)} \, dx + 1_{[R_i = \infty]} q} \biggr) .
\]
Now we propose to replace $\ell(\psi,q)$ in the definition of $\Lambda(\psi,q)$ with its linearization $\ell(\phi,q) + \int (\psi - \phi) \, dM_{\phi,q} = c(\phi,q) + \int \psi \, dM_{\phi,q}$ and to maximize
\[
	\Lambda_{\phi,q}(\psi)
	\ := \ \int \psi \, dM_{\phi,q} - \int e_{}^{\psi(x)} \, dx - q + 1
\]
over all $\psi \in \Phi$. Note also that
\[
	\int \psi \, dM_{\phi,q}
	\ = \ \Ex \bigl( \tilde{\ell}(\psi,0)
		\,\big|\, \tilde{X}_1, \tilde{X}_2, \ldots, \tilde{X}_n \bigr) ,
\]
i.e.\ the conditional expectation of the complete-data log-likelihood, given the available data. This is the more traditional motivation for the EM algorithm. Existence of a unique maximizer of $\Lambda_{\phi,q}(\cdot)$ over $\Phi$ is guaranteed by the following auxiliary result which is just a modification of Theorem~2.2 of \citet{Duembgen_etal_2011}:

\begin{Lemma}
\label{lem:DSS}
Let $M$ be a finite measure on the Borel subsets of $\R$ such that $S(M) := \bigl\{ x \in \R : 0 < M((-\infty,x]) < M(\R) \bigr\}$ is non-empty and $\int |x| \, M(dx) < \infty$. Then there exists a unique maximizer $\phi \in \Phi$ of
\[
	\int \phi \, dM - \int e^{\phi(x)} \, dx .
\]
This maximizer $\phi$ satisfies the equation $\int e^{\phi(x)} \, dx = M(\R)$, and the closure of $\dom(\phi)$ equals the closure of $S(M)$.
\end{Lemma}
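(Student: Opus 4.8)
The plan is to mirror the proof of Theorem~2.2 of \citet{Duembgen_etal_2011}, the only modifications being that the empirical measure there is replaced by the general finite measure $M$ and that the normalization $\int e^{\phi}\,dx = 1$ is replaced by $\int e^{\phi}\,dx = M(\R)$. Write $L(\phi) := \int \phi\,dM - \int e^{\phi(x)}\,dx$ and $w := M(\R)$; since $S(M) \neq \emptyset$ we have $w > 0$. I would first record two preliminary reductions. For uniqueness, note that $\phi \mapsto \int \phi\,dM$ is affine and $\phi \mapsto \int e^{\phi(x)}\,dx$ is convex, so $L$ is concave on the convex set $\Phi$; strict convexity of $t \mapsto e^{t}$ shows that for two maximizers $\phi_1 \neq \phi_2$ the midpoint $(\phi_1+\phi_2)/2 \in \Phi$ would strictly increase $L$ unless $\phi_1 = \phi_2$ Lebesgue-almost everywhere, and two concave upper semicontinuous functions that agree almost everywhere agree identically (they are continuous on the interior of their common domain, and upper semicontinuity pins down the boundary). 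For the normalization, the additive-constant computation $\frac{\partial}{\partial c} L(\phi + c) = w - e^{c}\int e^{\phi}\,dx$ shows that any maximizer satisfies $\int e^{\phi}\,dx = w$ (this is the claimed identity) and that it suffices to maximize the linear functional $\phi \mapsto \int \phi\,dM$ over the set $\{\phi \in \Phi : \int e^{\phi}\,dx = w\}$.

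Existence is the substantive part. That $\sup_{\Phi} L > -\infty$ follows by testing $\phi_0(x) := \log(w/2) - |x|$, for which $\int e^{\phi_0}\,dx = w$ and $L(\phi_0)$ is finite thanks to the moment condition $\int |x|\,M(dx) < \infty$. The crux is coercivity of the normalized problem: I would show that a maximizing sequence $(\phi_k)$ with $\int e^{\phi_k}\,dx = w$ stays within a compact family. Using $S(M) \neq \emptyset$, I fix points $a < b$ with $M((-\infty,a]) > 0$ and $M([b,\infty)) > 0$; concavity of $\phi_k$ together with these two positive tail masses bounds both the peak height $\sup_x \phi_k(x)$ and the slopes of $\phi_k$ in terms of $\int \phi_k\,dM$, which is bounded below along a maximizing sequence because $\int\phi_k\,dM = L(\phi_k)+w$ converges. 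A Helly-type selection theorem for concave functions then extracts a subsequence converging pointwise to a concave, upper semicontinuous limit $\phi_{*}$ with $\phi_{*}(x) \to -\infty$ as $|x| \to \infty$, so that $\phi_{*} \in \Phi$.

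To conclude I would verify that $L$ is upper semicontinuous along this subsequence. Fatou's lemma gives $\int e^{\phi_{*}}\,dx \leq \liminf_k \int e^{\phi_k}\,dx = w$, while the linear upper envelopes furnished by concavity, combined with $\int |x|\,M(dx) < \infty$, supply a uniformly $M$-integrable majorant for the $\phi_k$, whence $\limsup_k \int \phi_k\,dM \leq \int \phi_{*}\,dM$. Therefore $L(\phi_{*}) \geq \limsup_k L(\phi_k) = \sup_{\Phi} L$, so $\phi_{*}$ is the maximizer. The main obstacle is exactly this coercivity step: ruling out that near-optimal $\phi_k$ develop arbitrarily steep slopes or push their mass off to $\pm\infty$. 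This is precisely where the two hypotheses on $M$ are used, the spread condition $S(M) \neq \emptyset$ controlling the height and the first-moment condition controlling the linear term.

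It remains to identify the domain. The inclusion $\overline{\dom(\phi)} \supseteq \overline{S(M)}$ holds because $L(\phi) > -\infty$ forces $M(\R \setminus \dom(\phi)) = 0$, so $\dom(\phi)$ contains the support of $M$ up to null sets. Conversely $\overline{\dom(\phi)} \subseteq \overline{S(M)}$, since replacing $\phi$ by $-\infty$ outside the closed convex hull of $\mathrm{supp}(M)$ leaves $\int \phi\,dM$ unchanged but strictly decreases $\int e^{\phi}\,dx$; hence optimality already confines $\dom(\phi)$ to $\overline{S(M)}$.
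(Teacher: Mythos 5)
Your proposal is correct in outline, but it takes a genuinely different --- and much heavier --- route than the paper. You re-prove the lemma from scratch by mirroring the internal proof of Theorem~2.2 of \citet{Duembgen_etal_2011}: concavity plus strict convexity of $t \mapsto e^t$ for uniqueness, the constant-shift computation for the normalization $\int e^{\phi}\,dx = M(\R)$, a maximizing-sequence/coercivity/Helly-selection argument for existence, and a truncation argument for the domain. All of these steps are sound and can be completed, though the decisive coercivity and semicontinuity steps remain at the level of a sketch in your write-up. The paper instead observes that the general case \emph{reduces} to the already-proved probability case by rescaling: with $\gamma := M(\R)$, the measure $\tilde{M} := \gamma^{-1} M$ is a probability measure with $S(\tilde{M}) = S(M)$ and finite first moment, and for $\tilde{\phi} := \phi - \log \gamma$ one has
\[
	\int \tilde{\phi} \, d\tilde{M} - \int e^{\tilde{\phi}(x)} \, dx
	\ = \ \gamma^{-1} \Bigl( \int \phi \, dM - \int e^{\phi(x)} \, dx \Bigr) - \log \gamma ,
\]
so $\phi \mapsto \tilde{\phi}$ is a bijection of $\Phi$ onto itself carrying the $M$-problem onto the $\tilde{M}$-problem via an increasing affine transformation of the objective; hence existence, uniqueness, the identity $\int e^{\phi}\,dx = M(\R)$ and the domain statement all transfer verbatim from the cited theorem. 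What your approach buys is self-containedness, relying only on the proof technique rather than the statement of Theorem~2.2; what the paper's approach buys is that the entire lemma follows in a few lines, with none of the delicate compactness work --- which is precisely the work your sketch leaves to be filled in.
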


Suppose our current candidate for $(\hat{\phi},\hat{q})$ is $(\phi,q)$, where either $q = 0$ or $q > 0$ satisfies \eqref{eq:optimal.q}. Then the measure $M_{\phi,q}$ satisfies $M_{\phi,q}(\R) = 1 - q$. Now let $\phi_{\rm new}$ be the maximizer of $\Lambda_{\phi,q}(\cdot)$ over $\Phi$. It will automatically satisfy the equation
\[
	\int e_{}^{\phi_{\rm new}(x)} \, dx
	\ = \ 1 - q ,
\]
so $\phi_{\rm new} \in \Phi(q)$. Moreover,
\[
	\Lambda(\phi_{\rm new},q) \ > \ \Lambda(\phi,q)
	\quad\text{unless} \ \phi_{\rm new} \equiv \phi .
\]
For if $\phi_{\rm new} \ne \phi$, then the definition of $\phi_{\rm new}$ and convexity of $\ell(\cdot,q)$ imply that
\begin{align*}
	0 \
	&< \ \Lambda_{\phi,q}(\phi_{\rm new}) - \Lambda_{\phi,q}(\phi) \\
	&= \ \int (\phi_{\rm new} - \phi) \, dM_{\phi,q}
		- \int e_{}^{\phi_{\rm new}(x)} \, dx
		+ \int e_{}^{\phi(x)} \, dx \\
	&= \ \frac{d}{dt} \Big|_{t=0}^{} \ell(\phi + t(\phi_{\rm new} - \phi), q)
		- \int e_{}^{\phi_{\rm new}(x)} \, dx
		+ \int e_{}^{\phi(x)} \, dx \\
	&\le \ \ell(\phi_{\rm new},q) - \ell(\phi,q)
		- \int e_{}^{\phi_{\rm new}(x)} \, dx
		+ \int e_{}^{\phi(x)} \, dx \\
	&= \ \Lambda(\phi_{\rm new},q) - \Lambda(\phi,q) .
\end{align*}

Now we replace $\phi$ with $\phi_{\rm new}$. When maximizing $\ell(\cdot,\cdot)$ over $\Theta$, we also recalculate $q$ via \eqref{eq:optimal.q.0} and \eqref{eq:optimal.q}. This yields possibly a further increase of $\Lambda(\phi,q)$, and the new value $q$ equals $0$ or satisfies \eqref{eq:optimal.q}. In principle this procedure is iterated until the ``difference'' between $\phi$ and $\phi_{\rm new}$ becomes negligible.

\paragraph{Practical implementation of the EM step.}
Maximization of $\Lambda_{\phi,q}(\cdot)$ over $\Phi$ may be achieved via an active set algorithm as described in \citet{Duembgen_etal_2007} if we approximate $\Phi$ by finite-dimensional sets $\Phi_{\bs{t}}$ or $\Phi_{\bs{t}}^o$. The latter two are defined as the sets $\Phi_{\bs{t}}(q)$ and $\Phi_{\bs{t}}^o(q)$ in Section~\ref{sec:parameters} with the constraint $\phi \in \Phi(q)$ replaced with the requirement $\phi \in \Phi$. Initially the tuple $\bs{t} = (t_1,t_2,\ldots,t_N)$ is chosen as described at the end of Section~\ref{sec:parameters}. Later on it may be a subtuple of that.

Suppose that $\phi$ is a log-density in $\Phi_{\bs{t}}(q)$ or $\Phi_{\bs{t}}^o(q)$ where either $q = 0$ or $q \in (0,1)$ satisfies \eqref{eq:optimal.q}. Since $[t_1,t_N] \subset [\tau_1,\tau_m]$, $\dom(\phi)$ is a closed set and equal to the convex hull of the support of $M_{\phi,q}$. Hence the closure of the domain of $\argmax_{\psi \in \Phi} \Lambda_{\phi,q}(\psi)$ is equal to $\dom(\phi)$. Consequently, if we restrict our attention to candidates in $\overline{\Phi}_{\bs{t}}$, then it even suffices to consider functions in
\[
	\tilde{\Phi} \ := \ \begin{cases}
		\Phi_{\bs{t}}   &\text{if} \ \phi \in \Phi_{\bs{t}}(q) , \\
		\Phi_{\bs{t}}^o &\text{if} \ \phi \in \Phi_{\bs{t}}^o(q) .
	\end{cases}
\]
But for $\psi \in \tilde{\Phi}$ we may write
\[
	\int \psi \, dM_{\phi,q}
	\ = \ \sum_{j=1}^{N} w_j \psi(t_j)
		+ w_{N+1} \psi'(t_N\,+) ,
\]
where $\psi'(t_N\,+) := -\infty$ in case of $\tilde{\Phi} = \Phi_{\bs{t}}^o$, and
\begin{align*}
	w_1 \ &:= \ w_{1,c} + w_{1,r} , \\
	w_j \ &:= \ w_{j,\ell} + w_{j,c} + w_{j,r} \quad \text{for} \ 2 \le j < N , \\
	w_N \ &:= \ w_{N,\ell} + w_{N,c} + M_{\phi,q}((t_N,\infty)) , \\
	w_{j,c} \    &:= \ \#\{i : L_i = R_i = t_j\}/n , \\
	w_{j,\ell} \ &:= \ \int_{t_{j-1}}^{t_j}
		\frac{x - t_{j-1}}{t_j - t_{j-1}} \, M_{\phi,q}(dx)
		\quad \text{for} \ 2 \le j \le N , \\ 
	w_{j,r} \    &:= \ \int_{t_j}^{t_{j+1}}
		\frac{t_{j+1} - x}{t_{j+1} - t_j} \, M_{\phi,q}(dx)
		\quad \text{for} \ 1 \le j < N , \\
	w_{N+1} \    &:= \ \int_{t_N}^\infty (x - t_N) \, M_{\phi,q}(dx) .
\end{align*}
(Note that $w_{N+1} = 0$ in case of $\phi \in \Phi_{\bs{t}}^o(q)$.) Hence $\int \psi \, dM_{\phi,q}$ is a simple linear combination of $\bigl( \psi(t_1), \psi(t_2), \ldots, \psi(t_N), \psi'(t_N\,+) \bigr)$. The second part of $\Lambda_{\phi,q}(\psi)$ may be written as
\[
	\int e^{\psi(x)} \, dx
	\ = \ \sum_{j=1}^{N-1} (t_{j+1} - t_j) J \bigl( \psi(t_j),\psi(t_{j+1}) \bigr)
		+ \tilde{J} \bigl( \psi(t_N), \psi'(t_N\,+) \bigr) ,
\]
where for $a,b \in \R$ and $c \in [-\infty,0)$,
\begin{align*}
	J(a,b) \
	&:= \ \int_0^1 \exp((1 - t)a + tb) \, dt
		\ = \ \begin{cases}
			\exp(a) & \text{if} \ a = b , \\
			(\exp(b) - \exp(a))/(b - a) & \text{if} \ a \ne b ,
		\end{cases} \\
	\tilde{J}(a,c) \
	&:= \ \int_0^\infty \exp(a + ct) \, dt
		\ = \ \exp(a)/(-c) .
\end{align*}

\paragraph{Stopping the EM iterations and modifying the domains of $\hat{\phi}_0$ or $\hat{\phi}$.}
Let $\phi_1, \phi_2, \phi_3, \ldots$ be our candidates for $\hat{\phi}_0$ or $\hat{\phi}$. One should stop iterating the EM step (plus optimization with respect to $q$) when the changes in the (sub-)probability density $f_k = \exp(\phi_k)$ become negligible. A reasonable distance measure would be the $L^1$-distance $\int \bigl| f_{k+1}(x) - f_k(x) \bigr| \, dx$, but the following upper bound is much easier to compute and of the same order of magnitude:
\begin{align*}
	\int \bigl| f_{k+1}(x) - f_k(x) \bigr| \, dx
	\ \le \ &\sum_{j=1}^{N-1} (t_{j+1} - t_j)
			\bigl( J(\overline{m}_{j}, \overline{m}_{j+1})
				- J(\underline{m}_{j}, \underline{m}_{j+1}) \bigr) \\
		&+ \ \exp(\overline{m}_N) \, \overline{m}_{N+1}
			- \exp(\underline{m}_N) \, \underline{m}_{N+1} ,
\end{align*}
where $\underline{m}_{j}$ and $\overline{m}_{j}$ are the minimum and maximum, respectively, of $\bigl\{ \phi_k(t_j), \phi_{k+1}(t_j) \bigr\}$ if $j \le N$ and of $\bigl\{ |\phi_k'(t_N\,+)|^{-1},|\phi_{k+1}'(t_N\,+)|^{-1} \bigr\}$ if $j = N+1$.

It happens often that $\phi_k \to -\infty$ on a non-empty subset of $\dom(\phi_1)$, which may lead to numerical problems or a waste of computation time. One possible way out is as follows: For the computation of $\hat{\phi}_0$ one could replace $\ell(\phi)$ with
\[
	\frac{\eps_1 \phi(\tau_1) + n \ell(\phi)
		+ \eps_2 \log \bigl( \int_{\tau_m}^\infty e^{\phi(x)} \, dx \bigr)}{\eps_1 + n + \eps_2}
\]
with certain numbers $0 \le \eps_1, \eps_2 \ll 1$, where $\eps_1 > 0$ unless $L_i = R_i = \tau_1$ for some index $i$, and $\eps_2 > 0$ unless $(L_j,R_j] = (\tau_m,\infty]$ for some index $j$. For the computation of $(\hat{\phi},\hat{q})$ and working with $\Phi_{\bs{t}}^o$, we may replace $\ell(\phi,q)$ with
\[
	\frac{\eps_1 \phi(\tau_1) + n \ell(\phi,q) + \eps_2 \phi(\tau_m)}{\eps_1 + n + \eps_2} ,
\]
where $\eps_1$ is chosen as before, while $\eps_2 > 0$ unless $L_j = R_j = \tau_m$ for some index $j$. Hence we add artifical ``observations'' $\{\tau_1\}$ and $(\tau_m,\infty]$ or $\{\tau_m\}$ with very small weights to our original data set.

One can be more ambitious and try to estimate the domain of $\hat{\phi}_0$ or $\hat{\phi}$. That means, whenever there is strong evidence for $\dom(\phi_1)$ being too large, the candidate set $\tilde{\Phi}$ may be reduced as follows:

\noindent
\textsl{Possible reduction~1.} \
Suppose that we would like to compute $\hat{\phi}_0$ and that $\tilde{\Phi} = \Phi_{\bs{t}}$. With $\theta := \phi_k'(t_N\,+)$ we may write
\begin{align*}
	\frac{\partial}{\partial \theta} \Lambda(\phi_k) \
	&= \ \frac{\partial}{\partial \theta} \biggl(
		\frac{1}{n} \sum_{i=1}^n 1_{[R_i = \infty]}
			\log \Bigl( P_{\phi_k}((L_i,t_{N-1}]) + \frac{\exp(\phi_k(t_N))}{-\theta} \Bigr)
		- \frac{\exp(\phi_k(t_N))}{-\theta} \biggr) \\
	&= \ \frac{\exp(\phi_k(t_N))}{\theta^2} \Bigl(
		\frac{1}{n} \sum_{i=1}^n
			\frac{1_{[R_i = \infty]}}{P_{\phi_k}((L_i,t_{N-1}]) + \exp(\phi_k(t_N))/(-\theta)}
		- 1 \Bigr) .
\end{align*}
The latter is strictly negative for all values of $\theta \in (-\infty,0)$ if
\begin{equation}
\label{eq:remove.tN+1}
	\frac{1}{n} \sum_{i=1} \frac{1_{[R_i = \infty]}}{P_{\phi_k}((L_i,t_{N-1}])}
	\ \le \ 1 .
\end{equation}
In that case, and if $\int \bigl| f_k(x) - f_{k-1}(x) \bigr| \, dx$ is below some prespecified threshold, we recompute $\phi_k$, working from now on with $\tilde{\Phi} = \Phi_{\bs{t}}^o$.

\noindent
\textsl{Possible reduction~2.} \
Suppose that $\tilde{\Phi} = \Phi_{\bs{t}}^o$ and $\#\{i : L_i = R_i = t_N\} = 0$. With $\gamma := \phi_k(t_{N-1})$, $\theta := \phi_k(t_N)$ and $\delta := t_N - t_{N-1}$ we may write
\begin{align*}
	\frac{\partial}{\partial \theta} \Lambda(\phi_k,q_k) \
	&= \ \frac{\partial}{\partial \theta} \biggl(
		\frac{1}{n} \sum_{i=1}^n 1_{[L_i < t_N \le R_i]}
			\log \bigl( \pi_{ki} + \delta J(\gamma,\theta) \bigr)
		- \delta J(\gamma,\theta) \biggr) \\
	&= \ \delta J_{01}(\gamma,\theta) \Bigl(
		\frac{1}{n} \sum_{i=1}^n
			\frac{1_{[L_i < t_N \le R_i]}}
			     {\pi_{ki} + \delta J(\gamma,\theta)}
		- 1 \Bigr) ,
\end{align*}
where
\[
	\pi_{ki} \ := \ \int_{L_i}^{t_{N-1}} e_{}^{\phi_k(x)} \, dx
		+ 1_{[R_i = \infty]} q_k
\]
and $J_{01}(a,b) := \int_0^1 \exp((1 - t)a + tb) t \, dt > 0$. Consequently, if $\int \bigl| f_k(x) - f_{k-1}(x) \bigr| \, dx$ is below some prespecified threshold and if
\begin{equation}
\label{eq:remove.tN}
	\frac{1}{n} \sum_{i=1}
	\frac{1_{[L_i < t_N \le R_i]}}{\pi_{ki}}
	\ \le \ 1 ,
\end{equation}
then we recompute $\phi_k$, working from now on with $\tilde{\Phi} = \Phi_{(t_1,\ldots,t_{N-1})}^o$.

\noindent
\textsl{Possible reduction~3.} \
Analogously, suppose that $\tilde{\Phi} = \Phi_{\bs{t}}^o$ and $\#\{i : L_i = R_i = t_1\} = 0$. With $\theta := \phi_k(t_1)$, $\gamma := \phi_k(t_2)$ and $\delta := t_2 - t_1)$ we may write
\[
	\frac{\partial}{\partial \theta} \Lambda(\phi_k,q_k)
	\ = \ \delta J_{01}(\gamma,\theta) \Bigl(
		\frac{1}{n} \sum_{i=1}^n
			\frac{1_{[L_i = t_1]}}
			     {\pi_{ki} + \delta J(\gamma,\theta)}
		- 1 \Bigr) ,
\]
where
\[
	\pi_{ki} \ := \ \int_{t_2}^{R_i} e_{}^{\phi_k(x)} \, dx
		+ 1_{[R_i = \infty]} q_k .
\]
Hence if $\int \bigl| f_k(x) - f_{k-1}(x) \bigr| \, dx$ is below some prespecified threshold and if
\begin{equation}
\label{eq:remove.t1}
	\frac{1}{n} \sum_{i=1}
	\frac{1_{[L_i = t_1]}}{\pi_{ki}}
	\ \le \ 1 ,
\end{equation}
then we recompute $\phi_k$, working from now on with $\tilde{\Phi} = \Phi_{(t_2,\ldots,t_N)}^o$.

\section{Identifiability and Consistency}
\label{sec:identifiability.consistency}

\paragraph{Partial identifiability of $q$.}
Without any shape constraints on $\phi = \log f$, the cure parameter $q$ would not be identifiable. Indeed, with $\tau_*$ denoting the maximum of $\{L_1,\ldots,L_n\} \cup \{R_1,\ldots,R_n\} \cap \R$, the data $\til{X}_1, \ldots, \til{X}_n$ would only provide information about $P = P_{\phi,q}$ on $(-\infty,\tau_*]$ and the number $P((\tau_*,\infty])$. Even if we knew the distribution function $F$ of $P$ on the whole interval $[-\infty,\tau_*]$, we could only conclude that
\[
	q \ \in \ [ 0, 1 - F(\tau_*)] .
\]

On the other hand, let $\phi$ be concave, and suppose we know $F$ only on some bounded interval $(a,b)$ with $F(b\,-) > 0$. Then we know $F(b)$, $f(b)$ and $\phi'(b\,-)$ as well, and the unknown parameter $q = 1 - F(b) - \int_0^\infty f(b+s) \, ds$ satisfies
\[
	q \ \begin{cases}
		\le \ 1 - F(b) , \\
		= \ 1 - F(b)
			& \text{if} \ f(b) = 0 , \\
		\ge \ 1 - F(b) - f(b)/|\phi'(b\,-)|
			& \text{if} \ f(b) > 0 > \phi'(b\,-) .
	\end{cases}
\]
The latter inequality follows from $f(b+s) \le f(b) \exp(\phi'(b\,-) s)$ for arbitrary $s \ge 0$. Hence if $b$ is sufficiently large, we get an equality or at least nontrivial lower and upper bounds for $q$.

\paragraph{Consistency.}
For simplicity we restrict our attention to the setting of interval-censoring: Let $P = P_{\phi,q}$ and $\hat{P}_n = P_{\hat{\phi},\hat{q}}$, where $(\hat{\phi},\hat{q}) = (\hat{\phi}_n,\hat{q}_n)$ is based on the following observations: For $1 \le i \le n$ let $A =: T_{n,i,0} < T_{n,i,1} < \cdots < T_{n,i,M_{ni}} < T_{n,i,M_{ni}+1} := \infty$ be given design points, where $A \in [-\infty,\infty)$ is a known lower bound for the support of $P$. Then observation $\til{X}_i = \til{X}_{n,i}$ is defined as the unique interval $\TT_{n,i,j} = (T_{n,i,j-1},T_{n,i,j}]$, $1 \le j \le M_{ni}+1$, containing $X_i$.

For instance, in connection with event times, $A = 0$ and $T_{n,i,1}, \ldots, T_{n,i,M_{ni}}$ could be inspection time points at which one determines whether the event in question has already happened or not.

In this setting, Theorem~\ref{thm:existence} guarantees existence of a MLE $\hat{P}_n$. The following consistency result is essentially Theorem~3 of \cite{Duembgen_Freitag_Jongbloed_2006} with obvious modifications of its proof. Throughout this section asymptotic statements refer to $n \to \infty$.

\begin{Theorem}[Consistency for interval-censored data]
\label{thm:consistency}
If $n^{-1} \sum_{i=1}^n M_{ni}^\gamma = O(1)$ for some $\gamma > 1/2$, then
\[
	\frac{1}{n} \sum_{i=1}^n \sum_{j=1}^{M_{ni}+1} \bigl| (\hat{P}_n - P)(\TT_{n,i,j}) \bigr|
	\ \to_p^{} \ 0 .
\]
\end{Theorem}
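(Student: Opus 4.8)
The plan is to reduce the stated $L^1$-type quantity to an averaged squared Hellinger distance and then control the latter by a basic-inequality-plus-empirical-process argument, which is exactly the route of Theorem~3 in \cite{Duembgen_Freitag_Jongbloed_2006}. Write $p_{ij} := P(\TT_{n,i,j})$ and $\hat p_{ij} := \hat P_n(\TT_{n,i,j})$, put $\bar p_{ij} := (p_{ij} + \hat p_{ij})/2$, and set $\bar h_i^2 := \sum_{j} (\sqrt{\bar p_{ij}} - \sqrt{p_{ij}})^2$ and $\bar H_n^2 := n^{-1}\sum_{i=1}^n \bar h_i^2$. Since both $(p_{ij})_j$ and $(\hat p_{ij})_j$ sum to one (the last cell $(T_{n,i,M_{ni}},\infty]$ absorbing the mass at $+\infty$), two applications of the Cauchy--Schwarz inequality give
\[
	\frac1n\sum_{i=1}^n\sum_j \bigl|(\hat P_n - P)(\TT_{n,i,j})\bigr|
	\ = \ \frac2n\sum_{i=1}^n\sum_j |\bar p_{ij} - p_{ij}|
	\ \le \ 4\,\bar H_n .
\]
Hence it suffices to prove $\bar H_n \to_p 0$.

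Next I would derive the basic inequality. Let $J_i$ be the index of the observed cell, so that in this purely interval-censored setting $\ell(\hat\phi,\hat q) - \ell(\phi,q) = n^{-1}\sum_i \log(\hat p_{i,J_i}/p_{i,J_i}) \ge 0$ by the MLE property. Concavity of the logarithm gives $\log(\bar p_{i,J_i}/p_{i,J_i}) \ge \tfrac12\log(\hat p_{i,J_i}/p_{i,J_i})$, whence $n^{-1}\sum_i\log(\bar p_{i,J_i}/p_{i,J_i}) \ge 0$. Splitting each summand into its conditional mean (given the design) and a centered part, and using $\log x \le 2(\sqrt{x}-1)$ to bound the means by $\sum_j p_{ij}\log(\bar p_{ij}/p_{ij}) \le -\bar h_i^2$, I obtain the key inequality
\[
	\bar H_n^2 \ \le \ \nu_n \ := \ \frac1n\sum_{i=1}^n\Bigl(\log\frac{\bar p_{i,J_i}}{p_{i,J_i}} - \sum_j p_{ij}\log\frac{\bar p_{ij}}{p_{ij}}\Bigr).
\]
The summands of $\nu_n$ are independent and centered conditionally on the design points, and $\nu_n$ is an empirical process indexed by the log-concave sub-probability measure $\hat P_n$ (the transformation $\hat p_{ij}\mapsto\bar p_{ij}$ being a fixed Lipschitz map of the cell probabilities).

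The decisive step is to bound $\nu_n$ uniformly. Here I would invoke Theorem~\ref{thm:existence} and Theorem~\ref{thm:shape} to confine $\hat P_n$ to the class of log-concave sub-probability measures on $[A,\infty)$, whose metric and bracketing entropy on bounded regions is well understood. A maximal inequality for independent, non-identically distributed summands (bracketing entropy with chaining, in triangular-array form) then yields a modulus-of-continuity bound of the shape $|\nu_n| \le \bar H_n\, o_p(1) + o_p(1)$, where the conditional variances are controlled by $\bar h_i^2$ (using $\bar p_{ij} \ge p_{ij}/2$ to bound the log-ratios from below, and the normalization to control them in mean square). The condition $n^{-1}\sum_i M_{ni}^\gamma = O(1)$ with $\gamma > 1/2$ is precisely what renders the cell-count contribution to this maximal inequality negligible, the exponent $1/2$ reflecting the square root entering the variance bound. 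Combined with $\bar H_n^2 \le \nu_n$, this gives $\bar H_n^2 \le \bar H_n\, o_p(1) + o_p(1)$, forcing $\bar H_n \to_p 0$.

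I expect the empirical-process bound in the third step to be the main obstacle, for three reasons: obtaining the correct entropy bound for the log-concave class after projection onto the heterogeneous and possibly growing partitions; handling the unboundedness of $\log(\bar p_{ij}/p_{ij})$ for cells of small probability (which needs either truncation or a careful variance estimate exploiting $\bar p_{ij}\ge p_{ij}/2$); and tracking exactly how $M_{ni}$ enters the maximal inequality in order to verify that $\gamma > 1/2$ suffices. Since the paper positions this as ``essentially Theorem~3 of \cite{Duembgen_Freitag_Jongbloed_2006} with obvious modifications'', the core machinery can be imported; the modifications are to accommodate the mass $q$ at $+\infty$ (carried by the last cell, so that the normalization used in the reduction step still holds across all cells) and the general lower support bound $A$ in place of $0$.
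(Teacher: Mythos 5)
Your proposal is correct and takes essentially the same approach as the paper: the paper offers no self-contained proof of this theorem but appeals to Theorem~3 of \cite{Duembgen_Freitag_Jongbloed_2006}, whose argument is exactly your route (Cauchy--Schwarz reduction of the cellwise $L^1$ distance to an averaged Hellinger distance, the basic inequality from the MLE property via concavity of $\log$ and $\log x \le 2(\sqrt{x}-1)$, and a uniform empirical-process bound in which the condition $\gamma > 1/2$ renders the cell-count contribution negligible). The modifications you list --- the last cell absorbing the mass at $+\infty$ and the general lower support bound $A$ --- are precisely the ``obvious modifications'' the paper alludes to.
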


Starting from this general result one can obtain more traditional consistency statements under additional assumptions on the time points $T_{n,i,j}$. In what follows let $F, \hat{F}_n : [-\infty,\infty] \to [0,1]$ be the distribution functions of $P$ and $\hat{P}_n$, respectively. Furthermore let
\[
	H_n(B) \ := \ \frac{1}{n} \sum_{i=1}^n \max_{j=1,\ldots,M_{ni}} 1_{[T_{n,i,j} \in B]}
\]
for $B \subset \R$. Then Theorem~\ref{thm:consistency} implies the following result:

\begin{Corollary}[Consistency for interval-censored data]
\label{cor:consistency}
Let $n^{-1} \sum_{i=1}^n M_{ni}^\gamma = O(1)$ for some $\gamma > 1/2$.

\noindent
\textbf{(i)} \ Suppose that $\liminf_{n \to \infty} H_n([x,y]) > 0$ for two real numbers $x \le y$. Then
\[
	\hat{F}_n(x) \le F(y) + o_p(1)
	\ \ \text{and} \ \ \hat{F}_n(y) \ge F(x) + o_p(1) .
\]

\noindent
\textbf{(ii)} \ Let $-\infty \le a < b \le \infty$ such that $\liminf_{n \to \infty} H_n((x,y)) > 0$ whenever $a \le x < y \le b$. Then
\[
	\hat{F}_n(x) \ \to_p \ F(x)
\]
for any $x \in (a,b)$.
\end{Corollary}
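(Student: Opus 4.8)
The plan is to derive both parts of the corollary from the single $L^1$-type statement of Theorem~\ref{thm:consistency} via a monotonicity-plus-averaging argument, obtaining part~(ii) afterwards by sandwiching and continuity of $F$. First I would establish a per-unit bound. For part~(i), fix real $x \le y$ with $\liminf_n H_n([x,y]) > 0$, and call a unit $i$ \emph{marked} if some design point $T_{n,i,j}$ with $1 \le j \le M_{ni}$ lies in $[x,y]$; by definition the number of marked units is exactly $n H_n([x,y])$. For a marked unit $i$, pick such a point $T := T_{n,i,j^*} \in [x,y]$. Since $\hat{F}_n$ and $F$ are nondecreasing and $x \le T \le y$, monotonicity gives
\[
	\hat{F}_n(x) - F(y) \ \le \ \hat{F}_n(T) - F(T) \ = \ (\hat{P}_n - P)\bigl((A,T]\bigr) .
\]
Here I use that both $P$ and $\hat{P}_n$ put no mass in $(-\infty,A]$ (because $A$ lower-bounds the support of $P$ and, by Lemma~\ref{lem:basic}, $\dom(\hat{\phi}) \subset [A,\infty)$), so $\hat{F}_n(T) = \hat{P}_n((A,T])$ and similarly for $F$. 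Writing $(A,T] = \bigcup_{l \le j^*} \TT_{n,i,l}$, the right-hand side is bounded by $S_i := \sum_{l=1}^{M_{ni}+1} \bigl|(\hat{P}_n - P)(\TT_{n,i,l})\bigr| \ge 0$. The symmetric computation, using $F(x) \le F(T)$ and $\hat{F}_n(y) \ge \hat{F}_n(T)$, yields $F(x) - \hat{F}_n(y) \le S_i$.

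The decisive observation is that the left-hand sides of these two inequalities do not depend on $i$, so each holds simultaneously for all marked units. Summing $\hat{F}_n(x) - F(y) \le S_i$ over the marked units and dividing by their count $n H_n([x,y])$ gives
\[
	\bigl( \hat{F}_n(x) - F(y) \bigr)^+
	\ \le \ \frac{1}{n H_n([x,y])} \sum_{i \ \mathrm{marked}} S_i
	\ \le \ \frac{1}{H_n([x,y])} \cdot \frac{1}{n} \sum_{i=1}^n S_i ,
\]
where the last step uses $S_i \ge 0$. By Theorem~\ref{thm:consistency} the final average tends to $0$ in probability, while $H_n([x,y])^{-1}$ is eventually bounded because $\liminf_n H_n([x,y]) > 0$; hence $\hat{F}_n(x) - F(y) \le o_p(1)$. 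The analogous manipulation of $F(x) - \hat{F}_n(y) \le S_i$ gives $F(x) - \hat{F}_n(y) \le o_p(1)$, and together these are exactly the two assertions of part~(i).

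For part~(ii) I would fix $x \in (a,b)$ and $\eps > 0$. Since $P = P_{\phi,q}$ has a density on $\R$, its distribution function $F$ is continuous, so I can choose $x' \in (a,x)$ and $y \in (x,b)$ with $F(x) - F(x') < \eps$ and $F(y) - F(x) < \eps$. Because $H_n(B)$ is monotone in $B$, the hypothesis $\liminf_n H_n((\cdot,\cdot)) > 0$ on subintervals of $(a,b)$ gives $\liminf_n H_n([x',x]) > 0$ and $\liminf_n H_n([x,y]) > 0$, so part~(i) applies to both $[x',x]$ and $[x,y]$. Its two inequalities yield $\hat{F}_n(x) \le F(y) + o_p(1)$ and $\hat{F}_n(x) \ge F(x') + o_p(1)$. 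Combining these with the choice of $x',y$ shows $\Pr\bigl( |\hat{F}_n(x) - F(x)| > 2\eps \bigr) \to 0$, and since $\eps > 0$ was arbitrary, $\hat{F}_n(x) \to_p F(x)$.

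I expect the main obstacle to be the averaging step: one must recognize that the per-unit bound holds with the \emph{same} left-hand side for every marked unit, so that dividing by the marked count $n H_n([x,y])$ collapses the positive part into the global quantity controlled by Theorem~\ref{thm:consistency}, up to the bounded factor $H_n([x,y])^{-1}$. The remaining care is routine: checking that neither $P$ nor $\hat{P}_n$ charges $(-\infty,A]$ so that distribution-function differences at design points equal signed sums of $(\hat{P}_n - P)(\TT_{n,i,l})$, and invoking continuity of $F$ to close the sandwich in part~(ii).
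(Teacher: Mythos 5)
Your proof is correct and takes essentially the same approach as the paper: monotonicity of $F$ and $\hat{F}_n$ transfers the deviation at $x,y$ to a design point lying in $[x,y]$, that deviation is bounded by the per-unit sum $\sum_{k}|(\hat{P}_n-P)(\TT_{n,i,k})|$ controlled by Theorem~\ref{thm:consistency}, the quantity $H_n([x,y])$ counts the units admitting such a bound, and part~(ii) follows from part~(i) via continuity of $F$. The only difference is cosmetic: the paper runs an $\varepsilon$-threshold argument (the event $\hat{F}_n(x)\ge F(y)+\varepsilon$ forces the average in Theorem~\ref{thm:consistency} to be at least $H_n([x,y])\,\varepsilon$, which has vanishing probability), whereas you average the per-unit bounds over the marked units to obtain the same conclusion as a direct quantitative inequality.
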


Suppose, for instance, that $M_{ni} = 1$ for all $n$ and $i$. A special example for this setting is current status data. Here $H_n$ is the empirical distribution of the time points $T_{n,i,1}$, $1 \le i \le n$. If $H_n$ converges weakly to a probability distribution $H$ on the real line such that the distribution function of $H$ is strictly increasing on an open interval $(a,b) \subset \R$, then the assumption of Corollary~\ref{cor:consistency}, part~(ii) is satisfied.

The subsequent result is no longer restricted to the setting of purely interval-censored data. It shows that pointwise stochastic convergence of $\hat{F}_n$ to $F$ on a nondegenerate interval $(a,b)$ implies uniform convergence in probability, unless $F$ is constant on $(a,b)$. Furthermore, the corresponding estimator $\hat{f}_n$ of the density $f = e^\phi$ is consistent on $(a,b)$, too, and the estimator $\hat{q}_n$ of $q$ satisfies certain inequalities. In what follows we denote the positive part of a real number $s$ by $s^+ := \max\{s,0\}$.

\begin{Theorem}[Weak implies strong convergence]
\label{thm:consistency2}
Let $-\infty \le a < b \le \infty$ such that $F(b\,-) > F(a)$ and $\hat{F}_n(x) \to_p F(x)$ for any fixed $x \in (a,b)$. For $\delta > 0$ let $C_\delta$ be the set of all real $x \in [a,b]$ such that $f(x) > 0$ or $f$ is continuous on $(x \pm \delta)$. Further let $D_\delta$ be the set of all real $x \in [a,b]$ such that $1_{(a,b)} f$ is continuous on $(x \pm \delta)$. Then for any fixed $\delta > 0$,
\[
	\sup_{x \in C_\delta} \bigl( \hat{f}_n(x) - f(x) \bigr)^+
	\ \to_p \ 0
	\quad\text{and}\quad
	\sup_{x \in D_\delta} \bigl| \hat{f}_n(x) - f(x) \bigr|
	\ \to_p \ 0 .
\]
Moreover,
\[
	\int_a^b \bigl| \hat{f}_n(x) - f(x) \bigr| \, dx
	\ \to_p \ 0
	\quad\text{and}\quad
	\sup_{x \in [a,b]} \bigl| \hat{F}_n(x) - F(x) \bigr| \ \to_p \ 0 .
\]
Finally, if $b = \infty$, then $\hat{q}_n \to_p q$. Otherwise
\[
	\hat{q}_n \ \begin{cases}
		\le \ 1 - F(b) + o_p(1) , \\
		= \ 1 - F(b) + o_p(1)
			& \text{if} \ f(b) = 0 , \\
		\ge \ 1 - F(b) - f(b)/|\phi'(b\,-)| + o_p(1)
			& \text{if} \ f(b) > 0 > \phi'(b\,-) .
	\end{cases}
\]
\end{Theorem}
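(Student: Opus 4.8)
The plan is to build everything on the hypothesis that $\hat F_n(x)\to_p F(x)$ pointwise on $(a,b)$ together with the log-concavity of the estimators $\hat\phi_n$. The crucial structural input is that each $\hat\phi_n$ is concave and $\hat f_n=e^{\hat\phi_n}$ is a log-concave (sub-)density; this is what upgrades pointwise distribution-function convergence to density convergence. First I would record two deterministic facts about a single log-concave function $g=e^\psi$ on an interval: (1) control of $g$ at an interior point $x$ in terms of the masses $F(y)-F(x)$ of nearby subintervals, via the standard estimate that for a concave $\psi$ the value $\psi(x)$ and the slopes $\psi'(x\pm)$ are squeezed by the increments of the integral of $e^\psi$ over small neighbourhoods; and (2) a Helly-type compactness statement saying that along subsequences the $\hat\phi_n$ converge pointwise (after passing to a subsequence) to some concave limit. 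The argument will then be a subsequence argument: it suffices to show that every subsequence has a further subsequence along which the conclusions hold with the correct limit.

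\begin{enumerate}
\item \textbf{From $\hat F_n\to F$ pointwise to convergence of $\hat F_n$ on all of $(a,b)$.}\ Since each $\hat F_n$ is monotone and $F$ is continuous on $(a,b)$ (the density $f=e^\phi$ is locally bounded there because $\phi$ is concave and finite on $\{f>0\}$), pointwise convergence on a dense set plus monotonicity gives $\sup_{x\in[a',b']}|\hat F_n(x)-F(x)|\to_p 0$ on every compact $[a',b']\subset(a,b)$, by the usual Pólya-type argument. The condition $F(b\,-)>F(a)$ guarantees the limit is nondegenerate, so the supports do not collapse.
\item \textbf{Density convergence via log-concavity.}\ Fix $x$ with $f(x)>0$, or a continuity point. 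Using fact (1) above and the convergence of the increments $\hat F_n(x+h)-\hat F_n(x-h)\to F(x+h)-F(x-h)$ for small $h$, I would sandwich $\limsup_n \hat f_n(x)$ and $\liminf_n \hat f_n(x)$ between difference quotients of $F$, and let $h\downarrow 0$. This yields the one-sided bound $\sup_{x\in C_\delta}(\hat f_n(x)-f(x))^+\to_p 0$ (the upper bound is the ``easy'' direction for log-concave densities, since concavity forbids tall narrow spikes) and the two-sided bound on $D_\delta$ where $f$ is continuous. The uniformity over $C_\delta$, $D_\delta$ follows from a finite-cover argument combined with the monotonicity of slopes inherent in concavity.
\item \textbf{$L^1$ and uniform $F$-convergence.}\ Convergence in $L^1$ on $(a,b)$ follows from pointwise density convergence plus a domination/uniform-integrability argument: the log-concave envelopes provide integrable upper bounds $\hat f_n(x)\le f(x')\exp(\hat\phi_n'(x'\pm)(x-x'))$ away from the support boundary, and Scheffé-type reasoning closes the gap. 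Uniform convergence of $\hat F_n$ on all of $[a,b]$ (including the endpoints, where $F$ may jump only via the atom at $\infty$) follows by combining step 1 with the $L^1$ bound near the endpoints.
\item \textbf{The cure parameter $\hat q_n$.}\ For $b=\infty$ I would write $\hat q_n=1-\int\hat f_n$, and the $L^1$ convergence on $(a,\infty)$ together with tightness (no mass escaping to finite $+\infty$ in the limit, again by log-concave tail control) gives $\hat q_n\to_p q$. For finite $b$ the three cases mirror the deterministic identifiability bounds derived just before the theorem: the upper bound $\hat q_n\le 1-\hat F_n(b)$ is immediate; the equality when $f(b)=0$ uses that a log-concave density with $f(b)=0$ must vanish beyond $b$; and the lower bound when $f(b)>0>\phi'(b\,-)$ uses the exponential tail bound $\hat f_n(b+s)\le \hat f_n(b)\exp(\hat\phi_n'(b\,-)s)$, so that $\int_b^\infty\hat f_n\le \hat f_n(b)/|\hat\phi_n'(b\,-)|$, after passing the convergent quantities $\hat f_n(b)\to f(b)$ and $\hat\phi_n'(b\,-)\to\phi'(b\,-)$ to the limit.
\end{enumerate}

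The main obstacle I anticipate is step 4 in the finite-$b$ case together with the boundary behaviour in steps 2--3: controlling $\hat f_n$ and the slope $\hat\phi_n'(b\,-)$ near the right endpoint of the observation window, where we have no direct information beyond $b$ and must rely entirely on the log-concave shape to bound the escaping tail mass. Establishing that $\hat\phi_n'(b\,-)\to_p \phi'(b\,-)$ (so that the slope does not degenerate to $0$ or $-\infty$ along a subsequence) is delicate and will require the density convergence of step 2 on a one-sided neighbourhood of $b$ together with the concavity-forced monotonicity of the slopes. Everything else is a fairly mechanical transfer of the complete-data log-concave consistency machinery to the present setting, exactly as in the cited Theorem~3 of \cite{Duembgen_Freitag_Jongbloed_2006}.
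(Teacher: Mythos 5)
Your steps 1--3 follow essentially the same route as the paper: control of $\hat{\phi}_n$ at interior points via logarithms of interval masses (this is exactly the paper's Lemma~\ref{lem:intervals.points}), uniform convergence on compact subsets of the interior, and exponential envelopes from concavity for the $L^1$ and uniform-$F$ statements. The genuine gap is in your step 4, in the case $b < \infty$, $f(b) > 0 > \phi'(b\,-)$. You bound the escaping mass by $\hat{f}_n(b)/|\hat{\phi}_n'(b\,-)|$ and then ``pass the convergent quantities $\hat{f}_n(b) \to_p f(b)$ and $\hat{\phi}_n'(b\,-) \to_p \phi'(b\,-)$ to the limit''. Neither convergence follows from the hypotheses, and both can fail: the assumption $\hat{F}_n(x) \to_p F(x)$ is imposed only for $x \in (a,b)$, so it does not constrain the estimator at $b$ itself or to the right of $b$. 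For instance, an estimator that agrees with a consistent one on $(-\infty, b-1/n]$ and then drops off with an extremely steep linear segment (the lost mass being absorbed into $\hat{q}_n$) satisfies all hypotheses, yet $\hat{f}_n(b) \not\to_p f(b)$ and $\hat{\phi}_n'(b\,-) \to -\infty$. Only the one-sided bound $\bigl( \hat{f}_n(b) - f(b) \bigr)^+ \to_p 0$ is available --- and indeed that is all the theorem itself asserts at such a point, since $f$ may be discontinuous at $b$. So this step is not merely ``delicate'' as you anticipate; as formulated it cannot be carried out.

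The repair, which is what the paper does, is to avoid boundary derivatives entirely and work with chord slopes anchored at a fixed interior point $x_o < b$ with $f(x_o) > f(b)$: by concavity, $\hat{f}_n(b+s) \le \hat{f}_n(b) \bigl( \hat{f}_n(b)/\hat{f}_n(x_o) \bigr)^{s/(b-x_o)}$ for $s > 0$, so the escaping mass is at most $\hat{\eps}_n(x_o) := \hat{f}_n(b) \, (b-x_o) / \log \bigl( \hat{f}_n(x_o)/\hat{f}_n(b) \bigr)$. This quantity is increasing in $\hat{f}_n(b)$, so the available one-sided bound $\hat{f}_n(b) \le f(b) + o_p(1)$ together with interior consistency $\hat{f}_n(x_o) \to_p f(x_o)$ (your step 2) yields $\hat{\eps}_n(x_o) \le f(b)(b-x_o)/(\phi(x_o)-\phi(b)) + o_p(1)$; letting $x_o \uparrow b$ deterministically afterwards turns the chord slope into $|\phi'(b\,-)|$. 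The same chord-slope device (applied to $\hat{f}_n$, not a property of the true $f$ as you suggest) is what you need in the $f(b)=0$ case to show $\int_b^\infty \hat{f}_n(x)\,dx \to_p 0$, and also to get uniformity of your step 2 bounds when $C_\delta$ and $D_\delta$ are unbounded, where a finite-cover argument alone does not suffice.
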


The statements about $\hat{f}_n - f$ in this theorem are similar to results of \cite{Schuhmacher_Huesler_Duembgen_2009a} in the context of log-concave probability densities on $\R^d$. They imply that
\[
	\hat{f}_n(x) \ \to_p \ f(x)
\]
for any $x \in (a,b)$ at which $f$ is continuous, and
\[
	\hat{f}_n(y) \ \le \ f(y) + o_p(1)
\]
for any real $y \in [a,b]$.

\section{Applications}
\label{sec:examples}

The algorithm described in the previous section was implemented and made publicly available as contributed package {\tt logconcens} \citep{Schuhmacher_etal_2013} for the statistical computing environment R \citep{R_2013}. We give here two demonstrations of this implementation, one for simulated interval-censored data and one for real right-censored data. In both cases we used the domain reduction technique detailed above, but the trick of adding artificial very small or large pseudo-observations with little weights led virtually to the same densities and survival functions.

\paragraph{Simulated data.}

We simulate event times $X_i$, $1 \leq i \leq 100$, from a $\Gamma(3,1)$-distribution and inspect them according to independent homogeneous Poisson processes with rate $1$. The latter means that for each $i$, we consider a random sequence $(T_{i,j})_{j=1}^\infty$ which is independent from $X_i$, starts at $T_{i,1} = 0$ and has independent, standard exponentially distributed increments $T_{i,j} - T_{i,j-1}$, $j \ge 2$. Then $\til{X}_i$ is the unique interval $(T_{i,j-1},T_{i,j}]$, $j \ge 2$, containing $X_i$.

Figure~\ref{fig:simulated} presents the generated data and comparisons of our log-concave NPMLE in terms of log-densities and survival functions. In the first panel we see the censored data consisting of the $n$ intervals $\tilde{X}_i$ sorted by their left endpoints. The second panel compares our estimator $\hat{\phi}$ to the true log-density of the Gamma distribution and to the NPMLE based on the exact data $X_i$. The differences are rather small. Note that $\hat{q} = 0$ and $\hat{\phi} = \hat{\phi}_0$ because all right endpoints $R_i$ are finite. The third panel compares the survival function $\hat{S}$ obtained from $\hat{\phi}$ to the true survival function $S$ and to the unconstrained nonparametric maximum likelihood estimator of the survival function from \cite{Turnbull_1976} (produced with the R package {\tt interval}; see \citealp{Fay_2013}). Compared to the latter the survival curve stemming from $\hat{\phi}$ is clearly preferable as it captures not only the approximate course but also the smoothness of the true survival curve.

In order to analyze the performance of the estimators more thoroughly, we simulated 500 data sets by the above procedure and computed $\hat{\phi}$ and $\hat{S}$ every time. The average supremum norm of $|\hat{S} - S|$ was 0.0614, which compares favourably to the value of 0.1540 obtained for the same quantity if we replace $\hat{S}$ with the Turnbull estimate.

To study the performance for a distribution with a positive cure parameter, we also simulated 500 data sets $(X_i$, $1 \leq i \leq 100)$ from the distribution $0.7\cdot\Gamma(3,1) + 0.3\cdot\delta_{\infty}$ and inspected them according to Poisson processes that were restricted to only six inspection times each. The average supremum norm of $|\hat{F} - F| = |\hat{S} - S|$ was then 0.0763 and the average estimation error $|\hat{q} - q|$ for the cure parameter was 0.0514. Replacing $\hat{S}$ and $\hat{q}$ with the Turnbull estimate and its rightmost value, we obtained 0.1482 and 0.07199 respectively.

Of course we benefit in these examples from the fact that the true distribution is really log-concave. On the other hand, many distributions with a non-decreasing hazard rate are log-concave (see \citealp{Duembgen_Rufibach_2009}) and in the case of slight misspecification of the model, at least for exact data, the log-concave density estimator is still consistent for a close approximation of the true density (see \citealp{Duembgen_etal_2011}).

\begin{figure}[h]
\centering
\includegraphics[width=0.32\textwidth]{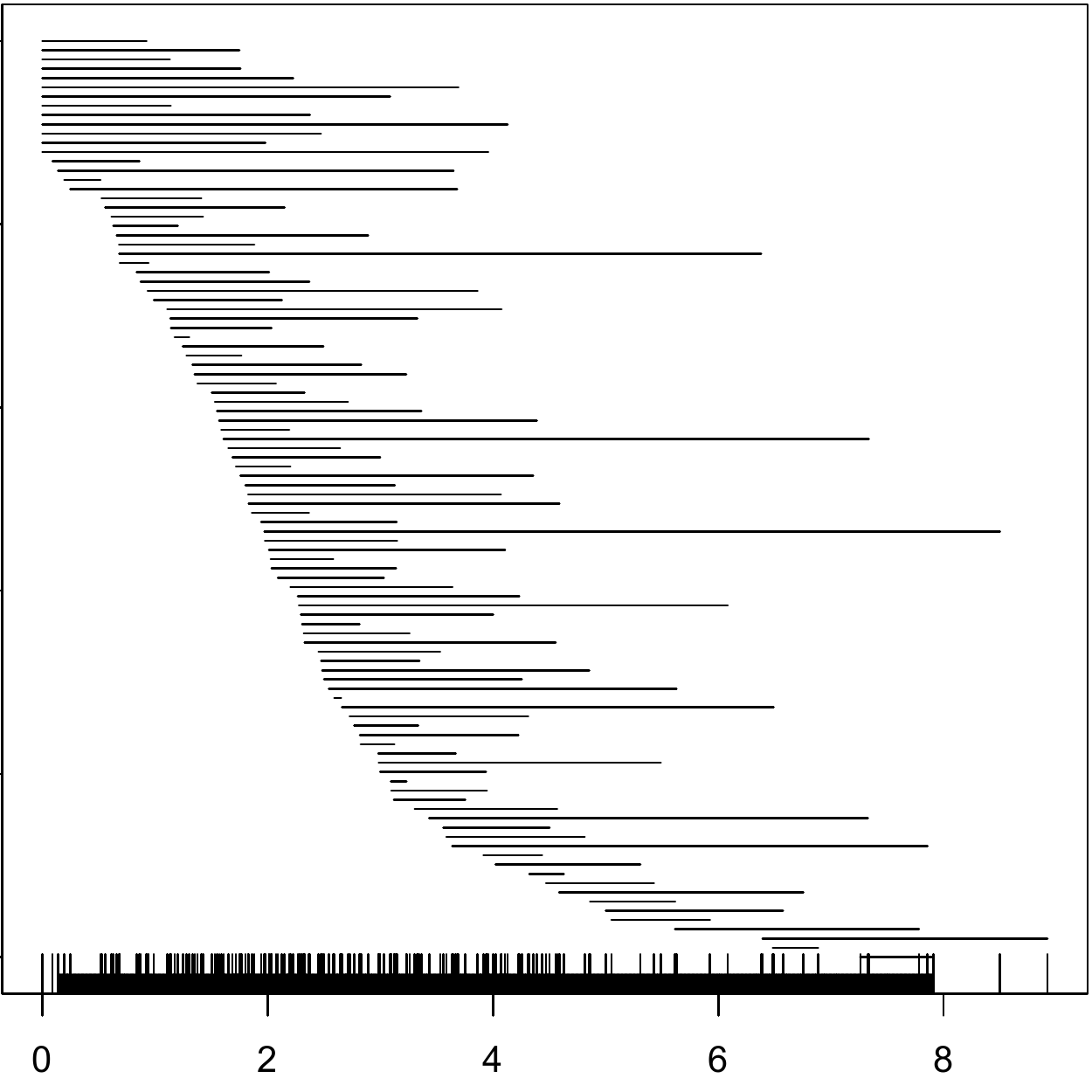} \; \includegraphics[width=0.32\textwidth]{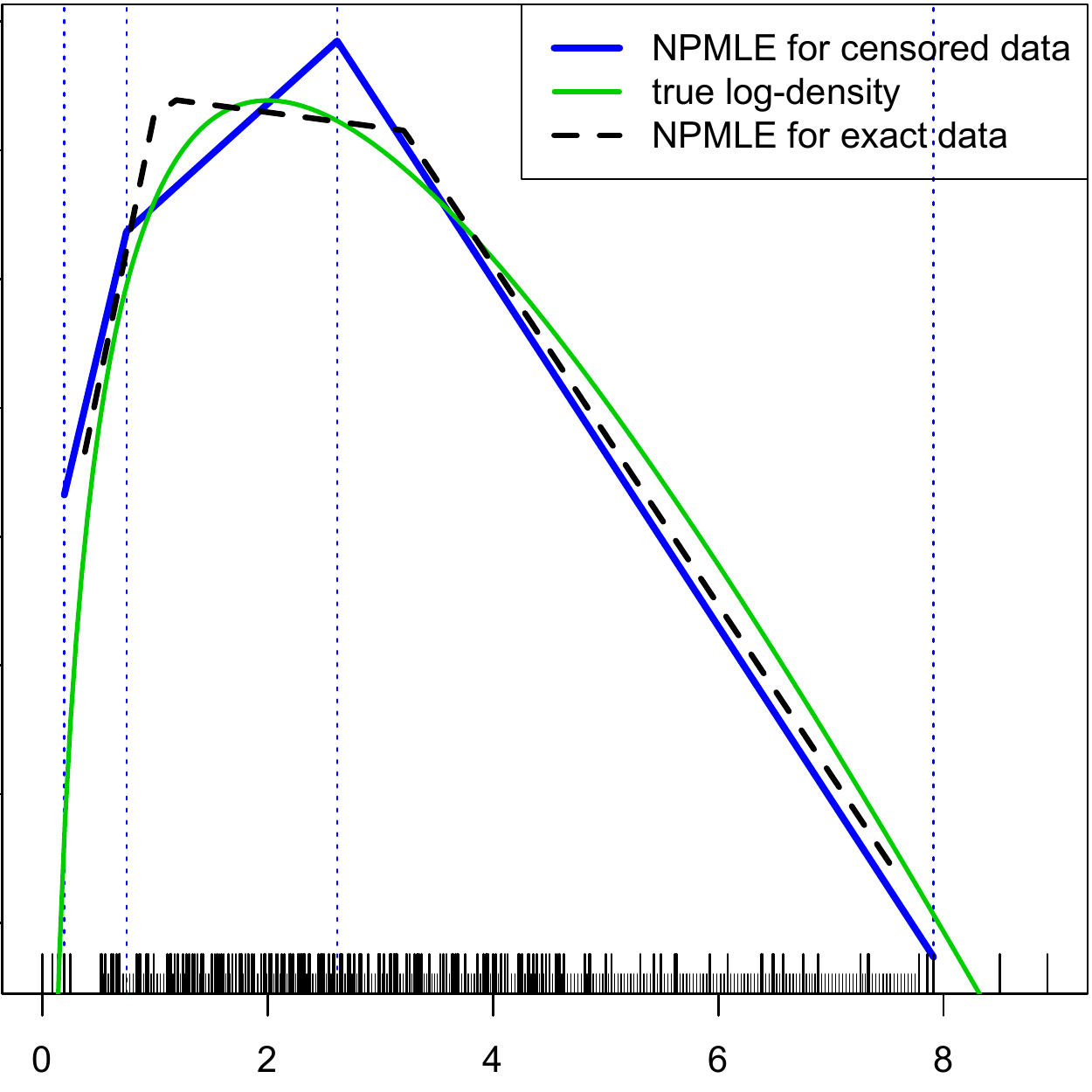} \; \includegraphics[width=0.32\textwidth]{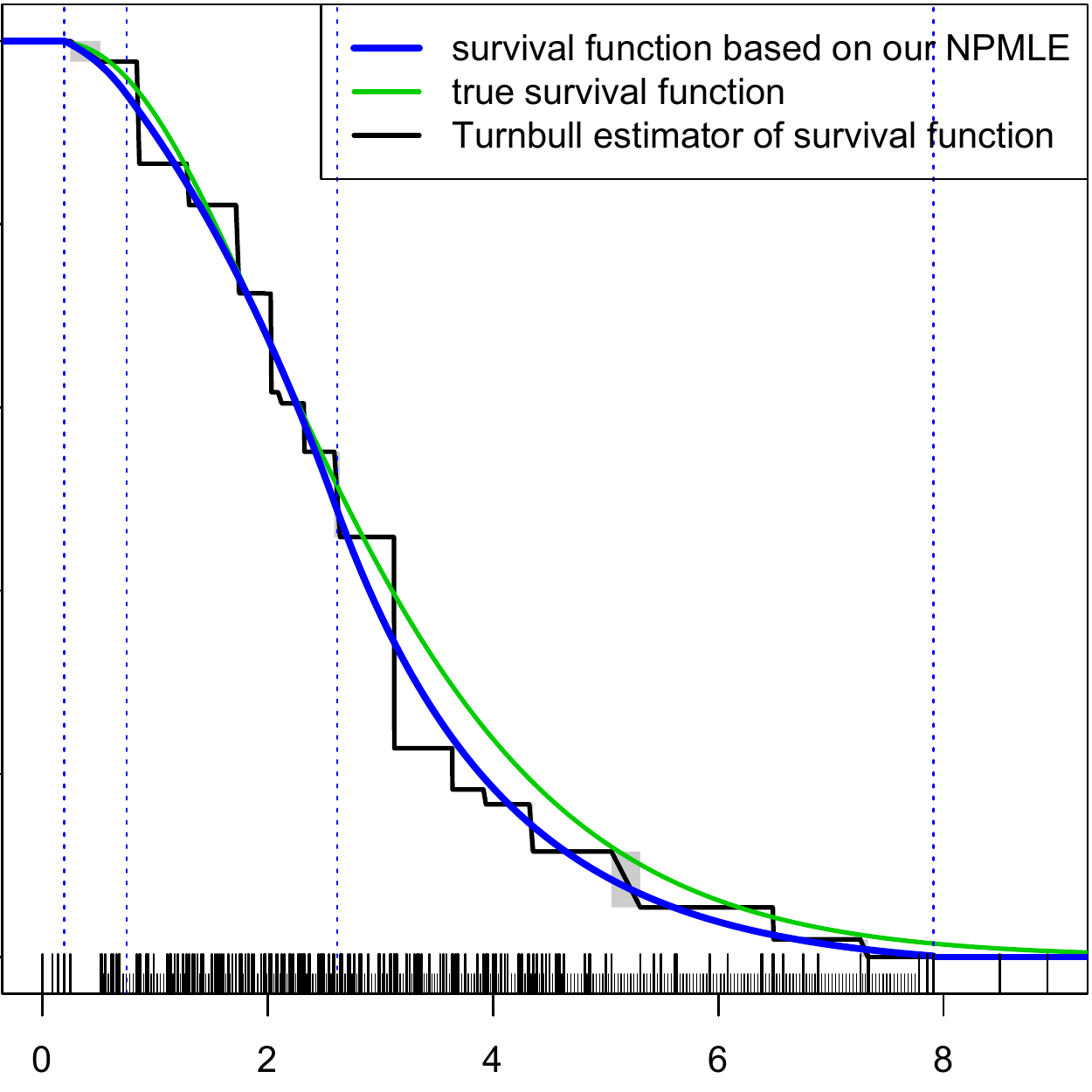}
\caption{Simulated interval-censored data and a comparison of estimators. The longer tick marks above the $x$-axis give the interval endpoints $\tau_1, \ldots, \tau_m$, the shorter tick marks the grid points $t_1,\ldots,t_N$. The faint vertical lines indicate the positions of the knots (changes of slope) of $\hat{\phi}$.}
\label{fig:simulated}
\end{figure}

\paragraph{Real data.}

We estimate the survival curve for the data from \cite{Edmunson_etal_1979}, which is available in the dataset {\tt ovarian} in the R package {\tt survival} \citep{Therneau_2013}. The survival times in days of 26 patients with advanced ovarian carcinoma were recorded along with certain covariate information, which we ignore here. Twelve observations are uncensored and the rest is right-censored.

The data is depicted in the left panel of Figure~\ref{fig:real}, where a dot represents an exact observation for a patient at a certain time. The right panel shows the survival function based on our estimator $(\hat{\phi},\hat{q})$ together with the celebrated Kaplan--Meier estimator, which is just the special case of the Turnbull estimator for right-censored data. The cure parameter is estimated at $\hat{q} \approx 0.4944$, which is just slightly below the final level of approximately $0.4967$ of the Kaplan--Meier estimator. While it becomes clear from other examples that this is a real rather than a numerical difference, this difference in the cure parameters typically tends to be small.

\begin{figure}[h]
\centering
\includegraphics[width=0.42\textwidth]{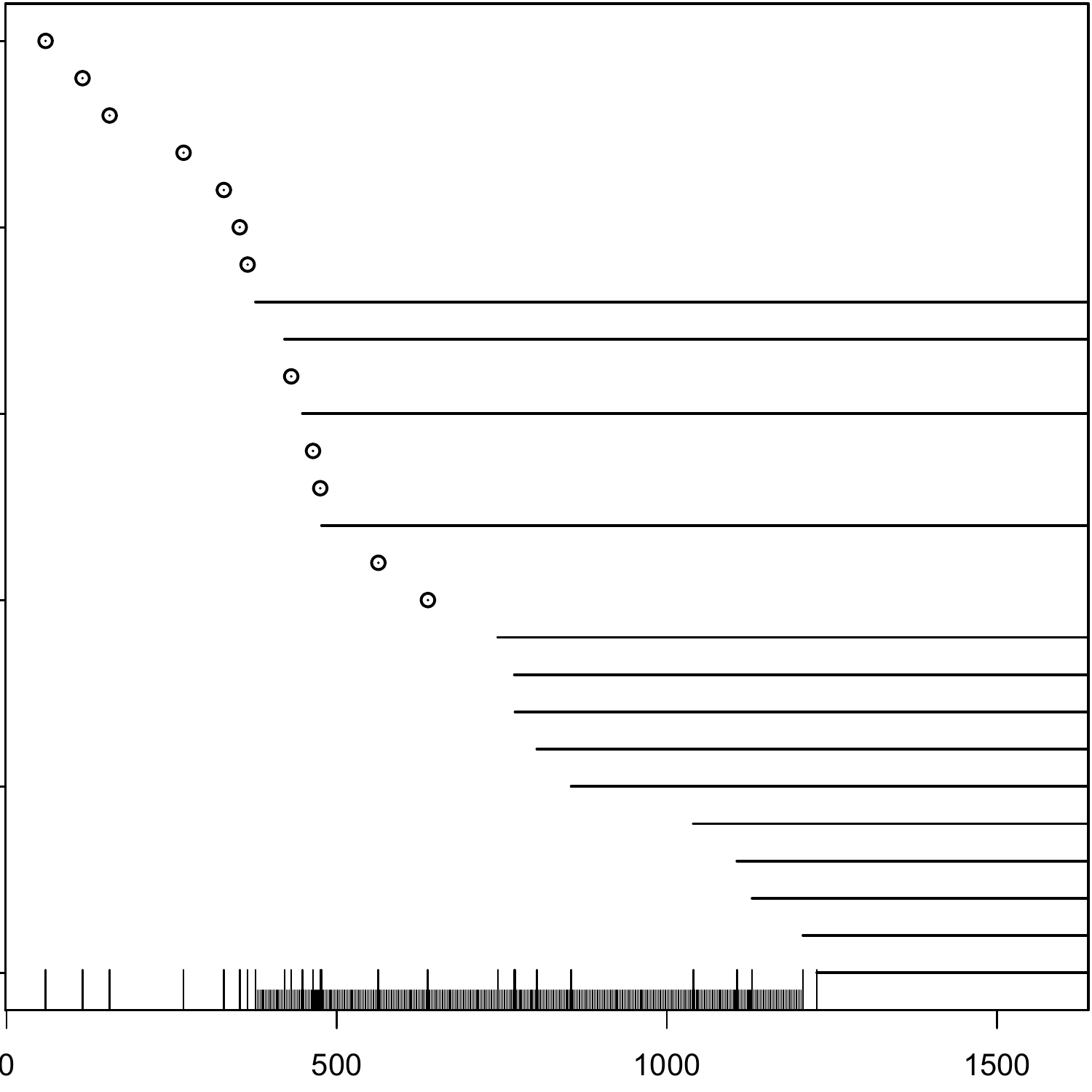} \hspace*{10mm} \includegraphics[width=0.42\textwidth]{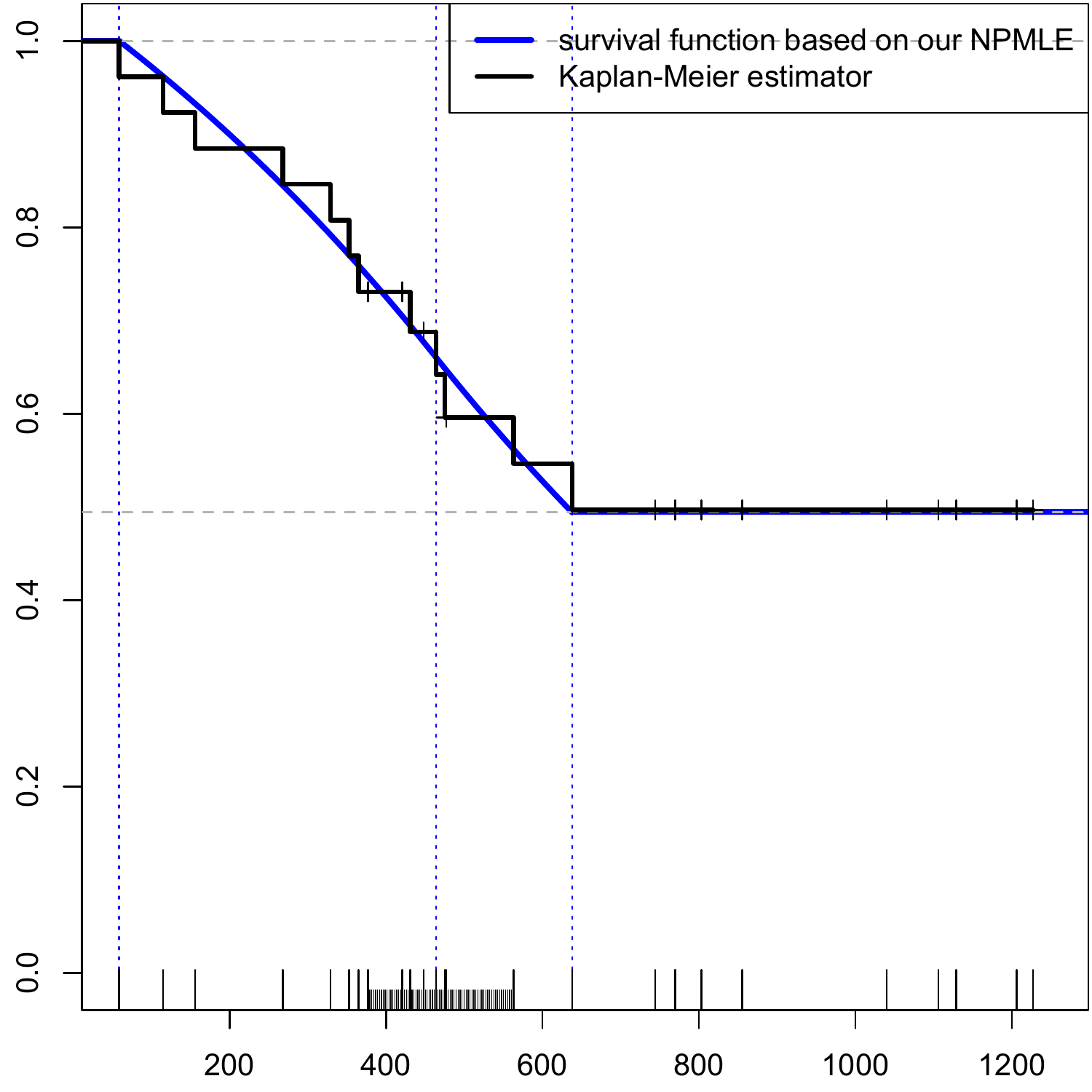}
\caption{The ovarian cancer data and a comparison of estimators. Note that the set of grid points $t_1,\ldots,t_N$ was adapted according to the rules of Theorem~\ref{thm:shape} after each of the domain reduction steps proposed in Section~\ref{sec:algorithms}. Consequently the final grid (right panel) is concentrated on a much smaller part of the time axis than the original one (left panel).}
\label{fig:real}
\end{figure}

\section{Proofs and technical results}
\label{sec:proofs}


An essential ingredient for the proof of Theorem~\ref{thm:existence} are the following inequalities:

\begin{Lemma}
\label{lem:inequalities}
Let $\phi : \R \to [-\infty,\infty)$ be a concave function such that $\int_{\R} e^{\phi(t)} \, dt \le 1$, and let $x \in \dom(\phi)$. Then for any $y \in \R$ with $|x - y| \ge e^{-\phi(x)}$,
\[
	\phi(y) \ \le \ \phi(x) + 1 - e^{\phi(x)} |y - x| .
\]
Moreover, for any $\delta > 0$ and any interval $B \subset [x + \delta, \infty)$ or $B \subset (-\infty,x-\delta]$,
\[
	\int_B e^{\phi(t)} \, dt
	\ \le \ \exp(- e^{\phi(x)} \delta) .
\]
\end{Lemma}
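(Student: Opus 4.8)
The plan is to treat the two inequalities separately, reducing each to a one-dimensional convexity computation; the second one is the delicate one.

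For the first inequality I would assume without loss of generality that $y > x$ (the case $y < x$ follows by reflecting $\phi$), and I may take $\phi(y) > -\infty$, since otherwise the bound is trivial. Writing $a := \phi(x)$, $b := \phi(y)$, $c := e^{a}$, $t := y - x$ and $u := b - a$, concavity gives the chord lower bound $\phi\bigl((1-\lambda)x + \lambda y\bigr) \ge (1-\lambda)a + \lambda b$ on $[x,y]$, hence
\[
	1 \ \ge \ \int_x^y e^{\phi} \ \ge \ t\int_0^1 e^{(1-\lambda)a + \lambda b}\,d\lambda \ = \ ct\,\frac{e^{u}-1}{u}.
\]
The hypothesis $|x-y|\ge e^{-\phi(x)}$ reads $ct\ge 1$, and since $(e^u-1)/u \ge 1$ with equality only at $u=0$, the displayed bound first forces $u\le 0$. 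Setting $w := -u \ge 0$ it becomes $ct \le w/(1-e^{-w})$, and the elementary inequality $w/(1-e^{-w})\le 1+w$ (equivalent to $e^{w}\ge 1+w$) then yields $ct\le 1+w = 1-u$, i.e.\ $u \le 1 - ct$. This is exactly $\phi(y)\le\phi(x)+1-e^{\phi(x)}|y-x|$.

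For the second inequality I would again reduce to the right tail, so that it suffices to bound $\Psi(\delta):=\int_{x+\delta}^\infty e^{\phi}$. The crux is the monotonicity of the normalized tail $R(y):=e^{-\phi(y)}\int_y^\infty e^{\phi} = \int_0^\infty e^{\phi(y+s)-\phi(y)}\,ds$ for $y\ge x$. Since $\phi$ is concave, its increments $\phi(y+s)-\phi(y)$ are non-increasing in $y$ for each fixed $s\ge 0$, so $R$ is non-increasing; and at the base point $R(x)=e^{-\phi(x)}\int_x^\infty e^{\phi}\le e^{-\phi(x)}$ because $\int_{\R} e^{\phi}\le 1$. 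Hence $R(y)\le e^{-\phi(x)}$ for every $y\ge x$, which rearranges into the pointwise bound $e^{\phi(y)}\ge e^{\phi(x)}\int_y^\infty e^{\phi}$. Turning this into exponential decay, I note that $\Psi$ is absolutely continuous with $\Psi'(\delta)=-e^{\phi(x+\delta)}$ a.e., so the pointwise bound gives $\Psi'(\delta)\le -e^{\phi(x)}\Psi(\delta)$; thus $\delta\mapsto e^{e^{\phi(x)}\delta}\Psi(\delta)$ is non-increasing, and together with $\Psi(0)\le 1$ this yields $\Psi(\delta)\le e^{-e^{\phi(x)}\delta}$, whence $\int_B e^{\phi}\le e^{-e^{\phi(x)}\delta}$ for any $B\subset[x+\delta,\infty)$. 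The left tail is symmetric.

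I expect the main obstacle to be securing the sharp rate $e^{\phi(x)}$ in the second inequality. Feeding the first inequality directly into the tail integral loses a spurious factor $e$, so the argument genuinely needs the decreasing-increment (mean-residual-life) monotonicity of $R$ together with the global mass constraint $\int_{\R} e^{\phi}\le 1$, rather than the pointwise first inequality alone; verifying that concavity makes $\phi(y+s)-\phi(y)$ non-increasing in $y$ (a majorization-type argument comparing the points $y_1,\,y_1{+}s,\,y_2,\,y_2{+}s$) is the one step to carry out with care.
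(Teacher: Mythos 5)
Your proof is correct. For the first inequality you follow essentially the paper's own route: the chord/Jensen lower bound $\int_x^y e^{\phi(t)}\,dt \ \ge\ |y-x|\,e^{\phi(x)}(e^u-1)/u$ with $u := \phi(y)-\phi(x)$, the observation that this together with $|y-x|\,e^{\phi(x)}\ge 1$ forces $u\le 0$, and then the elementary inequality $w/(1-e^{-w})\le 1+w$, which is exactly the paper's $(1-e^{-\gamma})/\gamma \ge 1/(1+\gamma)$ with $\gamma = -u = w$. (A minor phrasing point: $(e^u-1)/u\ge 1$ holds only for $u\ge 0$, the ratio being increasing in $u$; what you actually use, namely that $(e^u-1)/u\le 1$ implies $u\le 0$, is correct.) For the second inequality, however, you take a genuinely different route. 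The paper compares $\phi$ with the explicit linear function $\phi_o(t):=\phi(x)-e^{\phi(x)}(t-x)$, whose exponential integrates to exactly $1$ over $[x,\infty)$; concavity yields a single crossing point $y_o$ with $\phi\ge\phi_o$ before it and $\phi\le\phi_o$ after it, and the two cases $x+\delta\ge y_o$ (bound $\int_B e^{\phi}$ by $\int_{x+\delta}^\infty e^{\phi_o}$) and $x+\delta<y_o$ (bound it by $1-\int_x^{x+\delta}e^{\phi_o}$) both give $\exp(-e^{\phi(x)}\delta)$. You instead show that the normalized tail $R(y)=e^{-\phi(y)}\int_y^\infty e^{\phi}$ is non-increasing (via the non-increasing increments of a concave function), bound $R(x)\le e^{-\phi(x)}$ by the unit-mass constraint, and close with a Gr\"onwall-type argument for $\Psi(\delta)=\int_{x+\delta}^\infty e^{\phi}$. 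Both arguments capture the sharp rate $e^{\phi(x)}$ and avoid the spurious factor $e$ you rightly point out would result from feeding the first inequality into the tail integral. The paper's comparison argument is slightly more elementary: no absolute continuity or a.e.\ differentiation is invoked, and the case where $[x+\delta,\infty)$ exits $\dom(\phi)$ needs no separate comment. Your argument, on the other hand, isolates the structural fact behind the bound (log-concavity implies a monotone mean residual life, i.e.\ an integrated form of increasing hazard rate), which is of independent interest and transfers to more general settings; it just requires the extra care you anticipate at the $-\infty$ boundary and in justifying $\Psi'(\delta)=-e^{\phi(x+\delta)}$ almost everywhere.
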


\begin{proof}[\bf Proof of Lemma~\ref{lem:inequalities}]
Note first that by concavity of $\phi$, convexity of the exponential function and Jensen's inequality,
\begin{align}
	\nonumber
	\int_{\min(x,y)}^{\max(x,y)} e^{\phi(t)} \, dt \
	&= \ |y - x| \int_0^1 \exp \bigl( \phi((1 - u)x + uy) \bigr) \, du \\
	\label{eq:helper1a}
	&\ge \ |y - x| \int_0^1 \exp \bigl( (1 - u) \phi(x) + u \phi(y) \bigr) \, du \\
	\label{eq:helper1b}
	&\ge \ |y - x| \exp \bigl( \phi(x)/2 + \phi(y)/2 \bigr) .
\end{align}
Since the left hand side is less than or equal to one, and since the right hand side of \eqref{eq:helper1b} may be written as $|y - x| e^{\phi(x)} \exp \bigl( \phi(y) - \phi(x) \bigr)^{1/2}$, it follows from $|y - x| \ge e^{-\phi(x)}$ that $\gamma := \phi(x) - \phi(y) \ge 0$. But then the right hand side of \eqref{eq:helper1a} equals
\[
	|y - x| e^{\phi(x)} \int_0^1 \exp(- \gamma u) \, du
	\ = \ |y - x| e^{\phi(x)} (1 - e^{-\gamma})/\gamma
\]
with $(1 - e^{-0})/0 := 1$. Since $(1 - e^{-\gamma})/\gamma \ge 1/(1 + \gamma)$ for arbitrary $\gamma > 0$, we may conclude that $1 + \gamma \ge e^{\phi(x)} |y - x|$, which is equivalent to $\phi(y) \le \phi(x) + 1 - |y - x| e^{\phi(x)}$.

As for the second part, let $B \subset [x + \delta, \infty)$. If we define $\phi_o(t) := \phi(x) - e^{\phi(x)} (t - x)$, then $\int_x^\infty e^{\phi_o(t)} \, dt = 1 \ge \int_x^\infty e^{\phi(t)} \, dt$, and by concavity of $\phi$ there exists a number $y_o \in [x,\infty)$ such that $\phi \ge \phi_o$ on $[0,y_o)$ and $\phi \le \phi_o$ on $(y_o,\infty)$. In case of $x + \delta \ge y_o$,
\[
	\int_B e^{\phi(t)} \, dt
	\ \le \ \int_{x + \delta}^\infty e^{\phi_o(t)} \, dt
	\ = \ \exp(- e^{\phi(x)}\delta) .
\]
In case of $x + \delta < y_o$,
\[
	\int_B e^{\phi(t)} \, dt
	\ \le \ 1 - \int_{x}^{x + \delta} e^{\phi(t)} \, dt
	\ \le \ 1 - \int_{x}^{x + \delta} e^{\phi_o(t)} \, dt
	\ = \ \exp(- e^{\phi(x)}\delta) .
\]
The case $B \subset (-\infty,x-\delta]$ may be treated analogously.
\end{proof}

Another important ingredient for proving Theorem~\ref{thm:existence} is a slight modification of Lemma~4.2 of \citet{Duembgen_etal_2011} which we state without proof:

\begin{Lemma}
\label{lem:compactness}
Let $\overline{\phi}$ and $\phi_1, \phi_2, \phi_3, \ldots$ be concave and upper semicontinuous functions from $\R$ into $[-\infty,\infty)$ such that $\phi_k \le \overline{\phi}$ for all $k$. Further suppose that for a compact interval $[a,b] \subset \R$,
\[
	\liminf_{k \to \infty} \max_{x \in [a,b]} \phi_k(x) \ > \ - \infty .
\]
Then there exists a concave and upper semicontinuous function $\phi : \R \to [-\infty,\infty)$ and a subsequence $(\phi_{k(j)})_j$ of $(\phi_k)_k$ such that
\begin{align*}
	\limsup_{j \to \infty, \, x \to x_o} \phi_{k(j)}(x) \
	&\le \ \phi(x_o) \quad\text{for any} \ x_o \in \R , \\
	\lim_{j \to \infty, \, x \to x_o} \phi_{k(j)}(x) \
	&= \ \phi(x_o) \quad\text{for any} \ x_o \in \mathrm{interior}(\dom(\phi)) .
\end{align*}
Moreover, $\dom(\phi) \cap [a,b] \ne \emptyset$.
\end{Lemma}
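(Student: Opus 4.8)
The plan is to pass to the hypographs of the $\phi_k$, extract a convergent subsequence of these convex sets by a diagonal argument, read off $\phi$ from the limit set, and then use concavity to upgrade the soft one-sided control into genuine two-sided convergence on the interior of $\dom(\phi)$. First I would record the two a priori bounds. Since $\liminf_k \max_{x\in[a,b]}\phi_k(x)>-\infty$, I may pass to a subsequence (and relabel) along which $\max_{x\in[a,b]}\phi_k(x)\ge -C$ for some finite $C$; let $x_k\in[a,b]$ be a maximizer and, by compactness of $[a,b]$, pass to a further subsequence with $x_k\to x_*\in[a,b]$. For each $k$ set $H_k:=\{(x,t)\in\R^2: t\le\phi_k(x)\}$. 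Upper semicontinuity makes $H_k$ closed, concavity makes it convex, each $H_k$ is unbounded below in the $t$-direction, and all are contained in the fixed hypograph $H_{\overline\phi}:=\{(x,t):t\le\overline\phi(x)\}$.

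Next I would extract a limit. Fixing a countable dense set $Q\subset\R$ and using $\phi_k(q)\le\overline\phi(q)<\infty$, each sequence $(\phi_k(q))_k$ lives in the compact set $[-\infty,\overline\phi(q)]\subset[-\infty,\infty)$, so a diagonal argument yields a subsequence $(\phi_{k(j)})_j$ along which $\phi_{k(j)}(q)\to g(q)$ for every $q\in Q$; equivalently the sets $H_{k(j)}$ converge in the Kuratowski sense to a closed set $H$, which is again convex and downward closed in $t$. I then define $\phi(x):=\sup\{t:(x,t)\in H\}$, with $\sup\emptyset:=-\infty$. This $\phi$ is concave and upper semicontinuous (it is exactly the hypograph representation of $H$), and $\phi\le\overline\phi$, hence $\phi<\infty$, because $H\subseteq H_{\overline\phi}$. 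The domain meets $[a,b]$: the points $(x_{k(j)},-C)\in H_{k(j)}$ converge to $(x_*,-C)\in H$, so $\phi(x_*)\ge -C$ and $x_*\in\dom(\phi)\cap[a,b]$. This is essentially Lemma~4.2 of \citet{Duembgen_etal_2011}, with the density normalization replaced by the envelope $\overline\phi$.

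Finally I would verify the two statements. The inequality $\limsup_{j\to\infty,\,x\to x_o}\phi_{k(j)}(x)\le\phi(x_o)$ is precisely the outer-limit containment $\mathrm{Ls}_j\,H_{k(j)}\subseteq H$: along any $x_j\to x_o$ with $s:=\limsup_j\phi_{k(j)}(x_j)>-\infty$, the hypograph points $(x_j,\phi_{k(j)}(x_j))$ accumulate at $(x_o,s)\in H$, forcing $s\le\phi(x_o)$. For the equality at an interior point $x_o$ of $\dom(\phi)$ only the reverse bound remains, and here concavity is essential. Choosing $y_1<x_o<y_2$ inside $\dom(\phi)$, the inner-limit property supplies points of $H_{k(j)}$ converging to $(y_i,\phi(y_i))$, so $\liminf_j\phi_{k(j)}(y_i^{(j)})\ge\phi(y_i)$ for suitable $y_i^{(j)}\to y_i$; for large $j$ any $x_j\to x_o$ lies between $y_1^{(j)}$ and $y_2^{(j)}$, whence concavity bounds $\phi_{k(j)}(x_j)$ below by the corresponding chord value. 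Letting $j\to\infty$ and then $y_1\uparrow x_o$, $y_2\downarrow x_o$, and using continuity of $\phi$ on the interior of its domain, I obtain $\liminf_j\phi_{k(j)}(x_j)\ge\phi(x_o)$, which together with the outer bound gives the claimed limit.

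The main obstacle is this interior equality: the outer-limit estimate comes essentially for free, but the matching lower bound requires converting the two-sided information at neighbouring points into pointwise control through the chord/concavity argument, and it holds only in the interior of $\dom(\phi)$, where finite concave functions are continuous — at a boundary point one genuinely retains only the one-sided inequality. A secondary technical burden is the bookkeeping with the value $-\infty$ (so that the diagonal extraction, the concavity of the limit $g$, and the properness $\phi<\infty$ all survive within $[-\infty,\infty)$), together with checking that a single subsequence realizes full Kuratowski convergence of the $H_{k(j)}$ rather than merely the outer limit.
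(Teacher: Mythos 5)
Your overall strategy---passing to hypographs, extracting a set-convergent subsequence, reading $\phi$ off the limit set, proving the outer-limit inequality directly, and upgrading to two-sided convergence at interior points of $\dom(\phi)$ by a chord/concavity argument---is sound, and the three individual arguments you sketch (outer limit, chord bound, and $(x_*,-C)\in H$ giving $\dom(\phi)\cap[a,b]\ne\emptyset$) are all correct. Note that the paper itself offers no proof of this lemma: it states it as a modification of Lemma~4.2 of \citet{Duembgen_etal_2011}, so a self-contained argument of this type is exactly what is called for.

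There is, however, one genuine gap, located precisely at the step you yourself flag as an unresolved ``secondary technical burden'': the claim that pointwise convergence of $\phi_{k(j)}$ on a countable dense set $Q$ is \emph{equivalent} to Kuratowski convergence of the hypographs $H_{k(j)}$ is false, and your construction of $H$ (hence of $\phi$) rests on it. Concretely, let $Q$ be the rationals, let $c_1\ne c_2$ be irrational points of $[a,b]$, and set $\phi_k(x):=-k|x-c_1|$ for even $k$ and $\phi_k(x):=-k|x-c_2|$ for odd $k$. Then $\phi_k\le 0=:\overline{\phi}$, $\max_{x\in[a,b]}\phi_k(x)=0$ for all $k$, and $\phi_k(q)\to-\infty$ for every $q\in Q$, so the diagonal extraction may return the full sequence; yet the hypographs oscillate between neighbourhoods of the two vertical rays $\{c_i\}\times(-\infty,0]$, their inner limit is empty, and no Kuratowski limit $H$ exists. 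Moreover, the function one would read off the dense-set limit, $\phi\equiv-\infty$, violates the first conclusion of the lemma at $x_o=c_i$, since $\limsup_{k\to\infty,\,x\to c_i}\phi_k(x)=0$ along a subsequence; so the spike \emph{must} be captured by the limit object, and the dense-set data cannot see it (the same happens already with a single fixed spike at an irrational point). The repair is standard and actually simplifies your proof: discard the dense-set step entirely and invoke sequential compactness of Painlev\'e--Kuratowski (equivalently Fell, equivalently hypo-) convergence---every sequence of closed subsets of $\R^2$ admits a Kuratowski-convergent subsequence (cf.\ Rockafellar and Wets, \emph{Variational Analysis}, Theorem~4.18). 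Applying this directly to $(H_k)_k$ (after the preliminary subsequence fixing $\max_{[a,b]}\phi_k\ge-C$ and $x_k\to x_*$) yields the limit set $H$; defining $\phi$ from $H$ as you do, the rest of your argument goes through unchanged.
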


\begin{proof}[\bf Proof of Theorem~\ref{thm:existence}]
We first consider $\ell(\cdot,0)$. According to Lemmas~\ref{lem:exotic1} and \ref{lem:exotic2} it suffices to consider data sets such that
\[
	\bigcap_{i=1}^n [L_i,R_i] \ = \ \emptyset .
\]
In other words, there exist indices $i(1), i(2) \in \{1,2,\ldots,n\}$ such that
\[
	a' := L_{i(1)} \le R_{i(1)} =: a \ < \ b := L_{i(2)} \le R_{i(2)} =: b' .
\]
Now let $(\phi_k)_k$ be a sequence in $\Phi(0)$ such that $- \infty < \ell(\phi_k,0) \to \sup_{\phi \in \Phi(0)} \ell(\phi,0)$ as $k \to \infty$. This implies that $M_k := \max_{x \in \R} \phi_k(x)$ is bounded. For if $M_k \ge \log((b - a)/2)$ and $x_k$ is a maximizer of $\phi_k$, then in case of $x_k \le (a + b)/2$,
\begin{align*}
	n \ell(\phi_k,0) \ = \
	&\sum_{i \ne i(2)}
		\Bigl( 1_{[L_i = R_i]} \phi_k(X_i)
			+ 1_{[L_i < R_i]} \log \Bigl( \int_{L_i}^{R_i} e^{\phi_k(t)} \, dt \Bigr)
				\Bigr) \\
	& + \ 1_{[b = b']} \phi_k(b)
		+ 1_{[b < b']} \log \Bigl( \int_b^{b'} e^{\phi_k(t)} \, dt \Bigr) \\
	\le \
	&(n - 1) M_k^+ \\
	& + \ 1_{[b = b']} \phi_k(b)
		+ 1_{[b < b']} \log \Bigl( \int_b^\infty e^{\phi_k(t)} \, dt \Bigr) \\
	\le \
	&(n - 1) M_k^+ + (M_k + 1)^+ - (b - x_k) e^{M_k} \\
	\le \
	&n (M_k + 1)^+ - (b - a) e^{M_k}/2
\end{align*}
by Lemma~\ref{lem:inequalities}. Analogous arguments may be applied in case of $x_k \ge (a + b)/2$. This yields both times the inequality
\[
	n \ell(\phi_k,0) \ \le \ n (M_k + 1)^+ - (b - a) e^{M_k}/2 ,
\]
and the right hand side tends to $-\infty$ as $M_k \to \infty$.

Now let $M$ be an upper bound for all maxima $M_k$. Then it follows from
\begin{align*}
	n \ell(\phi_k,0)
	\ \le \ 
	&(n - 1) M^+ \\
	& + \ 1_{[a' = a]} \phi_k(a)
		+ 1_{[a' < a]} \bigl( \max\{\phi_k(a'),\phi_k(a)\}
			+ \log(a - a') \bigr) \\
	\le \
	&(n - 1) M^+ + \max\{\phi_k(a'),\phi_k(a)\}
		+ \log(a - a')^+
\end{align*}
that $\max\{\phi_k(a'),\phi_k(a)\}$ is bounded away from $-\infty$, say,
\[
	\max\{\phi_k(a'),\phi_k(a)\} \ \ge \ m > - \infty
\]
for all $k$. But then it follows from Lemma~\ref{lem:inequalities} that
\[
	\phi_k(x)
	\ \le \ \overline{\phi}(x)
	\ := \ M - e^{m} (x - a - e^{-m})^+ - e^{m} (a' - x - e^{-m})^+
\]
for all $k$ and any $x \in \R$.

Hence we may apply Lemma~\ref{lem:compactness} to conclude that after replacing $(\phi_k)_k$ with a subsequence, if necessary, there exists a concave and upper semicontinuous function $\phi : \R \to [-\infty,\infty)$ such that
\begin{align*}
	\limsup_{k \to \infty, x' \to x} \phi_k(x') \
	&\le \ \phi(x) \quad\text{for any} \ x \in \R , \\
	\lim_{k \to \infty, x' \to x} \phi_k(x') \
	&= \ \phi(x) \quad\text{for any} \ x \in \mathrm{interior}(\dom(\phi)) .
\end{align*}
In particular, $\lim_{k \to \infty} \phi_k(x) = \phi(x)$ for all but at most two points $x \in \R$. By dominated convergence, $\int_{\R} e^{\phi(x)} \, dx = 1$ and $\int |e^{\phi_k(x)} - e^{\phi(x)}| \, dx \to 0$ as $k \to \infty$. Consequently,
\[
	\lim_{k \to \infty} \ell(\phi_k,0)
	\ \le \ \ell(\phi,0) ,
\]
i.e.\ $\phi$ is a maximizer of $\ell(\cdot,0)$ over $\Phi(0)$.

Now we consider maximization of $\ell(\cdot)$ over $\Theta$. Without loss of generality we assume that $\infty \in \{R_1,R_2,\ldots,R_n\} \ne \{\infty\}$. For if $R_i = \infty$ for all $i$, then we are in the situation of Lemma~\ref{lem:exotic1} with $\mu'' = \infty$. If $R_i < \infty$ for all $i$, then $\ell(\phi - \log(1 - q), 0) = \ell(\phi,q) - \log(1 - q)$ for arbitrary $(\phi,q) \in \Theta$, so we are again maximizing $\ell(\cdot,0)$.

Let $\bigl( (\phi_k,q_k) \bigr)_k$ be a sequence in $\Theta$ such that $-\infty < \ell(\phi_k,q_k) \to \sup_{(\phi,q) \in \Theta} \ell(\phi,q)$. In addition we may and do assume that $\lim_{k \to \infty} q_k = q_o \in [0,1]$. Again let $x_k$ be a maximizer of $\phi_k$ and set $M_k := \phi_k(x_k)$. We first show that $(M_k)_k$ may be assumed to be bounded. Note that by Lemma~\ref{lem:inequalities},
\[
	n \ell(\phi_k,q_k)
	\ \le \ \sum_{i=1}^n 1_{[R_i < \infty]} \bigl( (M_k+1)^+ - e^{M_k} \rho_{ik} \bigr)
\]
with
\[
	\rho_{ik} \ := \ \min_{y \in [L_i,R_i]} |y - x_k|
	\ = \ \begin{cases}
		0 & \text{if} \ x_k \in [L_i,R_i] , \\
		\min\{|L_i - x_k|, |R_i - x_k|\} & \text{if} \ x_k \not\in [L_i,R_i] .
	\end{cases}
\]
If $M_k \to \infty$, then $\rho_{ik} \to 0$ for all $i$ with $R_i < \infty$. This implies that
\[
	\bigcap_{i \,:\, R_i < \infty} [L_i,R_i] \ = \ [\mu',\mu'']
\]
for certain real numbers $\mu',\mu''$ with $\mu' \le \liminf_{k\to\infty} x_k \le \limsup_{k\to\infty} x_k \le \mu''$.

Suppose first that $\tilde{X}_{i_o} = \{X_{i_o}\}$ for some $i_o \in \{1,2,\ldots,n\}$. Then $\mu' = \mu'' = X_{i_o}$, and it follows from Lemma~\ref{lem:exotic3} that no maximizer of $\ell(\cdot)$ exists.

If there are no uncensored observations, then for each $i \in \{1,2,\ldots,n\}$ either $L_i < \mu''$ or $\mu'' \le L_i < R_i \le \infty$. Now we define
\begin{align*}
	a \
	&:= \ \max \{L_i : L_i < \mu''\}
		\ \in \ (-\infty, \mu'') , \\
	b \
	&:= \ \min \bigl\{ \{\mu'' + 1\} \cup \{L_i : L_i > \mu''\} \bigr)
		\ \in \ (\mu'',\infty) ,
\end{align*}
and
\[
	p_{k\ell}
	\ := \ \int_{-\infty}^{\mu''} e^{\phi(t)} \, dt ,
	\quad
	p_{kr}
	\ := \ \int_{\mu''}^{\infty} e^{\phi(t)} \, dt ,
\]
so $p_{k\ell} + p_{kr} + q_k = 1$. Then
\begin{align*}
	n \ell(\phi_k,q_k) \
	&\le \ \#\{i : L_i < R_i = \mu''\} \log(p_{k\ell}) \\
	& + \ \#\{i : L_i < \mu'' < R_i < \infty\} \log(1 - q_k) \\
	& + \ \#\{i : L_i = \mu'' < R_i < \infty\} \log(p_{kr}) \\
	& + \ \#\{i : L_i = \mu'' < R_i = \infty\} \log(p_{kr} + q_k) \\
	& + \ \#\{i : L_i > \mu''\} \log \bigl( \exp(- e^{M_k} (b - x_k)) + q_k \bigr) .
\end{align*}
This implies that $\liminf_{k \to \infty} q_k > 0$ if $L_i > \mu''$ for some $i$, because $\lim_{k \to \infty} e^{M_k} (b - x_k) = \infty$. Thus we may conclude that
\begin{align*}
	n \ell(\phi_k,q_k) \
	&\le \ \#\{i : L_i < R_i = \mu''\} \log(p_{k\ell}) \\
	& + \ \#\{i : L_i < \mu'' < R_i < \infty\} \log(1 - q_k) \\
	& + \ \#\{i : L_i = \mu'' < R_i < \infty\} \log(p_{kr}) \\
	& + \ \#\{i : L_i = \mu'' < R_i = \infty\} \log(p_{kr} + q_k) \\
	& + \ \#\{i : L_i > \mu''\} \log(q_k) + o(1) \\
	& = \ n \ell(\tilde{\phi}_k,q_k) + o(1) ,
\end{align*}
where
\[
	\tilde{\phi}_k(t) \ := \ \begin{cases}
		- \log(\min\{\mu''-a,b-\mu''\}) - \gamma_{k\ell} (\mu'' - t)
		& \text{for} \ t \in [a,\mu''] , \\
		- \log(\min\{\mu''-a,b-\mu''\}) - \gamma_{kr} (t - \mu'')
		& \text{for} \ t \in [\mu'',b] , \\
		- \infty
		& \text{for} \ t \in \R \setminus [a,b] ,
	\end{cases}
\]
and $\gamma_{k\ell}, \gamma_{kr} \ge 0$ are chosen such that $\int_{a}^{\mu''} e^{\tilde{\phi}_k(t)} \, dt = p_{k\ell}$ and $\int_{\mu''}^b e^{\tilde{\phi}_k(t)} \, dt = p_{kr}$.

The previous considerations show that, after replacing $(\phi_k)_k$ with a surrogate sequence if necessary, we may assume that $\phi_k \le M$ for all $k$ and some real constant $M$. Next we show that the limit $q_o$ of $(q_k)_k$ is strictly smaller than one. Note that
\[
	n \ell(\phi_k,q_k)
	\ \le \ \# \{i : L_i = R_i\} M + \# \{i : L_i < R_i < \infty\} \log(1 - q_k) ,
\]
so $q_o = 1$ would imply that each observation has to be uncensored or of the form $(L_i,\infty]$. If all uncensored observations would be identical, we could conclude from Lemma~\ref{lem:exotic3} that there exists no maximizer of $\ell(\dot)$. If $a := L_{i(1)} = R_{i(1)} < b := L_{i(2)} = R_{i(2)}$ for certain indices $i(1), i(2) \in \{1,2,\ldots,n\}$, then
\[
	n \ell(\phi_k,q_k)
	\ \le \ (n-1) M + \min\{\phi_k(a),\phi_k(b)\} .
\]
Hence $\min\{\phi_k(a),\phi_k(b)\} \ge m$ for all $k$ and a certain number $m > -\infty$. But then $q_k \le 1 - \int_a^b e^{\phi_k(t)} \, dt \le 1 - (b - a) e^m$, a contradiction to $\lim_{k \to \infty} q_k = 1$.

Thus we may assume that $\phi_k \le M$ for all $k$ and $\lim_{k \to \infty} q_k = q_o \in [0,1)$. Let $[a,b] := [L_{i_o},R_{i_o}]$ for some $i_o$ with $R_{i_o} < \infty$. Then
\[
	n \ell(\phi_k,q_k)
	\ \le \ (n - 1) M^+ + \max_{x \in [a,b]} \phi_k(x) + 1_{[a < b]} \log(b - a) .
\]
Consequently, $\max_{x \in [a,b]} \phi_k(x) \ge m$ for all $k$ and some real number $m$. Hence Lemma~\ref{lem:inequalities} implies that
\[
	\phi_k(x)
	\ \le \ \overline{\phi}(x)
	\ := \ M - e^m (x - b - e^{-m})^+ - e^m (a - x - e^{-m})^+ .
\]
Again we may apply Lemma~\ref{lem:compactness} and dominated convergence to conclude that there exists a function $\phi \in \Phi(q_o)$ such that $\limsup_{k \to \infty} \ell(\phi_k,q_k) \le \ell(\phi,q_o)$.
\end{proof}


\begin{proof}[\bf Proof of Lemma~\ref{lem:exotic1}]
Note first that all observations satisfy $L_i \le \mu' < \mu'' \le R_i$. Hence
\[
	\ell(\phi,q)
	\ = \ \frac{1}{n} \sum_{i=1}^n \log P_{\phi,q}((L_i,R_i])
	\ \le \ 0
\]
with equality if, and only if, $P_{\phi,q}((L_i,R_i]) = 1$ for $1 \le i \le n$. But this is easily shown to be equivalent to $P_{\phi,q}((\mu',\mu'']) = 1$.
\end{proof}

\begin{proof}[\bf Proof of Lemma~\ref{lem:exotic2}]
Suppose first that $\{\mu\} \subset [L_i,R_i]$ for all indices $i$ with at least one equality. For $\eps > 0$,
\[
	\phi_\eps(x) \ := \ - \log(2\eps) - |x - \mu|/\eps
\]
defines a log-density in $\Phi(0)$ such that
\begin{align*}
	\lim_{\eps \downarrow 0} \, \phi_\eps(\mu)
	\ &= \ \infty , \\
	\lim_{\eps \downarrow 0} \int_{L_i}^{R_i} e_{}^{\phi_\eps(x)} \, dx
	\ &= \ \begin{cases}
			1/2 & \text{if} \ L_i < \mu = R_i \ \text{or} \ L_i = \mu < R_i , \\
			1   & \text{if} \ L_i < \mu < R_i .
		\end{cases}
\end{align*}
Hence $\ell(\phi_\eps) \to \infty$ as $\eps \downarrow 0$ which implies the first assertion.

If $\bigcap_{i=1}^n [L_i,R_i] = \{\mu\}$ but $[L_i,R_i] \ne \{\mu\}$ for all indices $i$, then
\[
	\ell(\phi,q)
	\ = \ \frac{1}{n} \sum_{i=1}^n \log P((L_i,R_i])
\]
with $P := P_{\phi,q}$. But
\[
	P((L_i,R_i])
	\ \le \ \begin{cases}
		P((-\infty,\mu]) & \text{if} \ L_i < \mu = R_i \\
		1 & \text{if} \ L_i < \mu < R_i \\
		P((\mu,\infty]) & \text{if} \ L_i = \mu < R_i
	\end{cases}
\]
with equality if, and only if, $P((-\infty,a]) = P((b,\infty]) = 0$. Thus
\[
	\ell(\phi,q)
	\ \le \ \frac{n_\ell}{n} \log P((-\infty,\mu])
		+ \frac{n_r}{n} \log P((\mu,\infty])
\]
with equality if, and only if, $P((-\infty,\mu]) = P((a,\mu])$ and $P((\mu,\infty]) = P((\mu,b])$. Writing $x := P((\mu,\infty]) \in [0,1]$, we end up with the upper bound
\[
	\ell(\phi,q) \ \le \ \frac{n_\ell}{n} \log(1 - x) + \frac{n_r}{n} \log(x) .
\]
Finally, this bound becomes maximal if, and only if, $x = n_r/(n_\ell + n_r)$.
\end{proof}

\begin{proof}[\bf Proof of Lemma~\ref{lem:exotic3}]
We fix an arbitrary value $q \in (0,1)$ for $P(\{\infty\})$. Then
\[
	\phi_\eps(x) \ := \ \log(1 - q) - \log(2\eps) - |x - \mu|/\eps
\]
defines a function in $\Phi(q)$ such that $\ell(\phi_\eps,q) \to \infty$ as $\eps \to 0$, because $\lim_{\eps \to 0} \phi_{\eps}(\mu) = \infty$ while $\liminf_{\eps \to 0} P_{\phi_\eps,q}((a,b]) \ge \min\{(1-q)/2,q\}$, whenever $a < b$ and $\mu \in [a,b]$ or $b = \infty$.
\end{proof}

\begin{proof}[\bf Proof of Lemma~\ref{lem:basic}]
Let $(\phi,q) \in \Theta$ such that $\ell(\phi,q) > -\infty$ and $\delta := P_{\phi,q} \bigl( (-\infty,\infty] \setminus [a,b] \bigr) > 0$. If $\delta < 1$, then
\begin{align*}
	\tilde{\phi}(x) \
		&:= \ \begin{cases}
			\phi(x) - \log(1 - \delta) & \text{for} \ x \in [a,b] \\
			-\infty & \text{for} \ x \not\in [a,b]
			\end{cases} \\
	\tilde{q} \
		&:= \ q / (1 - \delta)
\end{align*}
defines a new pair $(\tilde{\phi},\tilde{q}) \in \Theta$ such that $\ell(\tilde{\phi},\tilde{q}) = \ell(\phi,q) - \log(1 - \delta)$ and $\dom(\tilde{\phi}) \subset [a,b]$.

If $\delta = 1$, then we would be in one of the following two situations:

\noindent
Situation~1: $a \in \dom(\phi) \subset (-\infty,a]$, and all observations are equal to $\{a\}$ or contain $\infty$. If all observations are equal to $\{a\}$, then Lemma~\ref{lem:exotic2} would apply and exclude the existence of $\hat{\phi}_0$ or $(\hat{\phi},\hat{q})$. If at least one observations contains $\infty$, then $\ell(\phi,0) = -\infty$, and Lemma~\ref{lem:exotic3} would exclude the existence of $(\hat{\phi},\hat{q})$.

\noindent
Situation~2: $b \in \dom(\phi) \subset [b,\infty)$, and all observations are equal to $\{b\}$. Here Lemma~\ref{lem:exotic2} would exclude the existence of $\hat{\phi}_0$ or $(\hat{\phi},\hat{q})$.
\end{proof}

Our proof of Theorem~\ref{thm:shape} is based on the following two results:

\begin{Lemma}
\label{lem:linearize.phi.1}
Let $a < b < c$ be real numbers and $\phi : [a,c] \to \R$ continuous and concave. Then there exist real numbers
\[
	\gamma_\ell \ \in \ \Bigl[ \frac{\phi(b) - \phi(a)}{b - a}, \phi'(a\,+) \Bigr]
	\quad\text{and}\quad
	\gamma_r \ \in \ \Bigl[ \phi'(c\,-), \frac{\phi(c) - \phi(b)}{c - b} \Bigr]
\]
such that
\[
	\tilde{\phi}(t)
	\ := \
	\min \bigl\{ \phi(a) + \gamma_\ell (t - a),
	             \phi(c) + \gamma_r (t - c) \bigr\}
\]
satisfies
\[
	\int_a^b e^{\tilde{\phi}(t)} \, dt = \int_a^b e^{\phi(t)} \, dt
	\quad\text{and}\quad
	\int_b^c e^{\tilde{\phi}(t)} \, dt = \int_b^c e^{\phi(t)} \, dt .
\]
\end{Lemma}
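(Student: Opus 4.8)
The plan is to treat the two integral identities as a pair of equations in the slope parameters $(\gamma_\ell,\gamma_r)$ and solve them by a nested intermediate-value argument. Write $L_\ell(t) := \phi(a) + \gamma_\ell(t-a)$ and $L_r(t) := \phi(c) + \gamma_r(t-c)$, so that $\til\phi = \min\{L_\ell,L_r\}$, and let $I_\ell := [\,(\phi(b)-\phi(a))/(b-a),\,\phi'(a+)\,]$ and $I_r := [\,\phi'(c-),\,(\phi(c)-\phi(b))/(c-b)\,]$ denote the target ranges. Concavity makes both nonempty, and in fact $\gamma_\ell \ge \gamma_r$ for every $(\gamma_\ell,\gamma_r) \in I_\ell\times I_r$, so the two lines cross at most once and $\til\phi$ is a genuine tent with $\til\phi(a)=\phi(a)$ and $\til\phi(c)=\phi(c)$. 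Define $F(\gamma_\ell,\gamma_r) := \int_a^b e^{\til\phi} - \int_a^b e^\phi$ and $G(\gamma_\ell,\gamma_r) := \int_b^c e^{\til\phi} - \int_b^c e^\phi$; the goal is a common zero in $I_\ell \times I_r$. Both $F$ and $G$ are continuous, nondecreasing in $\gamma_\ell$ and nonincreasing in $\gamma_r$, because raising $\gamma_\ell$ raises $L_\ell$ on $(a,c]$ and raising $\gamma_r$ lowers $L_r$ on $[a,c)$, and $\til\phi$ moves monotonically with each branch.

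The two facts I would establish first are: (A) for every $\gamma_\ell \in I_\ell$ one has $L_\ell \ge \phi$ on $[b,c]$; and (B) for every $\gamma_r \in I_r$ one has $L_r \ge \phi$ on $[a,b]$. For (A), the bound $\gamma_\ell \ge (\phi(b)-\phi(a))/(b-a)$ forces $L_\ell(b) \ge \phi(b)$, while $\gamma_\ell \ge \phi'(b+) \ge \phi'(t)$ for $t \ge b$ makes $L_\ell - \phi$ nondecreasing on $[b,c]$, so it stays $\ge 0$; fact (B) is the mirror statement. These say that each line dominates $\phi$ on the opposite subinterval throughout the entire parameter box, and they are exactly what tames the minimum.

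With (A) and (B) in hand I would solve $F = 0$ in the variable $\gamma_\ell$ first. At the lower end $\gamma_\ell = (\phi(b)-\phi(a))/(b-a)$ the left line is the chord over $[a,b]$, so $\til\phi \le L_\ell \le \phi$ there and $F \le 0$; at the upper end $\gamma_\ell = \phi'(a+)$ the tangent gives $L_\ell \ge \phi$ on $[a,b]$, and by (B) also $L_r \ge \phi$ on $[a,b]$, whence $\til\phi \ge \phi$ and $F \ge 0$. Since $F$ is continuous and strictly increasing in $\gamma_\ell$ (strict because $L_\ell$ is the active branch on a right-neighbourhood of $a$, as $L_\ell(a)=\phi(a)\le L_r(a)$ by (B)), for each fixed $\gamma_r \in I_r$ there is a unique $\gamma_\ell = \Gamma(\gamma_r) \in I_\ell$ with $F(\Gamma(\gamma_r),\gamma_r)=0$, and the monotonicity of $F$ renders $\Gamma$ continuous and nondecreasing.

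It then remains to choose $\gamma_r$ so that $g(\gamma_r) := G(\Gamma(\gamma_r),\gamma_r)=0$. At the lower end $\gamma_r = \phi'(c-)$ the right line is tangent at $c$, so $L_r \ge \phi$ on $[b,c]$, and by (A) also $L_\ell \ge \phi$ there, giving $\til\phi \ge \phi$ on $[b,c]$ and $g(\phi'(c-)) \ge 0$; at the upper end $\gamma_r = (\phi(c)-\phi(b))/(c-b)$ the right line is the chord over $[b,c]$, so $\til\phi \le L_r \le \phi$ there and $g\big((\phi(c)-\phi(b))/(c-b)\big) \le 0$. By continuity of $g$ and the intermediate value theorem there is $\gamma_r^* \in I_r$ with $g(\gamma_r^*)=0$, and then $(\Gamma(\gamma_r^*),\gamma_r^*)$ satisfies both integral identities with slopes in the prescribed ranges. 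The main obstacle is exactly the nonlinearity of the minimum: a priori $\til\phi$ could dip below $\phi$ on the ``wrong'' subinterval and spoil the signs, and it is the uniform domination facts (A) and (B) that rule this out and thereby fix the correct signs at all four faces used in the two intermediate value arguments. The degenerate cases $\phi'(a+)=+\infty$ or $\phi'(c-)=-\infty$ cause no difficulty: there one observes that $F \to +\infty$ as $\gamma_\ell \to \infty$ and that $g$ has a nonnegative limit as $\gamma_r$ decreases, so the same arguments apply and the resulting slopes are finite.
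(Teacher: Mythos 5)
Your core argument is correct and takes a genuinely different route from the paper's. The paper restricts first to the one-parameter family of triangular functions with vertices $(a,\phi(a))$, $(b,y)$, $(c,\phi(c))$, $y \ge \phi(b)$, and chooses $y$ maximal subject to \emph{both} integrals of $e^{\tilde\phi}$ being at most those of $e^\phi$; this makes at least one constraint tight, and the remaining slack is removed by moving a single slope ($\gamma_\ell$ upward, or $\gamma_r$ downward). What makes this work is a decoupling: once the peak sits at $b$, raising $\gamma_\ell$ pushes the crossing point of the two lines to the left of $b$ and hence does not change $\tilde\phi$ on $[b,c]$ (and symmetrically), so each adjustment affects only ``its'' integral. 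You instead attack the $2\times 2$ system in $(\gamma_\ell,\gamma_r)$ directly by a nested intermediate-value argument, with your facts (A) and (B) supplying the signs on the faces of the parameter box. This is heavier (it requires the implicit function $\Gamma$ and its continuity) but valid, and arguably more robust since it never uses the decoupling.

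Two repairs are needed, both in side cases. First, a minor one: strict monotonicity of $F$ in $\gamma_\ell$ needs $L_\ell(a) < L_r(a)$, not just $\le$. If $L_\ell(a) = L_r(a)$ then, since $\gamma_\ell \ge \gamma_r$ throughout the box, $\tilde\phi \equiv L_r$ on $[a,c]$ and $F$ is \emph{constant} in $\gamma_\ell$. However, $L_r(a) = \phi(a)$ forces $\gamma_r = (\phi(c)-\phi(a))/(c-a)$, which lies in $I_r$ only when $\phi$ is linear on $[a,c]$, a case where the lemma is trivial; so you need only dispose of that case separately. Second, and more substantively, your treatment of the degenerate cases is wrong as written. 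If $\phi'(a\,+) = +\infty$, then as $\gamma_\ell \to \infty$ one has $\min\{L_\ell,L_r\} \to L_r$ pointwise on $(a,b]$, so $F$ tends to the \emph{finite} limit $\int_a^b e^{L_r(t)}\,dt - \int_a^b e^{\phi(t)}\,dt$, not to $+\infty$. Similarly, when $\phi'(c\,-) = -\infty$, a merely nonnegative limit of $g$ as $\gamma_r \to -\infty$ does not suffice: if $g < 0$ for every finite $\gamma_r$ while $g \to 0$, the intermediate value theorem yields no root. Both cases are saved by showing the limits are \emph{strictly} positive: equality in the limit would force $\phi$ to coincide with a line on $[a,b]$ (respectively on $[b,c]$, via the one-crossing property of $L_\ell - \phi$ as in part (iii) of Lemma~\ref{lem:linearize.phi.2}), hence to have a finite one-sided derivative there, contradicting the assumed infinite derivative. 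With that observation inserted, your proof goes through in full.
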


\begin{Lemma}
\label{lem:linearize.phi.2}
Let $-\infty < a < c \le \infty$ and $B := [a,c] \cap \R$. Further let $\phi : B \to [-\infty,\infty)$ be concave and upper semicontinuous such that $\phi(a) > - \infty$ and $0 < \int_B e^{\phi(x)} \, dx < \infty$.

\noindent
\textbf{(i)} \ Let $\gamma$ be the unique real number such that $\tilde{\phi}(x) := \phi(a) + \gamma (x - a)$ satisfies the equation $\int_B e^{\tilde{\phi}(x)} \, dx = \int_B e^{\phi(x)} \, dx$. Then $\tilde{\phi}(a) = \phi(a)$, $\gamma \le \phi'(a \, +)$, and $\tilde{\phi}(c) \ge \phi(c)$ in case of $c < \infty$. The latter two inequalities are strict unless $\tilde{\phi} \equiv \phi$.

\noindent
\textbf{(ii)} \ Suppose that $c < \infty$ and $\phi(c) > - \infty$, $\phi'(c\,-) > -\infty$. Let $\gamma$ be the unique real number such that
\[
	\tilde{\phi}(x)
	\ := \ \min \bigl\{ \phi(a) + \gamma (x - a),
		\phi(c) + \phi'(c\,-) (x - c) \bigr\}
\]
satisfies the equation $\int_B e^{\tilde{\phi}(x)} \, dx = \int_B e^{\phi(x)} \, dx$. Then $\tilde{\phi} = \phi$ on $\{a,c\}$ and $(\phi(c) - \phi(a))/(c - a) \le \gamma \le \phi'(a\,+)$.

\noindent
\textbf{(iii)} \ The function $\tilde{\phi}$ in part (i) and (ii) satisfies
\[
	\int_b^c e^{\tilde{\phi}(x)} \, dx \ \ge \ \int_b^c e^{\phi(x)} \, dx
	\quad\text{for} \ a < b < c .
\]
\end{Lemma}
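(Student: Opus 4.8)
The plan is to handle the three parts in order, the unifying tool being that the difference $d := \tilde{\phi} - \phi$ of a (piecewise) linear function and a concave function has controlled convexity and hence changes sign at most once on $[a,c]$.

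\textbf{Part (i).} First I would settle existence and uniqueness of $\gamma$: the map $\gamma \mapsto \int_B e^{\phi(a) + \gamma(x-a)}\,dx$ equals $e^{\phi(a)}(e^{\gamma(c-a)}-1)/\gamma$ for $c<\infty$ (and $e^{\phi(a)}/(-\gamma)$ for $c=\infty$, $\gamma<0$), which is continuous, strictly increasing, and exhausts $(0,\infty)$; hence exactly one $\gamma$ matches the prescribed value $\int_B e^{\phi}$. Since $\tilde{\phi}(a)=\phi(a)$ is built into the definition, it remains to control $\gamma$. The key observation is that $d=\tilde{\phi}-\phi$ is \emph{convex} (a linear function minus a concave one), with $d(a)=0$ and $d'(a+)=\gamma-\phi'(a+)$. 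If $\gamma>\phi'(a+)$, then $d'(a+)>0$, so convexity forces $d>0$ on $(a,c]$ and thus $\int_B e^{\tilde{\phi}}>\int_B e^{\phi}$, a contradiction; the same reasoning with $\gamma=\phi'(a+)$ gives $d\ge 0$ with strict inequality somewhere unless $\phi$ is already linear. This yields $\gamma\le\phi'(a+)$, strictly unless $\tilde{\phi}\equiv\phi$. For $c<\infty$ the claim $\tilde{\phi}(c)\ge\phi(c)$ then follows because a convex $d$ with $d(a)=0$, $d'(a+)\le 0$ and $\int_B(e^{\tilde{\phi}}-e^{\phi})\,dx=0$ cannot be $\le 0$ throughout (else $d\equiv 0$); so $d$ has exactly one further zero $x_0\in(a,c)$, being negative on $(a,x_0)$ and positive on $(x_0,c]$, whence $d(c)>0$.

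\textbf{Part (ii).} Now $\tilde{\phi}$ is the minimum of the line through $(a,\phi(a))$ of slope $\gamma$ and the tangent line of $\phi$ at $c$, so it is concave and piecewise linear with one kink. Monotonicity of $\gamma\mapsto\int_B e^{\tilde{\phi}}\,dx$ again gives a unique matching $\gamma$ (the nonlinear case being the substantive one; for linear $\phi$ the two lines coincide and all bounds become equalities). That $\tilde{\phi}(a)=\phi(a)$ is immediate from concavity, since the tangent at $c$ lies above $\phi$, in particular at $a$. For the lower bound I would argue by contradiction: if $\gamma<(\phi(c)-\phi(a))/(c-a)$, then the left line lies below the right line at both endpoints, hence everywhere, so $\tilde{\phi}$ reduces to a single line and part~(i) applies, forcing $\tilde{\phi}(c)\ge\phi(c)$ --- contradicting $\tilde{\phi}(c)=\phi(a)+\gamma(c-a)<\phi(c)$. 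Thus $\gamma\ge(\phi(c)-\phi(a))/(c-a)$, which is exactly the condition making the tangent line active at $c$, so $\tilde{\phi}(c)=\phi(c)$. For the upper bound I restrict to the left linear piece $[a,x_0]$, where $d$ is again convex with $d(a)=0$; were $\gamma>\phi'(a+)$, then $d'(a+)>0$ would give $d>0$ on $(a,x_0]$, while on the right piece the tangent-at-$c$ line lies above $\phi$, giving $d\ge 0$ there too, so $\int_B(e^{\tilde{\phi}}-e^{\phi})\,dx>0$, a contradiction; hence $\gamma\le\phi'(a+)$.

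\textbf{Part (iii).} The point is that in both constructions $d=\tilde{\phi}-\phi$ changes sign at most once, from nonpositive to nonnegative: in (i) this is the negative-then-positive pattern found above, and in (ii) one checks $d\ge 0$ on the right piece (tangent above $\phi$), while on the left piece $d$ is convex with $d(a)=0$ and $d'(a+)\le 0$, so it is nonpositive then nonnegative there as well. Writing $x_1$ for the sign-change point and using $\int_a^c(e^{\tilde{\phi}}-e^{\phi})\,dx=0$, I would observe that for $b\le x_1$ the integrand is $\le 0$ on $(a,b)$, so $\int_a^b\le 0$ and hence $\int_b^c\ge 0$; while for $b>x_1$ the integrand is $\ge 0$ on $(b,c)$ directly. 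Either way $\int_b^c e^{\tilde{\phi}}\,dx\ge\int_b^c e^{\phi}\,dx$.

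The main obstacle I anticipate is part~(ii): because $\tilde{\phi}$ is a minimum of two lines, $d$ is no longer globally convex, so the clean single-crossing argument of part~(i) must be replaced by a piecewise analysis, and the lower bound on $\gamma$ has to be extracted indirectly by collapsing $\tilde{\phi}$ to one line and invoking part~(i). Verifying that the intended piece is active at each endpoint --- and hence that $\tilde{\phi}$ matches $\phi$ at both $a$ and $c$ --- is the delicate bookkeeping step on which everything else rests.
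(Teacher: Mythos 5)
Your proof is correct and takes essentially the same route as the paper: both arguments rest on the single-crossing property of $d = \tilde{\phi} - \phi$ (nonpositive up to some point $b_o$, nonnegative thereafter) and deduce part~(iii) by the identical mass-balance bookkeeping, namely $\int_b^c e^{\tilde{\phi}} = D - \int_a^b e^{\tilde{\phi}} \ge D - \int_a^b e^{\phi} = \int_b^c e^{\phi}$ for $b$ left of the crossing. The only difference is one of detail: where the paper dismisses existence, uniqueness, the slope bounds and the sign pattern as ``elementary considerations,'' you actually supply them via convexity of $d$ on the linear pieces of $\tilde{\phi}$, which is a faithful filling-in rather than a different method.
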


Lemmas~\ref{lem:linearize.phi.1} and \ref{lem:linearize.phi.2} are illustrated in Figures~\ref{fig:linearize.phi.1} and \ref{fig:linearize.phi.2}, respectively. In both cases one sees a strictly concave and continuous function $\phi : [a,c] \to \R$, the points $a$ and $c$ being indicated by vertical lines, and the respective surrogate functions $\tilde{\phi}$.

\begin{figure}[h]
\centering
\includegraphics[width=0.7\textwidth]{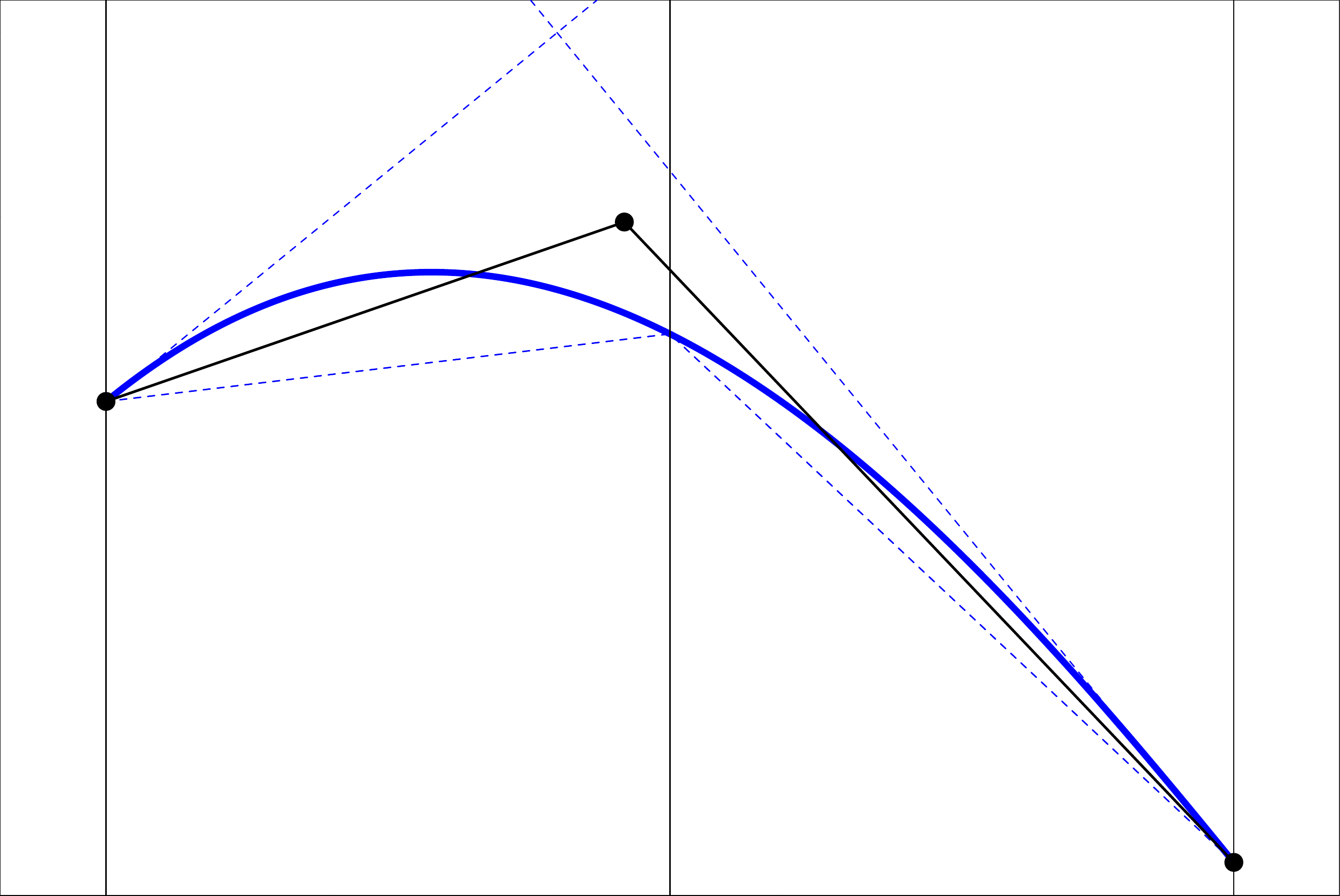}
\caption{Illustration of Lemma~\ref{lem:linearize.phi.1}.}
\label{fig:linearize.phi.1}
\end{figure}

\begin{figure}[h]
\includegraphics[width=0.49\textwidth]{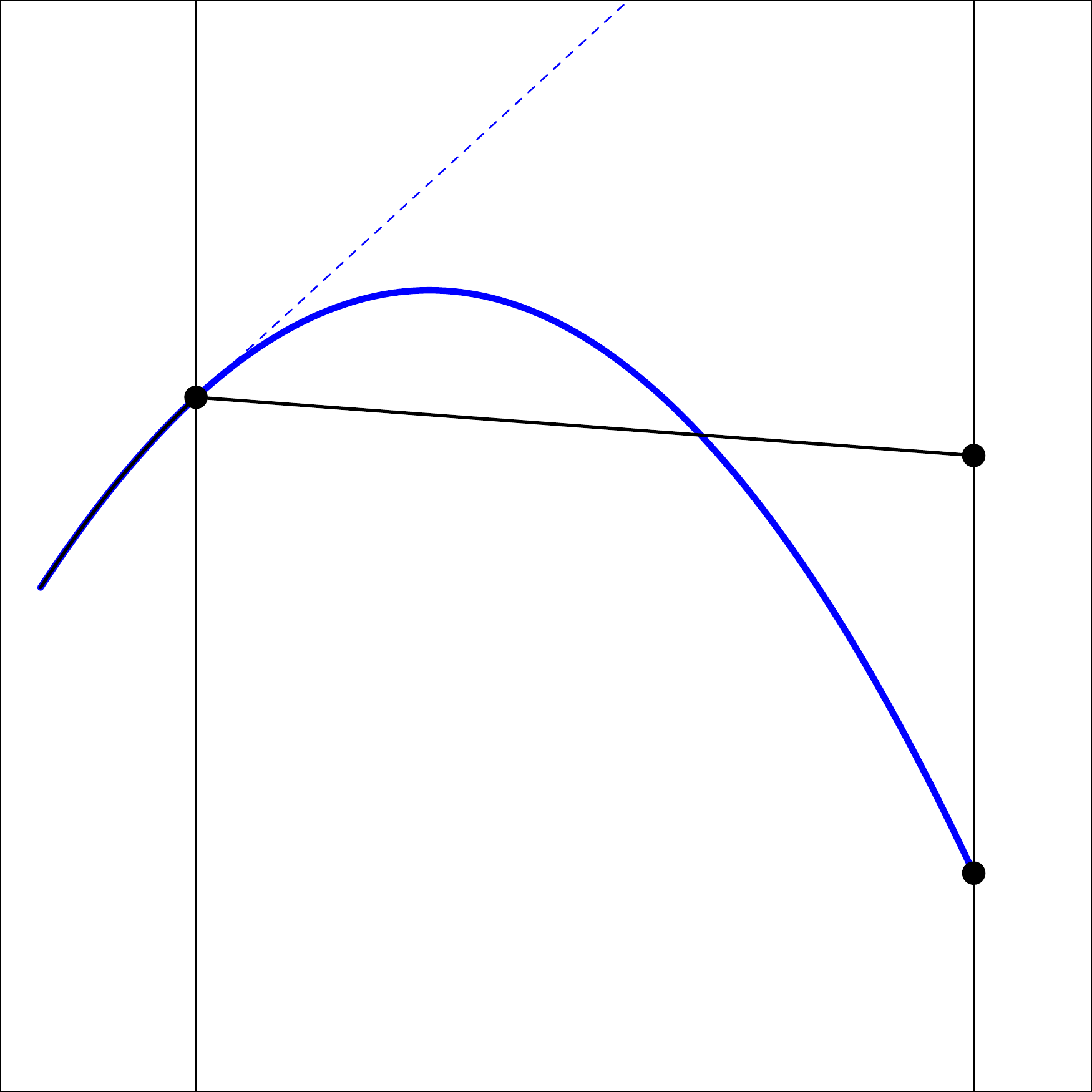}
\hfill
\includegraphics[width=0.49\textwidth]{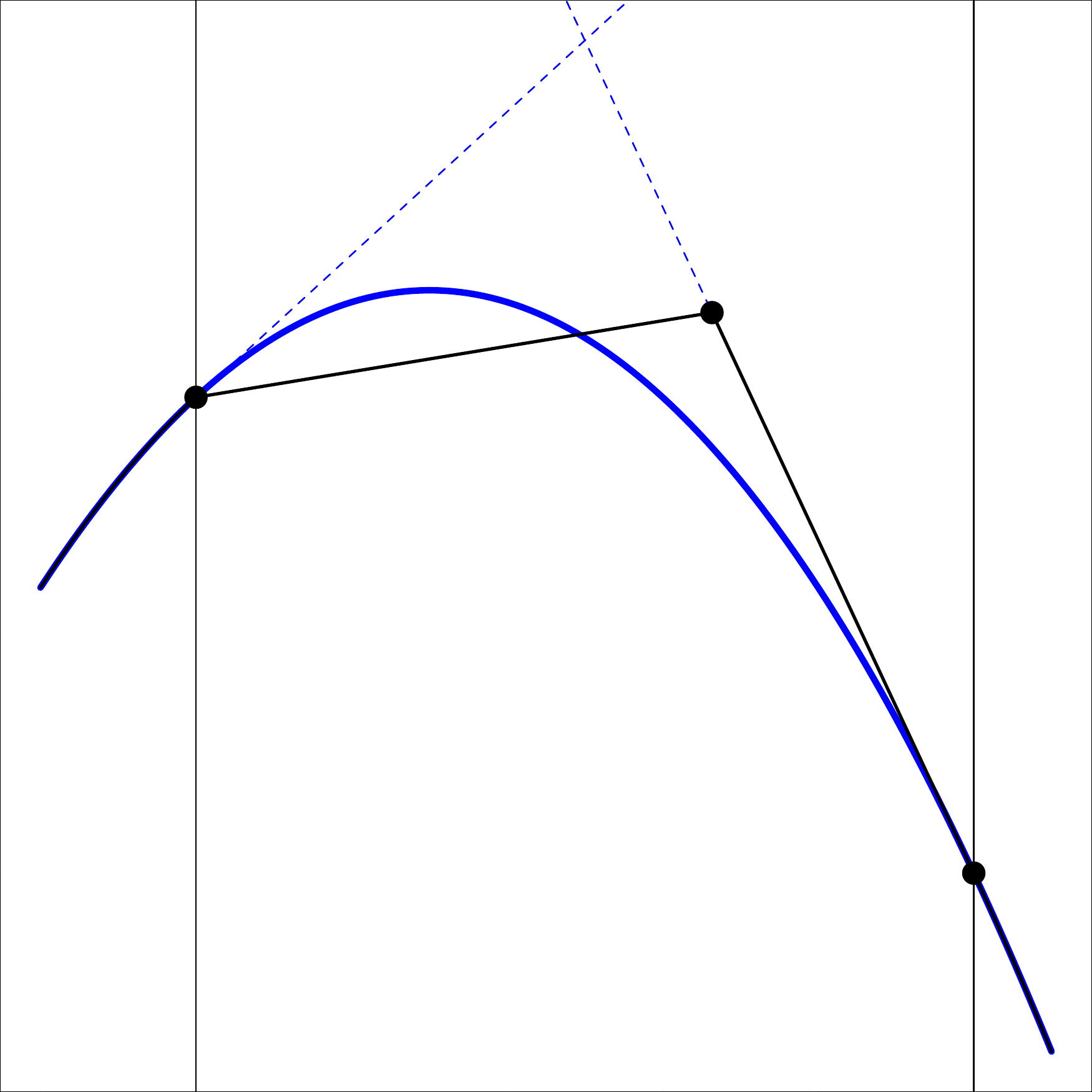}
\caption{Illustration of Lemma~\ref{lem:linearize.phi.2}, part~(i) on the left and part~(ii) on the right hand side.}
\label{fig:linearize.phi.2}
\end{figure}

\begin{proof}[\bf Proof of Lemma~\ref{lem:linearize.phi.1}]
Let
\[
	\tilde{\phi}(t)
	\ := \
	\min \bigl\{ \phi(a) + \gamma_\ell (t - a),
	             \phi(c) + \gamma_r (t - c) \bigr\}
\]
with certain constants $\gamma_\ell \ge (\phi(c) - \phi(a))/(c - a) \ge \gamma_r$ yet to be specified. This is done in two steps. First let
\[
	\gamma_\ell \ := \ \frac{y - \phi(a)}{b - a}
	\quad\text{and}\quad
	\gamma_r \ := \ \frac{\phi(c) - y}{c - b}
\]
for some real number $y \ge \phi(b)$. That means, $\tilde{\phi}$ is a triangular function connecting the points $(a,\phi(a))$, $(b,y)$ and $(c,\phi(c))$. Now we choose $y$ as large as possible such that still
\begin{align}
\label{eq:lin.phi.1.left}
	\int_a^b e^{\tilde{\phi}(t)} \, dt \
	&\le \ \int_a^b e^{\phi(t)} \, dt , \\
\label{eq:lin.phi.1.right}
	\int_b^c e^{\tilde{\phi}(t)} \, dt \
	&\le \ \int_b^c e^{\phi(t)} \, dt .
\end{align}
This means, at least one of the former two inequalities is an equality. It follows from $y \ge \phi(b)$ that $\gamma_\ell \ge (\phi(b) - \phi(a))/(b - a)$ and $\gamma_r \le (\phi(c) - \phi(b))/(c - b)$.

Now comes the second step: If \eqref{eq:lin.phi.1.left} is strict, we replace the current slope $\gamma_\ell$ by a larger value such that \eqref{eq:lin.phi.1.left} becomes an equality. Likewise, if \eqref{eq:lin.phi.1.right} is strict, we replace the current slope $\gamma_r$ by a smaller value such that \eqref{eq:lin.phi.1.left} becomes an equality. One can easily verify that $\gamma_\ell \le \phi'(a\,+)$ and $\gamma_r \ge \phi'(c\,-)$.
\end{proof}

\begin{proof}[\bf Proof of Lemma~\ref{lem:linearize.phi.2}]
Existence and uniqueness of the surrogate function $\tilde{\phi}$ follow from elementary considerations in both scenarios (i) and (ii). One can also verify easily that either $\tilde{\phi} \equiv \phi$, or $\gamma < \phi'(a\,+)$ and there exists a number $b_o \in (a,c)$ such that
\[
	\tilde{\phi} \ \begin{cases}
		\le \ \phi & \text{on} \ (a,b_o) , \\
		\ge \ \phi & \text{on} \ (b_o,c] \cap \R .
	\end{cases}
\]
The latter conditions imply the inequalities of part~(iii). For if $b \in [b_o,c)$, then the inequality $\int_b^c e^{\tilde{\phi}(x)} \, dx \ge \int_b^c e^{\phi(x)} \, dx$ is obvious. If $b \in (a,b_o]$, then
\[
	\int_b^c e^{\tilde{\phi}(x)} \, dx
	\ = \ D - \int_a^b e^{\tilde{\phi}(x)} \, dx
	\ \ge \ D - \int_a^b e^{\phi(x)} \, dx
	\ = \ \int_b^c e^{\phi(x)} \, dx ,
\]
where $D := \int_a^c e^{\tilde{\phi}(x)} \, dx = \int_a^c e^{\phi(x)} \, dx$.
\end{proof}

\begin{proof}[\bf Proof of Theorem~\ref{thm:shape}]
By means of Lemma~\ref{lem:linearize.phi.2}, applied to $\phi$ or $\phi(- \, \cdot)$, we will first construct a concave function $\tilde{\phi}$ with $\dom(\phi) \subset \dom(\tilde{\phi}) \in [\tau_1,\infty)$ such that $\tilde{\phi} \ge \phi$ on $\{\tau_1,\tau_2,\ldots,\tau_m\}$ and $\int_{\tau_j}^{\tau_{j+1}} e^{\tilde{\phi}(t)} \, dt = \int_{\tau_j}^{\tau_{j+1}} e^{\phi(t)} \, dt$ for $1 \le j \le m$. 

Precisely, let $j \in \{1,2,\ldots,m\}$. If $\dom(\phi) \cap (\tau_j,\tau_{j+1}) = \emptyset$, then we set $\tilde{\phi} := - \infty$ on $(\tau_j,\tau_{j+1})$. If $\dom(\phi) \cap (\tau_j,\tau_{j+1}) \ne \emptyset$ but $\phi(\tau_j) = \phi(\tau_{j+1}) = -\infty$, then $\dom(\phi) \subset (\tau_j,\tau_{j+1})$, and for $x \in [\tau_j,\tau_{j+1}] \cap \R$ we may define $\tilde{\phi}(x) := - \log(\tau_{j+1} - \tau_j)$ if $\tau_j < \infty$ or $\tilde{\phi}(x) = - (x - \tau_m)$ if $j = m$.

Suppose that $\tau_j \in \dom(\phi)$ but either $\tau_{j+1} \not\in \dom(\phi)$ or $\phi'(\tau_{j+1}\,-) = -\infty$. Then $\dom(\phi) \subset (-\infty,\tau_{j+1}]$, and we may define $\tilde{\phi}$ on $[\tau_j,\tau_{j+1}] \cap \R$ as described in part~(i) of Lemma~\ref{lem:linearize.phi.2}, where $a = \tau_j$ and $c = \tau_{j+1}$.

Suppose that $j < m$ and $\tau_{j+1} \in \dom(\phi)$ but either $\tau_j \not\in \dom(\phi)$ or $\phi'(\tau_j\,+) = \infty$. Then we may apply part~(i) Lemma~\ref{lem:linearize.phi.2} with $\phi(- \, \cdot)$ in place of $\phi$ and $a = -\tau_{j+1}$, $c = - \tau_j$.

If $\tau_j, \tau_{j+1} \in \dom(\phi)$ and both derivatives $\phi'(\tau_j\,+)$, $\phi'(\tau_{j+1}\,-)$ exist in $\R$, we may apply part~(ii) of Lemma~\ref{lem:linearize.phi.2} to define $\tilde{\phi}$ on $[\tau_j,\tau_{j+1}]$ such that it is piecewise linear with at most one change of slope in the interior while
\[
	\tilde{\phi}(\tau_j) = \phi(\tau_j) , \quad
	\tilde{\phi}'(\tau_j\,+) \le \phi'(\tau_j\,+) , \quad
	\tilde{\phi}(\tau_{j+1}) = \phi(\tau_{j+1}) \quad\text{and}\quad
	\tilde{\phi}'(\tau_{j+1}\,-) \ge \phi'(\tau_{j+1}\,-) .
\]

To complete the proof of property~(i), we have to modify $\tilde{\phi}$ in two cases: First suppose that $[\tau_{m-1},\infty) \subset\dom(\tilde{\phi})$. Then $\tilde{\phi}$ is linear on $[\tau_m,\infty)$, but it may have one change of slope within $(\tau_{m-1},\tau_m)$. If yes, we may redefine it on $[\tau_{m-1},\tau_m]$ to be linear such that $\tilde{\phi}(\tau_{m-1})$ remains the same, $\tilde{\phi}'(\tau_{m-1}\,+)$ becomes smaller and $\tilde{\phi}(\tau_m)$ becomes larger. Thereafter we may decrease the slope of $\tilde{\phi}$ on $(\tau_m,\infty)$ such that the original value of $\int_{\tau_m}^\infty e^{\tilde{\phi}(t)} \, dt$ is restored.

Secondly, suppose that $(\tau_j,\tau_{j+1}) \subset \R \setminus \bigcup_{i=1}^n [L_i,R_i]$ and $\tau_j, \tau_{j+1} \in \dom(\tilde{\phi})$ for some $j \le m-2$. If $\tilde{\phi}$ is not linear on $[\tau_j,\tau_{j+1}]$, then we define
\begin{align*}
	s \
	&:= \ \frac{\tilde{\phi}(\tau_{j+1}) - \tilde{\phi}(\tau_j)}{\tau_{j+1} - \tau_j} , \\
	\delta \
	&:= \ \int_{\tau_j}^{\tau_{j+1}} e^{\tilde{\phi}(t)} \, dt
		- \int_{\tau_j}^{\tau_{j+1}} e^{\tilde{\phi}(\tau_j) + s (t - \tau_j)} \, dt
\end{align*}
and
\[
	\check{\phi}(t)
	\ := \ \log \Bigl( \frac{1 - q}{1 - q - \delta} \Bigr)
		+ \begin{cases}
			\tilde{\phi}(t)
				& \text{if} \ t \in (-\infty,\tau_j] \cup [\tau_{j+1},\infty) , \\
			\tilde{\phi}(\tau_j) + s (t - \tau_j)
				& \text{if} \ t \in [\tau_j,\tau_{j+1}] .
		\end{cases}
\]
Then $(\check{\phi},q) \in \Theta$, too, and $\check{\phi} = \tilde{\phi} + \log((1 - q)/(1 - q - \delta))$ on $\bigcup_{i=1}^n [L_i,R_i] \cap \R$. Hence $\ell(\check{\phi},q) > \ell(\tilde{\phi},q)$, so we may replace $\tilde{\phi}$ with $\check{\phi}$.

Now we modify $\tilde{\phi}$ further, if necessary, such that it satisfies property~(ii) as well. If $\tilde{\phi}$ is not linear on $[\tau_j,\tau_\ell] \cap \R$, we may redefine it on $[\tau_j,\tau_\ell]$ as described in part~(i) of Lemma~\ref{lem:linearize.phi.2}. Then the inequalities in part~(iii) of Lemma~\ref{lem:linearize.phi.2} and our assumptions on the $\tau_k$, $j < k < \ell$, imply that this modification yields a larger value of $\ell(\tilde{\phi},q)$. Similarly one may enforce property~(iii).

Finally, if $2 \le j \le m-1$ such that $[\tau_{j-1},\tau_{j+1}] \subset \dom(\tilde{\phi})$, we may redefine $\tilde{\phi}$ on $[\tau_{j-1},\tau_{j+1}]$ as described in Lemma~\ref{lem:linearize.phi.1}, where $(a,b,c) = (\tau_{j-1},\tau_j,\tau_{j+1})$, without decreasing $\ell(\tilde{\phi},q)$. This proves property~(iv).
\end{proof}


\begin{proof}[\bf Proof of Lemma~\ref{lem:DSS}]
In case of a probability measure $M$, Lemma~\ref{lem:DSS} is just a special case of Theorem~2.2 of \cite{Duembgen_etal_2011}. If $\gamma := M(\R) \ne 1$, then $\tilde{M} := \gamma^{-1} M$ defines a probability measure on $\R$, and $S(M) = \bigl\{ x \in \R : 0 < \tilde{M}((-\infty,x]) < 1 \bigr\}$. Moreover, for any function $\phi \in \Phi$ and $\tilde{\phi} := \phi - \log \gamma$,
\[
	\int \tilde{\phi} \, d\tilde{M} - \int e^{\tilde{\phi}(x)} \, dx
	\ = \ \gamma^{-1} \Bigl( \int \phi \, dM - \int e^{\phi(x)} \, dx \Bigr)
		- \log \gamma .
\]
Consequently, $\phi \in \Phi$ maximizes $\int \phi \, dM - \int e^{\phi(x)} \, dx$ over $\Phi$ if, and only if, $\tilde{\phi} = \phi - \log \gamma$ maximizes $\int \tilde{\phi} \, d\tilde{M} - \int e^{\tilde{\phi}(x)} \, dx$. But $\dom(\tilde{\phi}) = \dom(\phi)$, and in case of $\tilde{\phi}$ being optimal, $1 = \int e^{\tilde{\phi}(x)} \, dx = \gamma^{-1} \int e^{\phi(x)} \, dx$, so $\int e^{\phi(x)} \, dx = M(\R)$.
\end{proof}


\begin{proof}[\bf Proof of Corollary~\ref{cor:consistency}]
For fixed $\eps > 0$ and real numbers $x \le y$, monotonicity of $F$ and $\hat{F}_n$ implies that
\[
	|\hat{F}_n - F| \ \ge \ \eps \ \text{on} \ [x,y]
	\quad\text{whenever}\quad
	\hat{F}_n(x) \ge F(y) + \eps \ \text{or} \ 
	F(x) \ge \hat{F}_n(y) + \eps .
\]
On the other hand, for $1 \le i \le n$ and $1 \le j \le M_{ni}$,
\[
	\sum_{k=1}^{M_{ni}+1}
		\bigl| (\hat{P}_n - P)(\TT_{n,i,k}) \bigr|
	\ \ge \ \bigl| (\hat{F}_n - F)(T_{n,i,j}) \bigr| .
\]
Thus $\hat{F}_n(x) \ge F(y) + \eps$ or $F(x) \ge \hat{F}_n(y) + \eps$ implies that
\[
	\frac{1}{n} \sum_{i=1}^n \sum_{k=1}^{M_{ni}+1}
		\bigl| (\hat{P}_n - P)(\TT_{n,i,k}) \bigr|
	\ \ge \ H_n([x,y]) \eps .
\]
If $\liminf_{n \to \infty} H_n([x,y]) > 0$, the latter inequality occurs by Theorem~\ref{thm:consistency} with asymptotic probability zero, which proves part~(i).

Part~(ii) is a simple consequence of part~(i) and continuity of $F$. For fixed $x \in (a,b)$ and $\delta > 0$, we know from part~(i) and the assumptions in part~(ii) that $\hat{F}_n(x) \le F(x+\delta) + o_p(1)$ and $\hat{F}_n(x) \ge F(x - \delta) + o_p(1)$. Since $F(x \pm \delta) \to F(x)$ as $\delta \downarrow 0$, this shows that $\hat{F}_n(x) \to_p F(x)$.
\end{proof}

Theorem~\ref{thm:consistency2} is closely related to results of \cite{Schuhmacher_Huesler_Duembgen_2009a} in the context of log-concave probability densities on $\R^d$. For the reader's convenience a self-contained proof is given here. We start with some elementary inequalities:

\begin{Lemma}
\label{lem:intervals.points}
Let $x_0 < x_1 < x_2 < x_3$ be real numbers such that
\[
	c_j \ := \ \log \frac{P([x_j,x_{j+1}])}{x_{j+1}-x_j} \ \in \ \R
\]
for $j = 0,1,2$. Then
\[
	\min\{c_0,c_2\} \ \le \ \phi \ \le \ 2c_1 - \min\{c_0,c_2\}
	\quad\text{on} \ [x_1, x_2] .
\]
\end{Lemma}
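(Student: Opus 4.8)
The plan is to begin by rewriting the three constants $c_j$ as averages of the density. Since $x_0<x_1<x_2<x_3$ are real and the only atom of $P=P_{\phi,q}$ sits at $+\infty$, we have $P([x_j,x_{j+1}]) = \int_{x_j}^{x_{j+1}} e^{\phi(t)}\,dt$, so that $e^{c_j}$ is exactly the mean value of $e^\phi$ over $[x_j,x_{j+1}]$; the hypothesis $c_j\in\R$ then guarantees that each of these intervals meets $\dom(\phi)$. Write $c:=\min\{c_0,c_2\}$ throughout. I would treat the two inequalities separately.

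For the lower bound I first locate anchor points in the two outer intervals. Because the mean of $e^\phi$ over $[x_0,x_1]$ equals $e^{c_0}$, the supremum of $e^\phi$ over that interval is at least $e^{c_0}$; upper semicontinuity of $\phi$ on the compact interval then yields a point $u\in[x_0,x_1]$ with $\phi(u)\ge c_0\ge c$. Symmetrically there is $w\in[x_2,x_3]$ with $\phi(w)\ge c_2\ge c$. Since $u\le x_1\le x_2\le w$, every $x\in[x_1,x_2]$ lies in $[u,w]$, and concavity of $\phi$ gives $\phi(x)\ge\min\{\phi(u),\phi(w)\}\ge c$ (the graph lies above the chord joining $(u,\phi(u))$ and $(w,\phi(w))$, whose values are at least the smaller endpoint). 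This proves the left inequality and, as a by-product, records that $\phi(x_1),\phi(x_2)\ge c$.

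For the upper bound I argue pointwise. Fix $x\in[x_1,x_2]$ and set $h:=\phi(x)$. Splitting $[x_1,x_2]$ at $x$ and using that a concave function lies above its chords, I bound $\phi$ below on $[x_1,x]$ by the segment joining $(x_1,\phi(x_1))$ to $(x,h)$ and on $[x,x_2]$ by the segment joining $(x,h)$ to $(x_2,\phi(x_2))$. Integrating $e^\phi$ and applying Jensen's inequality to each linear piece (the mean of $e^{(1-s)a+sb}$ over $s\in[0,1]$ is at least $e^{(a+b)/2}$) yields
\[
	(x_2-x_1)e^{c_1} \ = \ \int_{x_1}^{x_2} e^{\phi(t)}\,dt
	\ \ge \ (x-x_1)e^{(h+\phi(x_1))/2} + (x_2-x)e^{(h+\phi(x_2))/2}
	\ \ge \ (x_2-x_1)e^{(h+c)/2} ,
\]
where the final step uses $\phi(x_1),\phi(x_2)\ge c$ from the lower-bound part. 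Hence $c_1\ge(h+c)/2$, i.e.\ $\phi(x)=h\le 2c_1-c$, which is the claimed right inequality.

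The chord estimates and the Jensen step are routine; the only technical point is the extraction of the anchor points $u,w$, where one must invoke upper semicontinuity (together with $c_j\in\R$) to ensure that the maximum of $\phi$ on each outer interval is attained and is at least $c_j$. The conceptual heart of the argument is the upper bound: concavity alone cannot control $\max\phi$ from the mean of $e^\phi$, but the extra information $\phi(x_1),\phi(x_2)\ge c$ combined with the convexity of $\exp$ converts the single constraint $e^{c_1}$ into the clean bound $c_1\ge(h+c)/2$.
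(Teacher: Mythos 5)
Your proof is correct and takes essentially the same route as the paper's: the lower bound via anchor points in the two outer intervals (the paper uses maximizers of $\phi$ there, which by upper semicontinuity is the same device) followed by the concavity chord argument, and the upper bound by splitting $[x_1,x_2]$ at $x$ and applying the chord--Jensen estimate $P([x',x''])/(x''-x') \ge \sqrt{f(x')f(x'')}$ together with the already established bounds $\phi(x_1),\phi(x_2)\ge\min\{c_0,c_2\}$.
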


\begin{proof}[\bf Proof of Lemma~\ref{lem:intervals.points}]
For $j = 0,2$ let $z_j$ be a maximizer of $\phi$ over $[x_j,x_{j+1}]$. By concavity, the function $\phi$ is bounded from below by $\min\{\phi(z_0),\phi(z_2)\} \ge \min\{c_0,c_2\}$ on $[z_0, z_2] \supset [x_1,x_2]$. On the other hand, note first that for real numbers $x' < x''$,
\[
	\frac{P([x',x''])}{x'' - x'}
	\ \ge \ \sqrt{f(x') f(x'')} ,
\]
see \eqref{eq:helper1b}. Thus for $x \in [x_1,x_2]$,
\begin{align*}
	c_1 \
	&= \ \log \frac{ P([x_1, x]) + P([x,x_2])}{x_2 - x_1} \\
	&\ge \ \log \Bigl( \frac{x - x_1}{x_2 - x_1} \sqrt{f(x_1)f(x)}
		+ \frac{x_2 - x}{x_2 - x_1} \sqrt{f(x)f(x_2)} \Bigr) \\
	&\ge \ \log \sqrt{ f(x) \min\{f(x_1),f(x_2)\} } \\
	&\ge \ \bigl( \min\{c_0,c_2\} + \phi(x) \bigr) / 2 ,
\end{align*}
whence $\phi(x) \le 2 c_1 - \min\{c_0,c_2\}$.
\end{proof}

\begin{proof}[\bf Proof of Theorem~\ref{thm:consistency2}]
We first prove the assertions about the density estimator $\hat{f}_n$. Let $a_o$ and $b_o$ denote the infimum and supremum of $\dom(\phi) \cap (a,b)$, respectively. For any $x \in (a_o,b_o)$ and $\eps > 0$ there exists a $\delta = \delta(x,\eps) > 0$ such that $[x \pm 2 \delta] \subset (a_o,b_o)$ and $\bigl| \phi(y') - \phi(y'') \bigr| \le \eps$ for all $y',y'' \in [x \pm 2\delta]$. Now we apply Lemma~\ref{lem:intervals.points} to $x_0 := x - 2\delta$, $x_1 := x - \delta$, $x_2 := x + \delta$ and $x_3 := x + 2\delta$. One can easily verify that
\[
	|c_j - c_k| \ \le \ \eps
	\quad\text{and}\quad
	|\phi(y) - c_j| \ \le \ \eps
\]
for $j,k = 0,1,2$ and $y \in [x \pm 2\delta]$. Moreover, defining $\hat{c}_{nj}$ as $c_j$ with $\hat{P}_n$ in place of $P$, our assumption on $\hat{F}_n$ implies that with asymptotic probability $1$,
\[
	|\hat{c}_{nj} - c_j| \ \le \ \eps
	\quad\text{for} \ j = 0,1,2 .
\]
In this case, for any $y \in [x \pm \delta]$,
\begin{align*}
	\hat{\phi}_n(y) - \phi(y) \
	&\ge \ \min\{\hat{c}_{n0}, \hat{c}_{n2}\} - \phi(y)
		\ \ge \ \min\{c_0, c_2\} - \phi(y) - \eps
		\ \ge \ - 2 \eps , \\
	\hat{\phi}_n(y) - \phi(y) \
	&\le \ 2 \hat{c}_{n1} - \min\{\hat{c}_{n0}, \hat{c}_{n2}\} - \phi(y)
		\ \le \ 2 c_1 - \min\{c_0, c_2\} - \phi(y) + 3\eps
		\ \le \ 5 \eps .
\end{align*}
Consequently, for any $x \in (a_o,b_o)$ and $\eps > 0$ there exists a $\delta = \delta(x,\eps) > 0$ such that
\[
	\sup_{y \in [x \pm \delta]} \bigl| \hat{\phi}_n(y) - \phi(y) \bigr|
	\ \le \ 5 \eps
\]
with asymptotic probability $1$. These considerations prove that
\begin{equation}
\label{eq:fnhat.0}
	\sup_{x \in K_o} \bigl| \hat{f}_n(x) - f(x) \bigr|
	\ \to_p \ 0
	\quad\text{for any compact} \ K_o \subset (a_o,b_o) .
\end{equation}

Now we fix some point $x_o \in (a_o,b_o)$ and analyze $|\hat{f}_n - f|$ on $[x_o,b] \cap \R$. To this end we distinguish two different cases:

\textbf{Case~1: $b_o = b = \infty$ or $f(b_o) = 0$.}
We fix a point $b_* \in (x_o, b_o)$ such that $\phi(x_o) > \phi(b_*)$. It follows from \eqref{eq:fnhat.0} that $|\hat{f}_n - f| \to_p 0$ uniformly on $[x_o,b_*]$. Whenever $\hat{f}_n(b_*), \hat{f}_n(x_o) > 0$, it follows from concavity of $\phi$ and $\hat{\phi}_n$ that for $x \ge b_*$,
\[
	\max\{f(x),\hat{f}_n(x)\} \ \le \ \max\{f(b_*),\hat{f}_n(b_*)\}
		\exp \bigl( - \hat{\beta}_n(b_*) (x - b_*) \bigr)
\]
with
\begin{align*}
	\hat{\beta}_n(b_*) \
	&:= \ \frac{\min\{\phi(x_o),\hat{\phi}_n(x_o)\}
			- \max\{\phi(b_*),\hat{\phi}_n(b_*)\}}
		{b_* - x_o} \\
	&\to_p \ \frac{\phi(x_o) - \phi(b_*)}{b_* - x_o}
		\ =: \ \beta(b_*) \ > \ 0 .
\end{align*}
Consequently,
\begin{align*}
	\sup_{x \ge x_o} \bigl| \hat{f}_n(x) - f(x) \bigr| \
	&\le \ \sup_{x \ge b_*} \bigl| \hat{f}_n(x) - f(x) \bigr| + o_p(1)
		\ \le \ f(b_*) + o_p(1) , \\
	\int_{x_o}^\infty \bigl| \hat{f}_n(x) - f(x) \bigr| \, dx \
	&\le \ \int_{b_*}^\infty \bigl| \hat{f}_n(y) - f(y) \bigr| \, dy + o_p(1)
		\ \le \ f(b_*) / \beta(b_*) + o_p(1) .
\end{align*}
Since $\beta(b_*)$ is non-decreasing in $b_* > x_o$ and $\lim_{b_* \to \infty} f(b_*) = 0$, this shows that
\[
	\sup_{x \ge x_o} \bigl| \hat{f}_n(x) - f(x) \bigr|
	\ \to_p \ 0
	\quad\text{and}\quad
	\int_{x_o}^\infty \bigl| \hat{f}_n(x) - f(x) \bigr| \, dx
	\ \to_p \ 0 .
\]

\textbf{Case~2: $b_o < \infty$ and $f(b_o) > 0$.}
Here we fix an arbitrary point $b_* \in (x_o,b_o)$. Again, $|\hat{f}_n - f| \to_p 0$ uniformly on $[x_o,b_*]$. Moreover, by concavity of $\hat{\phi}_n$ and $\phi$, for any $x \in [b_*,b_o]$,
\begin{align*}
	\hat{f}_n(x) \
	&\le \ \hat{f}_n(b_*)
		\exp \Bigl( \bigl( \hat{\phi}_n(b_*) - \hat{\phi}_n(x_o) \bigr)
			(x - b_*)/(b_* - x_o) \Bigr) \\
	&\le \ \hat{f}_n(b_*)
		\exp \Bigl( \bigl( \hat{\phi}_n(b_*) - \hat{\phi}_n(x_o) \bigr)^+
			(b_o - b_*)/(b_* - x_o) \Bigr) \\
	&= \ \hat{f}_n(b_*)
		\max \bigl\{ \hat{f}_n(b_*)/\hat{f}_n(x_o), 1 \bigr\}^{(b_o-b_*)/(b_*-x_o)} \\
	&\to_p \ \overline{h}(b_*,b_o) \ := \ f(b_*)
		\max \bigl\{ f(b_*)/f(x_o), 1 \bigr\}^{(b_o-b_*)/(b_*-x_o)} ,
\end{align*}
and
\[
	\underline{h}(b_*,b_o) \ := \ \min \bigl\{ f(b_*), f(b_o) \bigr\}
	\ \le \ f(x)
	\ \le \ \overline{h}(b_*,b_o) .
\]
Thus
\begin{align*}
	\max_{x \in [x_o,b_o]} \bigl( \hat{f}_n(x) - f(x) \bigr)^+ \
	&= \ \max_{x \in [b_*,b_o]} \bigl( \hat{f}_n(x) - f(x) \bigr)^+ + o_p(1) \\
	&\le \ \overline{h}(b_*,b_o) - \underline{h}(b_*,b_o) + o_p(1) , \\
	\int_{x_o}^{b_o} \bigl| \hat{f}_n(x) - f(x) \bigr| \, dx \
	&\le \ \int_{b_*}^{b_o} \bigl| \hat{f}_n(x) - f(x) \bigr| \, dx + o_p(1) \\
	&\le \ (b_o - b_*) \overline{h}(b_*,b_o) + o_p(1) .
\end{align*}
Since $\underline{h}(b_*,b_o), \overline{h}(b_*,b_o) \to f(b_o)$ as $b_* \uparrow b_o$, these considerations show that for any fixed $\delta > 0$,
\[
	\max_{x \in [x_o,b_o]} \bigl( \hat{f}_n(x) - f(x) \bigr)^+
	\ \to_p \ 0 ,
	\quad
	\max_{x \in [x_o,b_o-\delta]} \bigl| \hat{f}_n(x) - f(x) \bigr|
	\ \to_p \ 0
\]
and
\[
	\int_{x_o}^{b_o} \bigl| \hat{f}_n(x) - f(x) \bigr| \, dx
	\ \to_p \ 0 .
\]
Suppose that in addition $b_o < b$, so $f \equiv 0$ on $(b_o,\infty)$. Then for fixed $\delta > 0$,
\[
	\sup_{x \in (b_o,b_o+\delta]} \hat{f}_n(y)
	\ \le \ \overline{h}(b_o,b_o + \delta) + o_p(1)
\]
with $\overline{h}(b_o,b_o + \delta) = f(b_o) \max \bigl\{ f(b_o)/f(x_o), 1 \bigr\}^{\delta/(b_o-x_o)}$, and
\[
	\hat{\pi}_n := \frac{\hat{P}_n([b_o,b_o + \delta])}{\delta}
	\ \to_p \ 0 .
\]
In particular, $\hat{\pi}_n < \hat{f}_n(x_o)$ with asymptotic probability one, and in that case we may conclude from concavity of $\hat{\phi}_n$ that $\hat{f}_n(b_o + \delta) \le \hat{\pi}_n$ and $\hat{f}_n(b_o + \delta + s) \le \hat{\pi}_n \bigl( \hat{\pi}_n/\hat{f}_n(x_o) \bigr)^{s/(b_o + \delta - x_o)}$ for any $s > 0$, so
\[
	\sup_{x \ge b_o + \delta} \hat{f}_n(x) \ \to_p \ 0
	\quad\text{and}\quad
	\int_{b_o + \delta}^\infty \hat{f}_n(x) \, dx \ \to_p \ 0 .
\]
Moreover,
\[
	\int_{b_o}^{b_o+\delta} \hat{f}_n(x) \, dx
	\ \le \ \delta \overline{h}(b_o,b_o + \delta) + o_p(1) .
\]
Since $\overline{h}(b_o,b_o + \delta) \to f(b_o)$ as $\delta \downarrow 0$, we may conclude that even
\[
	\int_{b_o}^\infty \hat{f}_n(x) \, dx
	\ \to_p \ 0 .
\]

Analogous arguments apply to $|\hat{f}_n - f|$ on the interval $[a,x_o] \cap \R$, and this yields the three claims about $\hat{f}_n - f$. Since
\[
	\bigl| \hat{F}_n(x) - F(x) \bigr|
	\ \le \ \bigl| \hat{F}_n(x_o) - F(x_o) \bigr|
		+ \int_a^b \bigl| \hat{f}_n(t) - f(t) \bigr| \, dt
\]
for any fixed $x_o \in (a_o,b_o)$ and arbitrary $x \in [a,b] \cap \R$, we also see that the supremum of $|\hat{F}_n - F|$ over $[a,b]$ converges to zero in probability.

It remains to prove the additional claims about $\hat{q}_n$. If $b = \infty$, then $\hat{q}_n = 1 - \hat{F}_n(\infty\,-) = 1 - F(\infty\,-) + o_p(1) = q + o_p(1)$. In case of $b < \infty$ we know that $\hat{F}_n(b) = F(b) + o_p(1)$, so $\hat{q}_n \le 1 - \hat{F}_n(b) = 1 - F(b) + o_p(1)$. We also know that $\hat{f}_n(b) \le f(b) + o_p(1)$ and $\hat{f}_n(x_o) \to_p f(x_o) > 0$ for any fixed $x_o \in (a_o,b_o)$. In case of $\hat{f}_n(x_o) > 0$ it follows from concavity of $\hat{\phi}_n = \log \hat{f}_n$ that $\hat{f}_n(b + s) \le \hat{f}_n(b) \bigl( \hat{f}_n(b)/\hat{f}_n(x_o) \bigr)^{s/(b - x_o)}$ for $s > 0$, and
\[
	\hat{q}_n
	\ \ge \ 1 - \hat{F}_n(b) - \hat{\eps}_n(x_o)
	\ = \ 1 - F(b) + o_p(1) - \hat{\eps}_n(x_o)
\]
with
\[
	\hat{\eps}_n(x_o) \ := \ \hat{f}_n(b)
		\int_0^\infty
			\bigl( \hat{f}_n(b)/\hat{f}_n(x_o) \bigr)^{s/(b - x_o)} \, ds .
\]
In case of $f(b) = 0$, one can easily verify that $\hat{\eps}_n(x_o) \to_p 0$. In case of $f(b) > 0 > \phi'(b\,-)$, we may choose $x_o$ such that $f(x_o) > f(b)$. For any such $x_o$,
\[
	\hat{\eps}_n(x_o) \ \to_p \ f(b) \Big/ \frac{\phi(x_o) - \phi(b)}{b - x_o} ,
\]
and the right hand side converges to $f(b) / |\phi'(b\,-)|$ as $x_o \uparrow b$.
\end{proof}

\paragraph{Acknowledgement.}
Constructive comments of the associate editor and three referees are gratefully acknowledged.

\bibliography{LogConc}
\bibliographystyle{apalike}

\end{document}